\documentclass[11pt, letterpaper]{article}
\usepackage{standalone}
\usepackage{authblk}
\usepackage{etoolbox}
\usepackage{tabularray} % The only package you need for advanced tables

\usepackage{nicefrac}
\usepackage[letterpaper,margin=1in]{geometry}
\usepackage{amsthm}
\usepackage{amsmath}
\usepackage{amssymb}
\usepackage{thmtools}
\usepackage{thm-restate}
\usepackage{appendix}
\usepackage{mathtools}
\usepackage{mdframed}
\usepackage{setspace}
\usepackage{tikz}
\usetikzlibrary{arrows.meta, positioning, calc}
\usepackage{algorithm}
\usepackage[linesnumbered,vlined,algo2e]{algorithm2e}
\SetKwInput{KwParams}{Parameters}
\SetAlgoSkip{bigskip}
\SetAlgoInsideSkip{smallskip}

\usepackage{placeins}

\usepackage[table]{xcolor}

\definecolor{linkcol}{rgb}{0,0,0.38}
\definecolor{citecol}{rgb}{0.8,0,0}
\definecolor{urlcol}{rgb}{0.1,0.35,0}

\usepackage[pdfpagelabels,bookmarks=false,hyperfootnotes=false,hypertexnames=false]{hyperref}
\hypersetup{colorlinks, linkcolor=linkcol, citecolor=citecol, urlcolor=urlcol}

\allowdisplaybreaks

\DeclareFontFamily{U}{BOONDOX-calo}{\skewchar\font=45 }
\DeclareFontShape{U}{BOONDOX-calo}{m}{n}{
  <-> s*[1.05] BOONDOX-r-calo}{}
\DeclareFontShape{U}{BOONDOX-calo}{b}{n}{
  <-> s*[1.05] BOONDOX-b-calo}{}
\DeclareMathAlphabet{\link}{U}{BOONDOX-calo}{m}{n}
\DeclareMathAlphabet{\blink}{U}{BOONDOX-calo}{b}{n}

\usepackage[utf8]{inputenc}

\usepackage[
backend=biber,
style=alphabetic,
citestyle=alphabetic,
maxalphanames=4,
maxcitenames=99,
mincitenames=98,
maxbibnames=99,
giveninits=true,
]{biblatex}

\usepackage[nameinlink]{cleveref}

\usepackage{lmodern}

\usepackage{bbm}
\usepackage{thm-restate}

\newtheoremstyle{light} % name
    {\topsep}                    % Space above
    {\topsep}                    % Space below
    {\itshape}                   % Body font
    {}                           % Indent amount
    {\scshape}                   % Theorem head font
    {.}                          % Punctuation after theorem head
    {.5em}                       % Space after theorem head
    {}  % Theorem head spec (can be left empty, meaning ‘normal’)

\newtheorem{theorem}{Theorem}[section]
\newtheorem{lemma}[theorem]{Lemma}

\newtheorem{definition}[theorem]{Definition}

\newtheorem{corollary}[theorem]{Corollary}

\newtheorem{observation}[theorem]{Observation}

\theoremstyle{light}

\makeatletter
\if@cref@capitalise
\crefname{claiminproof}{Claim}{Claims}
\else
\crefname{claiminproof}{claim}{claims}
\fi

\if@cref@capitalise
\crefname{algocf}{Algorithm}{Algorithms}
\else
\crefname{algocf}{algorithm}{algorithms}
\fi

\if@cref@capitalise
\crefname{conjecture}{Conjecture}{Conjectures}
\else
\crefname{conjecture}{conjecture}{conjectures}
\fi

\if@cref@capitalise
\crefname{thm}{Theorem}{Theorems}
\else
\crefname{thm}{theorem}{theorems}
\fi

\if@cref@capitalise
\crefname{lem}{Lemma}{Lemmas}
\else
\crefname{lem}{lemma}{lemmas}
\fi

\makeatother

\usepackage{url}
\usepackage{array}

\usepackage{setspace}

\usepackage{wrapfig}

\makeatletter
\newcommand{\labeltarget}[1]{\Hy@raisedlink{\hypertarget{#1}{}}}
\makeatother

\usepackage{upgreek}

\usepackage{complexity}

\usepackage[inline]{enumitem}
\usepackage{moreenum}

\usepackage[super]{nth}

\usepackage{bm}

\usepackage{xspace}

\usepackage{ifthen}

\usepackage[immediate]{silence}
\usepackage{sectsty}
\DeactivateWarningFilters[tmp]
\allsectionsfont{\boldmath}

\setlist[enumerate]{nosep,topsep=0.1em}
\setlist[enumerate,1]{label=(\roman*), leftmargin=2.2em}
\setlist[itemize]{nosep,topsep=0.3em}

\usepackage[textsize=footnotesize, color=blue!30!white]{todonotes}
\usepackage{xfrac}

\usepackage{tikz}
\usetikzlibrary{calc}
\usetikzlibrary{math}
\usetikzlibrary{shapes.geometric}
\usetikzlibrary{arrows}
\usetikzlibrary{decorations.pathreplacing, decorations.pathmorphing}

\usepackage{tcolorbox}
\tcbuselibrary{skins}

\usepackage{float}
\usepackage{graphicx}
\graphicspath{{../graphics/}}
\makeatletter
\newcommand\appendtographicspath[1]{%
  \g@addto@macro\Ginput@path{#1}%
}
\makeatother
\usepackage[margin=10pt,font=small,labelfont=bf,skip=-5pt]{caption}
\usepackage{subcaption}

\usepackage{mdframed}

\let\truehypersetup\hypersetup
\renewcommand\hypersetup[1]{}
\usepackage{bigfoot}
\let\hypersetup\truehypersetup
\csundef{poly}

\newcommand\OPT{\ensuremath{\mathrm{OPT}}}
\renewcommand{\epsilon}{\varepsilon}

\newcommand{\ALG}{\mathrm{ALG}}
\newcommand{\singleref}{\textsc{single-ref}}

\definecolor{green}{rgb}{0.4,0.85,0.6}

\newbool{sodasubm}
\setbool{sodasubm}{false}

\title{Knapsack Secretary is not $1/e$-Competitive}

\ifbool{sodasubm}{
  \author{Authors omitted}
  \date{}
}{
  \author{Marius Garbea\thanks{\href{mailto:mgarbea@drexel.edu}{mgarbea@drexel.edu}.}}
  \author{Rishi Patel\thanks{\href{mailto:riship@drexel.edu}{riship@drexel.edu}.}}
  \author{Emmanouil Pountourakis\thanks{\href{mailto:manolis@drexel.edu}{manolis@drexel.edu}.\\ The authors were partially supported by NSF award CCF 2218813.}}
  \affil{Drexel University, Department of Computer Science}
  \date{}
}

\begin{document}

\maketitle
\thispagestyle{empty}
\pagestyle{plain}

\begin{abstract}
We prove that no algorithm for the knapsack secretary problem can be $1/e$-competitive. The knapsack secretary problem was first introduced by Babaioff, Immorlica, Kempe, and Kleinberg (2007). 
There have been many improvements to the achievable competitive ratio since then, but the $1/e$ impossibility barrier has remained unchanged. Many combinatorial variants of the secretary problem, including knapsack secretary, inherit the $1/e$ impossibility by embedding the single-choice problem as a special case. We construct a family of hard instances for the $1$-$B$ knapsack secretary problem, which is a special case of the general knapsack secretary problem, to improve the existing impossibility result. We show in this special case that the competitive ratio is at most $0.36437 < \frac{1}{e} - 0.0035$. Our construction is similar to the one used by Abels, Ladewig, Schewior, and Stinzend\"orfer (2022), for which they show an impossibility of $1/(1+e)$ for ordinal algorithms,  where only the relative ranks of the items are known. Our work resolves an open question of theirs by showing that $1/e$ cannot be achieved even in the cardinal case of the $1$-$B$ knapsack secretary problem. We complement our impossibility result with a simple algorithm for $1$-$B$ knapsack secretary that is $(1/5.10-o(1))$-competitive for every fixed $B\ge 2$. This improves the guarantee obtained by applying general-purpose random-order knapsack algorithms to this special case.
\end{abstract}

\newpage
\setcounter{page}{1} 

\section{Introduction}

The secretary problem is one of the canonical models of online decision-making under uncertainty. In its classical form, there are $n$ applicants with distinct values, arriving in a uniformly random order. Upon each arrival, the decision-maker observes the applicant and must irrevocably decide whether to accept or reject them. The goal is to select the single best applicant. The optimal strategy, first analyzed by Dynkin~\cite{dynkin1963optimum}, is the familiar threshold rule: observe an initial $1/e$ fraction of the sequence, then accept the first subsequent applicant that is better than all previously seen applicants. As $n \to \infty$, this strategy succeeds with probability $1/e$, and no algorithm can do better.

The secretary problem has since become a central model in online algorithms, mechanism design, and algorithmic game theory. It is a particularly clean example of the random-order model, in which the input is adversarially chosen but then permuted uniformly at random before being presented to the algorithm. This model lies between worst-case online algorithms and fully stochastic models: it assumes no prior distribution over item values, but it rules out adversarial arrival orders. See, for example, Roughgarden~\cite{roughgarden2021beyond} for a broader discussion of random-order analysis. Secretary-type models have found applications in online auctions~\cite{hajiaghayi2004adaptive, kleinberg2005multiple}, ad allocation and online matching~\cite{mehta2010online}, and posted-price mechanism design~\cite{babaioff2007matroids, babaioff2007knapsack}.

Motivated by these applications, a large body of work studies combinatorial generalizations of the secretary problem. In such problems, the algorithm is no longer asked to select a single item, but rather a feasible set of items satisfying some combinatorial constraint. Examples include the matroid secretary problem~\cite{babaioff2007matroids}, secretary matching problems on graphs and hypergraphs~\cite{korula2009algorithms, kesselheim2013optimal, tomer2022edgearrival}, and the knapsack secretary problem~\cite{babaioff2007knapsack}. In the knapsack secretary problem, each arriving item has a value and a size, and the algorithm may accept items whose total size is at most a given capacity. The objective is to maximize the total accepted value.

A striking feature of many combinatorial secretary problems is that their strongest known impossibility often remains exactly $1/e$. This threshold is inherited from the classical secretary problem by considering instances in which the combinatorial constraint effectively allows the algorithm to choose only one relevant item. Such embeddings show that combinatorial secretary problems are at least as hard as the classical secretary problem, but they do not explain whether the combinatorial structure itself creates additional difficulty. In particular, for knapsack secretary, the traditional $1/e$ impossibility comes from a single large item competing with all other choices being irrelevant.

Breaking this $1/e$ barrier is difficult because knapsack secretary is a cardinal problem: the algorithm observes the exact numerical values and sizes of the arriving items. Cardinal information may allow the algorithm to distinguish hard cases that would be indistinguishable to an ordinal algorithm, i.e., an algorithm that only sees relative rankings among observed items. Indeed, Abels, Ladewig, Schewior, and Stinzend\"orfer~\cite{abels2022knapsack} proved an impossibility result showing that no ordinal algorithm for the knapsack secretary problem can achieve a better competitive ratio than $1/(1+e)$. However, their construction does not imply an impossibility for cardinal algorithms: on their hard instances, the numerical values themselves reveal enough information for a cardinal algorithm to perform much better.

In this paper, we prove that this obstruction can be overcome. We show that the knapsack secretary problem is not $1/e$-competitive, even for cardinal algorithms. More precisely, we construct a family of $1$-$B$ knapsack secretary instances whose optimal competitive ratio is strictly less than $1/e$. In a $1$-$B$ instance, the knapsack has capacity $B$, and each item has size either $1$ or $B$. Thus a feasible solution consists either of a single large item of size $B$, or of up to $B$ small items of size $1$. This is one of the simplest nontrivial knapsack secretary settings, but it already captures a tension absent from the classical secretary problem: the algorithm must decide whether to wait for a potentially valuable large item or start collecting small items before too much capacity-adjusted opportunity is lost.

\subsection{Breaking the $1/e$ barrier}
Our main contribution is a cardinal impossibility result below $1/e$ for the special case of $1$-$B$ knapsack secretary. The hard instances have the following structure: there are $T=B$ arrival positions; some unknown number $N$ of the items are large, each of size $B$, and the remaining $T-N$ items are small, each of size $1$. If $N>0$, then the offline optimum is the maximum-value large item. If $N=0$, then the offline optimum consists of all small items. Thus the algorithm faces a fundamental tradeoff: waiting is necessary to find the best large item when large items exist, but waiting too long is costly when no large item exists.

We show that this tradeoff yields a fraction at most
\[
    0.36437 < \frac{1}{e} - 0.0035
\]
of the best achievable competitive ratio for this family of instances. Therefore $1$-$B$ knapsack secretary, and hence general knapsack secretary, is not $1/e$-competitive.

\paragraph{An unknown-size secretary problem with a deadline.}
One way to view the hard instance is as an embedded secretary problem whose number of candidates is unknown; the candidates in our case are the large items. Since any large item has size $B$, the algorithm can accept at most one of them, and when $N>0$, the objective is exactly to select the best large item. If this were the entire problem, the uncertainty about $N$ would not by itself explain a bound below $1/e$ because $T$ is still known; depending on what $N$ is, an algorithm may do strictly better than the classical $1/e$ benchmark. The $1/e$ benchmark holds asymptotically for when $N \to \infty$.

The crucial additional ingredient is the possibility that $N=0$. In that case, the large-item secretary problem never materializes, and the optimum consists of all small items. Because accepting even one small item leaves insufficient capacity for any future large item, the first small acceptance commits the algorithm to the no-large-item branch. Conversely, every time step spent exploring for a large item permanently loses some of the small-item value available in the $N=0$ instance. We can therefore view the no-large-item case as a time limit on the exploration phase of the embedded secretary problem. Thus the algorithm must simultaneously learn whether there are any large items and, conditional on seeing large items, whether the current large item is good enough relative to future large items. These two learning tasks are coupled through the same time frame, and the small-item option imposes a deadline on both.

The impossibility result comes from balancing two opposing effects. If the decision maker encounters an instance with a large value of $N$, then the secretary component is difficult because many large candidates may compete for the maximum. If on the other hand they encounter an instance with a low value of $N$, then the no-large-item risk becomes more visible, but then identifying the maximum element becomes easier. Our family of instances exploit this tradeoff: some instances make the decision maker wait long enough to preserve value in the large-item secretary, while the degenerate no-large-item instance punishes exactly that waiting; by constructing our family to include instances with all possible $N$ between 0 and $T$, the decision maker cannot optimize for any one specific case.

\paragraph{From cardinal to ordinal.}
A central technical challenge is that our impossibility result must apply to arbitrary cardinal algorithms. The algorithm sees exact item values, so one cannot simply declare that the algorithm is ordinal and then analyze a ranked model.  This is the main distinction between our work and previous ordinal impossibility results such as the $1/(1+e)$ impossibility result of Abels, Ladewig, Schewior, and Stinzend\"orfer~\cite{abels2022knapsack}. In that work, ordinality is an exogenous restriction on the algorithm. Here, algorithms are fully cardinal; ordinality arises endogenously from the instances on which we test them.

We make this possible by constructing the hard instances so that numerical values are extremely far apart. The possible values are chosen from a rapidly growing set, for example powers separated by a large multiplicative factor. This separation ensures that, when a large item is optimal, all lower-ranked values contribute only a vanishing fraction of the optimum. Hence the only cardinally meaningful question is whether the algorithm accepts the top-ranked relevant item, not the exact numerical scale of the lower-ranked items.

We then prove that, on a sufficiently sparse infinite subsequence of such instances, the behavior of any fixed cardinal algorithm can be simulated, up to a vanishing error, by an ordinal algorithm. The reduction uses a Ramsey-theoretic pruning argument in the spirit of related reductions in online selection and prophet-inequality settings~\cite{correa2019prophet, tomer2022edgearrival}. After the reduction, the relevant state of the algorithm is only $(t,k,r)$, where $t$ is the current time, $k$ is the number of large items observed so far, and $r\in\{0,1\}$ indicates whether the current item is a record (i.e., has the highest value) among the large items seen so far. Thus, although the original problem is cardinal, our hard family forces optimal behavior to be essentially ordinal. This is the sense in which the reduction is endogenous: we choose instances on which cardinal information has no asymptotic value.

\paragraph{The distributional relaxation.}
The use of distributions is crucial to the proof. The original competitive ratio game is adversarial: first the instance family is fixed, then the algorithm is chosen, and finally the adversary selects the instance in the family that minimizes the algorithm's ratio. To upper bound the value of this game, we give the algorithm more information. We group the hard instances by the number $N$ of large items, announce a prior distribution on $N$, and analyze the optimal algorithm that knows this prior. For any fixed prior $\mathcal D$, the expected performance of the optimal prior-aware algorithm is an upper bound on the worst-case performance achievable by any prior-free algorithm on the same family. Equivalently, if even the optimal algorithm for the announced prior cannot reach $1/e$ in expectation, then no online algorithm can guarantee $1/e$ on every instance in the family.

This distributional relaxation should be interpreted as strengthening the algorithm. The algorithm is allowed to know the distribution over $N$ and choose the policy that maximizes its expected normalized payoff. After $N$ is realized, the instance still belongs to the corresponding $N$-cluster, and the Ramsey reduction lets us choose representatives of that cluster on which cardinal information is irrelevant. Thus the proof has the following structure: construct clusters indexed by $N$; reveal a prior over these clusters; characterize the optimal prior-aware policy; and then use the resulting expected value to certify that some cluster, and hence some concrete instance, is bad for every algorithm.

The prior we use is beta-binomial where $\alpha$ and $\beta$ are shape parameters:
\[
    N \sim \mathrm{BetaBin}(T,\alpha,\beta).
\]
This choice is analytically convenient because of conjugacy. Conditional on having seen $k$ large items in the first $t$ arrivals, the posterior distribution of the number of future large items $J$ remains beta-binomial:
\[
    J \sim \mathrm{BetaBin}(T-t,\alpha+k,\beta+t-k).
\]
Thus, the posterior can be updated cleanly using only the state $(t,k)$.

We also compare this beta-binomial prior with the degenerate no-large-item instance $N=0$. Before the first large item appears, these two possibilities are observationally indistinguishable to the algorithm: the only arrivals seen so far are small items. If the algorithm waits until time $\sigma$ before committing to small items, the degenerate instance gives a normalized payoff $1-\sigma$. On the other hand, the beta-binomial prior also gives a normalized payoff $F(\sigma)$ that accounts for the chance that a first large item appears before $\sigma$ and then triggers the embedded secretary problem. The adversary can therefore force the algorithm to face
\[
    \min\{F(\sigma),\,1-\sigma\}.
\]
The intersection of these two terms is the point at which the algorithm has balanced the value of continued large-item exploration against the cost of missing small items. Thus, even though the algorithm can now use the given prior to make potentially better informed decisions, we show that this advantage is not powerful enough to overcome the uncertainty between very distinct possible scenarios.

\paragraph{The dynamic program and ODE.}
Having reduced the impossibility analysis to an ordinal problem, we write the optimal prior-aware algorithm as a dynamic program. Let $N$ denote the total number of large items as before. After observing the current item at time $t$, the value function $V_t(k,r)$ is the best utility that can be achieved in a state where $t$ is the time, $k$ is the number of large items observed so far, and $r$ indicates whether the current item is a record large item. If $r=1$, then accepting the current item succeeds exactly when the current record is the best large item overall. If $r=0$, then accepting can be useful only in the case $N=0$, where the small items are optimal.

The key simplification occurs when we set $\alpha=1$ in our beta-binomial prior. Once at least one large item has appeared, the dynamic program collapses to a one-dimensional recurrence, which we denote as $V_t^{(1)}$, the best achievable utility from time $t$ onward. From this, we similarly characterize the dynamic program for the case where no large item has appeared, which we denote as $V_t^{(0)}$. We find that the dynamic programming solution converges to a limiting solution as $T \to \infty$. Since there always exists a $T$ for which the dynamic programming solution can get arbitrarily close to this limiting solution, we focus our efforts on analyzing the limiting solution.

We observe that after the first large item appears, the best possible behavior resembles a secretary rule with an endogenous $1/e$ plateau. The final loss below $1/e$ comes not from this plateau alone, but from the need to decide how long to wait before declaring that the no-large-item instance is likely enough to start accepting small items. Optimizing the resulting tradeoff over our distribution parameters gives the value $0.36437$, strictly below $1/e$.

\subsection{A refined algorithm for $1$-$B$ knapsack}

Our second contribution is algorithmic. We give a simple $(1/5.10-o(1))$-competitive algorithm for $1$-$B$ knapsack secretary for every fixed $B\ge 2$. This improves the guarantee obtained by applying general-purpose random-order knapsack algorithms to this special case.

For $B=2$, we use the boosting algorithm of Abels, Ladewig, Schewior, and Stinzend\"orfer~\cite{abels2022knapsack}, which gives a $(1/e-o(1))$-competitive algorithm for the $1$-$2$ knapsack secretary. For $B\ge 3$, we use a three-phase algorithm, in the spirit of the more general algorithm of Albers, Khan, and Ladewig~\cite{albers2021improved}. The first phase samples all items arriving before time $c_1$. In the second phase, during $[c_1,c_2)$, the algorithm attempts to select a large item, accepting the first large item whose value exceeds the best sampled value. If no large item is selected by time $c_2$, the algorithm enters the third phase, focuses on the small-item problem, and runs a $k$-secretary threshold rule with $k=B$, using the $k$-th largest small item observed before $c_2$ as the threshold.

The analysis separates the two possible offline optima. If the optimum is a large item, the algorithm selects it with probability at least
\[
    c_1 \ln\left(\frac{c_2}{c_1}\right).
\]
If the optimum consists of small items, then with probability at least $c_1/c_2$ no large item is selected before the small-item phase, and the small-item phase obtains the guarantee of the single-reference $k$-secretary algorithm of Albers and Ladewig~\cite{albers2021new}. Thus the competitive ratio is at least
\[
    \min\left\{
        c_1\ln\left(\frac{c_2}{c_1}\right),
        \frac{c_1}{c_2}\rho_B(c_2)
    \right\}
    -o(1),
\]
where
\[
    \rho_B(c)
    =
    \int_c^1
    \Pr\left[
        \operatorname{Bin}\left(2B-1,\frac{c}{t}\right)\ge B
    \right]dt.
\]
Choosing $c_1=0.3657$ and $c_2=0.6258$, and using monotonicity of $\rho_B(c)$ for $B\ge 3$, gives a uniform guarantee of $(1/5.10-o(1))$.

\paragraph{Concurrent work.} During the preparation of this manuscript, we became aware of independent concurrent work by Eric Balkanski, Jason Chatzitheodorou, Dimitris Fotakis, and Thanos Tolias~\cite{ericimpossibility2026} that also establishes an impossibility result below 1/e for the knapsack secretary problem. Although the constructions share some structural similarities, the two works use different approaches to prove their impossibility results. They also improve on the previously best-known competitive ratio for the knapsack secretary problem by presenting a $1/5.59$-competitive algorithm.
\section{Related Work}

Our work builds on several established lines of research in optimal stopping and online algorithms. We first review the classical secretary problem and its extensions to settings with an unknown number of candidates. We then discuss combinatorial generalizations, focusing on the knapsack secretary problem. Finally, we review the distinction between ordinal and cardinal algorithms, which motivates the structure of our new impossibility result for the $1$-$B$ knapsack problem.

\paragraph{The classical secretary problem.}
The classical secretary problem was introduced in the optimal stopping literature and is commonly attributed in the theoretical computer science literature to Dynkin~\cite{dynkin1963optimum}. The problem and its history are surveyed by Ferguson~\cite{ferguson1989solved}. The optimal asymptotic success probability is \(1/e\), achieved by the standard threshold rule. Gnedin's work on the game of googol~\cite{gnedin1994solution} further clarifies the relationship between cardinal and ordinal information in classical best-choice problems. In the single-choice secretary problem, cardinal values do not improve the optimal guarantee: only the relative order of observed items matters. This equivalence is special to the single-choice setting and is one reason why cardinal impossibility results become substantially more delicate in combinatorial secretary problems.

\paragraph{Unknown numbers of candidates and no-information models.}
Our hard instance can be viewed as embedding a secretary problem with an unknown number of relevant candidates: the large items are the candidates, but the algorithm does not know how many exist, and the case in which no large item exists has a different optimal solution. This connects our work conceptually to the literature on best-choice problems with an unknown number of options. Early and classical work in this direction includes Gianini and Samuels~\cite{gianini1976university}, Bruss~\cite{bruss1984unified}, Abdel-Hamid, Bather, and Trustrum~\cite{abdel1982secretary}, Bruss and Samuels~\cite{bruss1987unified,bruss1990conditions}, and Bruss's work on invariant record processes and related optimal selection problems~\cite{bruss1988invariant,bruss1987optimal}. Later work studies related last-arrival and no-information variants, including Bruss and Yor~\cite{bruss2012stochastic}, Wästlund~\cite{wastlund2011only}, Bruss~\cite{bruss20201}, and Gnedin~\cite{gnedin2021beat}. Our setting differs from these models because the arrival sequence is still a fixed-size random permutation of a worst-case instance; nevertheless, our impossibility result uses an analogous source of difficulty, namely uncertainty about whether any high-value large item exists before the algorithm must begin committing to small items.

\paragraph{Knapsack secretary and random-order packing.}
The knapsack secretary problem was introduced by Babaioff, Immorlica, Kempe, and Kleinberg~\cite{babaioff2007knapsack}, who gave one of the first constant-competitive algorithms for the problem. Subsequent work has developed stronger algorithms for knapsack and for the more general generalized assignment problem in the random-order model. Kesselheim, Tönnis, Radke, and Vöcking~\cite{kesselheim2014primal} gave improved algorithms for online packing LPs using random-order techniques. Albers, Khan, and Ladewig~\cite{albers2021improved} further improved the competitive ratios for knapsack and GAP. The current best bounds for generalized assignment and knapsack in the random-order model are due to Klimm and Knaack~\cite{klimm2025generalized}.

The special \(1\)-\(B\) knapsack secretary problem was studied explicitly by Abels, Ladewig, Schewior, and Stinzendörfer~\cite{abels2022knapsack}. They introduced a boosting approach and obtained strong guarantees for \(1\)-\(2\) knapsack secretary, as well as ordinal impossibility results for knapsack secretary. Their ordinal impossibility result shows that relative-order algorithms cannot achieve \(1/e\), but it does not rule out \(1/e\)-competitive cardinal algorithms. Our  impossibility result closes this gap differently: we construct hard instances on which cardinal information can be removed, up to vanishing error, by an instance-dependent ordinalization argument.

Our algorithmic result is also related to the \(k\)-secretary problem, where the algorithm may select up to \(k\) items. Albers and Ladewig~\cite{albers2021new} developed refined algorithms and analyses for \(k\)-secretary. In our \(1\)-\(B\) algorithm, once the algorithm commits to the small-item case, the remaining problem is exactly a \(k\)-secretary problem with \(k=B\). This allows us to use sharper \(k\)-secretary subroutines instead of general-purpose knapsack secretary algorithms.

\paragraph{Matroid, matching, and other combinatorial secretary problems.}
The matroid secretary problem, introduced by Babaioff, Immorlica, and Kleinberg~\cite{babaioff2007matroids}, asks for an online algorithm that selects an independent set in a matroid and competes with the maximum-weight independent set. The central open question is whether every matroid admits a constant-competitive secretary algorithm. Feldman, Svensson, and Zenklusen~\cite{feldman2014simple} gave an \(O(\log\log r)\)-competitive algorithm for general matroids, where \(r\) is the rank, and also developed a framework for secretary problems over intersections of matroids~\cite{feldman2018framework}. In later work, Bahrani, Beyhaghi, Singla, and Weinberg~\cite{bahrani2021formal} identified formal barriers for broad classes of simple algorithms for matroid secretary. More recently, Bérczi, Livanos, Soto, and Verdugo~\cite{berczi2025matroid} obtained improved results for matroid secretary using labeling schemes. For general downward-closed set systems, Rubinstein~\cite{rubinstein2016matroid} gives an $O(\log n \log r)$-competitive algorithm, where $r$ is the size of the largest feasible set, and Rubinstein and Singla~\cite{rubinstein2026secretary} very recently improved this to $O(\log n)$.

Secretary matching problems form another important class of combinatorial secretary problems. Korula and Pál~\cite{korula2009algorithms} studied secretary problems on graphs and hypergraphs. Kesselheim, Radke, Tönnis, and Vöcking~\cite{kesselheim2013optimal} gave an optimal online algorithm for weighted bipartite matching and extensions to combinatorial auctions. Ezra, Feldman, Gravin, and Tang~\cite{tomer2022edgearrival} established tight bounds for secretary matching in general graphs and compared the difficulty of general and bipartite graph settings. In Table~\ref{tab:problem_bounds}, we summarize the state-of-the-art for the problem classes studied.

\begin{table}[htbp]
    \centering
    \begin{tblr}{
        width = \textwidth,
        colspec = {|X[2,l,m]|X[1,l,m]|X[1,l,m]|},
        hlines,
    }
        \textbf{Problem Class} & \textbf{Best Known Ratio} & \textbf{Impossibility} \\
        General Set System (Downward-Closed) & $1/O(\log n \cdot \log r)$ \cite{rubinstein2016matroid} & $O(\log\log n / \log n)$ \cite{babaioff2007matroids} \\
        Generalized Assignment (Knapsack) & $(1 - \ln 2)/2 \approx 1/6.52$ \cite{klimm2025generalized} & $1/e-0.0035$ (this work) \\
        Graph Matching (Edge-Arrival) & $1/4$ \cite{tomer2022edgearrival} & $1/e$ \\
        Matroid & $1/O(\log \log r)$ \cite{feldman2014simple} & $1/e$ \\
    \end{tblr}
    \vspace{0.8em}
    \caption{Best known ratios and impossibility results for relevant secretary problems; Here, $r$ denotes the maximum feasible-set size for the general set system, and, respectively, the matroid rank for matroids.}
    \label{tab:problem_bounds}
\end{table}

\paragraph{Ordinal algorithms and cardinal complexity.}
A recurring theme in secretary problems is the distinction between ordinal and cardinal information. Ordinal algorithms only use comparisons between observed values, while cardinal algorithms may use the numerical values themselves. For the classical secretary problem, this distinction does not affect the optimal guarantee. For combinatorial secretary problems, however, cardinal information can be much more powerful. Gravin, Sun, and Tang~\cite{gravin2023online} systematically studied online ordinal problems, the optimality of comparison-based algorithms, and their cardinal complexity. Their work helps formalize when ordinal reasoning is without loss of generality and when cardinal information can fundamentally change the problem. Soto, Turkieltaub, and Verdugo~\cite{soto2021strong} study ordinal matroids in the secretary setting and provide new competitive ratios for multiple classes of matroids.

Our impossibility result contributes to this line of work by showing how ordinal analysis can be used to prove cardinal impossibility results, similar to the approach that Ezra, Feldman, Gravin, and Tang~\cite{tomer2022edgearrival} take to show a tight bound for secretary matching in the vertex arrival model. The point is not to restrict the algorithm to be ordinal; instead, we build a cardinal instance family whose value structure makes cardinal information asymptotically irrelevant. After a Ramsey-style reduction, every cardinal algorithm has an ordinal representative whose state depends only on time, the number of observed large items, and whether the current item is a record. This reduction is what makes the dynamic-programming and ODE analysis possible.

\paragraph{Prophet inequalities and related online selection models.}
Secretary problems are closely related to prophet inequalities, where item values are drawn from known distributions, and the algorithm competes against an offline prophet. Although the arrival and information models differ, many ideas transfer between the two areas. Correa, Dütting, Fischer, and Schewior~\cite{correa2019prophet}, for example, studied prophet inequalities for I.I.D. random variables from an unknown distribution, using techniques that are conceptually related to reducing cardinal information to comparison-based structure. Our work uses a distribution over hard instances to establish an impossibility result, rather than assuming a stochastic input model for the algorithm, but the proof similarly exploits the interaction between distributional symmetry, ordinal information, and online stopping rules.

\paragraph{Position of this work.}
Before this paper, the best known impossibility result for cardinal knapsack secretary was the classical \(1/e\) barrier obtained by embedding the single-choice secretary problem. The limitations of ordinal algorithms were known~\cite{abels2022knapsack}, but they did not preclude cardinal algorithms from using numerical values to distinguish the hard cases. This paper gives the first cardinal impossibility result below \(1/e\) for knapsack secretary. The proof replaces the classical secretary hard-instance construction with a \(1\)-\(B\) knapsack gadget that combines two competing tasks: finding the best large item when large items exist, and collecting small items when no large item exists. The resulting optimal stopping problem is captured by a beta-binomial dynamic program, its continuum limit is an explicitly solvable ODE, and the optimized bound is \(0.36437<1/e\). On the positive side, we also give a simple uniform \((1/5.10-o(1))\)-competitive algorithm for \(1\)-\(B\) knapsack secretary, showing that this basic special case admits better guarantees than those obtained by directly applying general knapsack secretary algorithms.

\section{Preliminaries}
\subsection{Knapsack Secretary Problem}\label{sec:prelim}
Let $\mathcal{I}$ be the set of items. Each item $i \in \mathcal{I}$ has a size $s_i$ and a value $v_i$. The knapsack has capacity $B$, and a feasible packing is a set $S \subseteq \mathcal{I}$ satisfying $\sum_{i \in S} s_i \le B$. The value of a packing is $v(S)=\sum_{i\in S}v_i$, and the offline optimum for a given set of items $\mathcal{I}$ is denoted by $\OPT(\mathcal{I})$.

We focus on the $1$-$B$ knapsack problem first considered by \cite{abels2022knapsack}, where every item has size either $1$ or $B$. Items of size $B$ are called \emph{large} items, and items of size $1$ are called \emph{small} items. We write $\mathcal{I}_\ell$ and $\mathcal{I}_s$ for the set of large and small items, respectively. Since a large item has size exactly $B$, any feasible packing contains either one large item or only small items.

We study the online random-order model. There are $T$ time steps, and the items arrive in a uniformly random order. When an item arrives, the algorithm observes its size and value and must irrevocably accept it, subject to feasibility, or reject it. When an arriving large item has a value that is higher than any value previously seen, we call such an item a \textit{record}. We write $\ALG(\mathcal{I})$ for the packing returned by an algorithm $\ALG$ (potentially randomized) given a set of items $\mathcal{I}$. An algorithm is $\rho$-competitive if every set of items $\mathcal{I}$,
\begin{align*}
    \mathbb{E}\left[v(\ALG(\mathcal{I}))\right] \ge \rho \cdot v(\OPT(\mathcal{I})),
\end{align*}
where the expectation is over the random arrival order $\pi$ and the randomization of the algorithm. Note when we say that an algorithm is $\rho$-competitive under an alternative objective, this means we swap $v$ for this objective.

For the hard family of instances in \Cref{sec:hard-instance}, we take $T=B$. Thus, if $N$ large items appear, then $B-N$ small items appear. In this family, once an algorithm accepts a small item, it is optimal to keep accepting small items. Throughout this work, $N$ will denote the number of large items, usually as a random variable. In \Cref{sec:cardtoord}, we show that we can study our family of instances under an alternative objective to obtain our impossibility result. This change in objective allows us to move from studying the \textit{cardinal} version of the problem to an \textit{ordinal} version where only the ranks of the items are relevant. We define this ordinal objective as follows:
\begin{align*}
    \tilde{v}(S)=
    \begin{cases}
        \mathbbm{1}\{\max_{i \in \mathcal{I}_\ell} w_i \in S\} & \text{if a large item is optimal},\\
        \frac{|S\cap \mathcal{I}_s|}{|\mathcal{I}_s|} & \text{if the small items are optimal}.
    \end{cases}
\end{align*}
We can interpret $\tilde{v}(S)$ as the fraction of optimal items in $S$. Note that under this objective, for any instance $\mathcal{I}$, $\tilde{v}(\OPT(\mathcal{I})) = 1$. If the algorithm is also given a prior $\mathcal{D}$ over instances, we denote the corresponding prior-aware algorithm by $\ALG_{\mathcal{D}}$, i.e., an algorithm that has access to the distribution. This leads to the following observation, which will help when proving our hardness result.

\begin{observation}\label{obs:ubdist}
We can upper bound the worst-case performance of the optimal online algorithm by placing a prior on the instances:
\begin{align*}
    \sup_{\ALG} \inf_\mathcal{I} \mathbb{E}[\tilde{v}(\ALG(\mathcal{I}))]
    \leq \sup_{\ALG} \mathbb{E}_{\mathcal{I}\sim \mathcal{D}}[\tilde{v}(\ALG(\mathcal{I}))]
    \leq \sup_{\ALG_{\mathcal{D}}} \mathbb{E}_{\mathcal{I}\sim \mathcal{D}}[\tilde{v}(\ALG_{\mathcal{D}}(\mathcal{I}))].
\end{align*}
where the expectation is also taken over the random arrival order and the randomization of the algorithm.
\end{observation}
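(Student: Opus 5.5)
The plan is to prove the two inequalities separately; each is a one-line averaging or inclusion argument. Throughout, fix a prior $\mathcal{D}$ supported on the family of instances under consideration. Write $f(\ALG,\mathcal{I}) = \mathbb{E}[\tilde{v}(\ALG(\mathcal{I}))]$, where the expectation is over the arrival order $\pi$ and the algorithm's internal randomness. Since $\tilde{v}\in[0,1]$, this quantity is bounded, and every expectation below is well defined. When $\mathcal{D}$ is infinitely supported we also assume $f(\ALG,\cdot)$ is measurable; for the finite clusters we actually use, this is automatic.

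\textbf{First inequality.} Fix any prior-free algorithm $\ALG$. For every instance $\mathcal{I}$ in the support of $\mathcal{D}$ we have $f(\ALG,\mathcal{I}) \ge \inf_{\mathcal{I}'} f(\ALG,\mathcal{I}')$. Taking expectation over $\mathcal{I}\sim\mathcal{D}$ preserves this pointwise bound, so
\[
\inf_{\mathcal{I}'} f(\ALG,\mathcal{I}') \le \mathbb{E}_{\mathcal{I}\sim\mathcal{D}}[f(\ALG,\mathcal{I})].
\]
The right-hand side equals $\mathbb{E}_{\mathcal{I}\sim \mathcal{D}}[\tilde{v}(\ALG(\mathcal{I}))]$ with the joint expectation. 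The only point to check is that $\pi$ and the algorithm's coins are drawn independently of $\mathcal{I}$, so Fubini applies. Taking the supremum over $\ALG$ on both sides gives the first inequality; this is just the statement that the minimum is at most the average.

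\textbf{Second inequality.} Every prior-free algorithm is in particular a prior-aware algorithm: it is the prior-aware algorithm that simply ignores $\mathcal{D}$. So the set of policies over which the right-hand supremum ranges contains the set in the middle term, and a supremum over a larger set is at least as large.

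The only potential obstacle is formal rather than mathematical. One must make precise that a prior-aware algorithm is a map from $(\mathcal{D}, \text{observed history})$ to decisions, with the set of admissible histories identical to that of prior-free algorithms; then the embedding in the second step is literal. I would state this convention explicitly and then conclude.
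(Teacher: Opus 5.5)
Your proof is correct and matches the reasoning the paper relies on (the paper states this observation without proof). For each fixed algorithm the infimum over instances is at most the $\mathcal{D}$-average, and prior-free algorithms are a special case of prior-aware ones, so the supremum on the right can only be larger. One small point: the clusters $\mathcal{H}_x(R,m)$ are infinite, so measurability is automatic not because the clusters are finite but because the priors actually used are supported on finitely many values of $N$, which allows taking one representative instance per $N$.
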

Note that any time we take an expectation of an algorithm, we will assume it is also over the random arrival order and randomness of the algorithm unless stated otherwise.

\subsection{The Beta-Binomial Distribution}
\label{subsec:betabin}

The beta-binomial distribution plays a central role in our dynamic programming analysis and arises naturally in the structure of the $1$-$B$ knapsack instance. It is defined as the compound distribution obtained by first drawing a success probability $p \sim \mathrm{Beta}(\alpha, \beta)$ and then drawing $N \sim \mathrm{Binomial}(T, p)$, which is equivalent to $N \sim \mathrm{BetaBin}(T, \alpha, \beta)$. The probability mass function is
$$
\Pr[N = k] = \binom{T}{k} \frac{B(\alpha + k,\; \beta + T - k)}{B(\alpha, \beta)},
$$
where $B(\cdot, \cdot)$ denotes the beta function.

A key property that makes the beta-binomial distribution particularly useful for our analysis is its Bayesian conjugacy. In our setting, we model the unknown number of large items as drawn from a $\mathrm{BetaBin}(T, \alpha, \beta)$ prior. As the algorithm observes items arriving, the posterior distribution over the remaining number of large items remains in the beta-binomial family, with parameters updated by simple additive rules. Specifically, if the prior is $\mathrm{BetaBin}(T, \alpha, \beta)$ and the algorithm has observed $k$ large items and $t - k$ small items by time step $t$, then the posterior over the remaining number of large items is $\mathrm{BetaBin}(T - t, \alpha + k, \beta + (t - k))$. Therefore, after observing $k$ large items in the first $t$ time steps, we obtain the following posterior distribution from conjugacy for the remaining number of large items:
\begin{align*}
    N-k \mid E_{t,k} \sim \mathrm{BetaBin}(T-t,\alpha+k,\beta+t-k).
\end{align*}
where $E_{t,k}$ is the event that $k$ large items have been seen by time $t$. Equivalently, if we let $J\sim\mathrm{BetaBin}(T-t,\alpha+k,\beta+t-k)$, then
\begin{align*}
    \Pr[N=k+j\mid E_{t,k}]=\Pr[J=j].
\end{align*}
As a result, we can easily take the expectation of the number of large items $N \sim \mathrm{BetaBin}(T, \alpha, \beta)$:
\begin{align*}
    \mathbb{E}[N] = \frac{\alpha T}{\alpha + \beta}.
\end{align*}
This gives us a clean update rule that allows us to write a tractable dynamic program that we use in \Cref{sec:simplified-dp}.
\section{Improved impossibility result}

In this section, we prove our main hardness result for the $1$-$B$ knapsack secretary problem. The proof has two main ingredients. First, we construct a family of $1$-$B$ knapsack instances that vary in the number of large items $N$ and the item values. Since the values in the constructed instances are separated by large multiplicative gaps, we show that we can study these instances over the ordinal objective $\tilde{v}$ we defined in \Cref{sec:prelim}. Intuitively, the cardinal information of the item values in these instances is irrelevant because we can make the multiplicative gap between item values arbitrarily large. Thus, if the small items are optimal (i.e., $N=0$), the objective is to select as many small items as possible, and if a large item is optimal (i.e., $N>0$), the objective is to select the best large item. In \Cref{sec:cardtoord}, we show that any cardinal algorithm with access to item values under our ordinal objective can be simulated up to a vanishing loss by an ordinal algorithm whose state depends only on the time and the number of large items seen so far.

The second ingredient is a distributional relaxation. Assume that the solution of any algorithm is scored on the ordinal objective. Recall from \Cref{obs:ubdist} that we can upper bound the solution of the optimal algorithm under the worst instance using the solution of the optimal prior-aware algorithm under some fixed distribution over the instances. By evaluating the upper bound for a particular distribution, we can use this to upper bound the best achievable competitive ratio. The main benefit of using a distributional approach becomes clear when we define the optimal prior-aware algorithm through a dynamic program in \Cref{sec:dp}. For our fixed distribution, we use a beta-binomial distribution with carefully chosen parameters to vary the number of large items in our instances, which further simplifies our dynamic program as shown in \Cref{sec:simplified-dp} due to conjugacy and our choice of parameters. Intuitively, we are providing the algorithm with distributional information because it makes our analysis more tractable and can only improve the optimal competitive ratio.

From these ingredients, we are able to construct a compact dynamic program for the optimal prior-aware ordinal algorithm, which is analytically tractable. Using standard results from numerical analysis, we show that as $T$ increases, this dynamic program converges to a limiting ordinary differential equation. We solve this ODE to obtain an explicit formula for the optimal competitive ratio as a function of the time when the algorithm commits to accepting small items. This is not strong enough to show a hardness result below $1/e$, so to complete our argument, we use an adversarial approach. An adversary chooses between the beta-binomial distribution or a degenerate distribution on $N=0$. These two cases are indistinguishable to an algorithm until a large item appears. The resulting tradeoff between waiting for a large item and committing to small items gives the desired hardness result below $1/e$.

\subsection{Instance Construction}\label{sec:hard-instance}

For a given instance $\mathcal{I}$, we can represent it using a set $V$ of item values containing the values for large items and a single \textit{aggregate} value representing the total value of the small items if they form the optimal packing. This aggregate value for small items is always the minimum value in $V$. We choose to give information about the small-item values as a single aggregate value because we assume all small items have the same value in our family of instances: $\min_{v \in V} v/T$. Therefore, if a small item arrives, we assume the algorithm knows the aggregate value for when the small items form the optimal packing. Note that the aggregate value does not represent the total value of the small items when there are large items. Also note that the item labels are not relevant here because the large items look identical and the small items are defined to correspond to the minimum value in the set. This is sufficient to obtain a hardness result. If $|V| = 1$, then there are no large items and the small items form the optimal packing. Let $I(V)$ be the instance induced by the described value set $V$. Recall from \Cref{sec:prelim}, we take $T=B$. If $N$ large items appear, then $B-N$ small items appear. Let
\begin{align*}
    H_x = \{x^{2\ell}: \ell\in \mathbb{N}\},
\end{align*}
where $x>B$. An instance is then defined by a finite set $V\subseteq H_x$. Thus, $|V|=N+1$ by definition because there are $N$ large items. We define
\begin{align*}
    \mathcal{H}_x(R, m) = \left\{ V : V \subseteq H_x \cap R \text{ and } |V|=m \right\}.
\end{align*}
Here, $\mathcal{H}_x(R, m)$ represents all instances of size $m$ where $1 \leq m \leq T+1$ with the values of the instances being restricted to the set $R$. The spacing in $H_x$ ensures that when a large item is optimal, all non-optimal values are smaller by a factor of at least $x^2$. Going forward, this construction will be our family of hard instances for the $1$-$B$ knapsack secretary problem.

\subsection{Reduction from Cardinal to Ordinal}\label{sec:cardtoord}

The main result of this section shows that we can study our family of hard instances under the ordinal objective with a vanishing loss. Therefore, we essentially show that the cardinal information is irrelevant within our family which greatly simplifies our dynamic program. This is an approach similar to the one used by \cite{tomer2022edgearrival}. We also use a Ramsey-theoretic argument similar to the ones used by \cite{correa2019prophet,tomer2022edgearrival}. We start by showing that if any algorithm is $\rho$-competitive under the cardinal objective, then it must also be under the ordinal objective.

\begin{lemma}\label{lem:intermediate}
    Let $V \subseteq H_x$ and $\mathcal{I} = I(V)$ be the induced instance. Suppose that $\ALG$ is $\rho$-competitive with respect to the cardinal objective. If $|V|-1 \le B \leq x$, then the following holds:
    \begin{enumerate}
        \item\label{case:large} If the optimal solution is a large item of value $v^*=\max_{u \in V} u$, then
        \begin{align*}
            \Pr[\ALG(\mathcal{I}) \text{ contains an item of value } v^*]\ge \rho-O\left(\frac1x\right),
        \end{align*}
        \item\label{case:small} If the optimal solution is the set of small items, then
        \begin{align*}
            \frac{\mathbb{E}[|\ALG(\mathcal{I})\cap \mathcal{I}_s|]}{|\mathcal{I}_s|}\ge \rho.
        \end{align*}
    \end{enumerate}
\end{lemma}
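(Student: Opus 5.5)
We argue directly from the definition of $\rho$-competitiveness, $\mathbb{E}[v(\ALG(\mathcal{I}))]\ge \rho\, v(\OPT(\mathcal{I}))$, and handle the two cases of the statement separately. In each case we decompose $v(\ALG(\mathcal{I}))$ according to the structure of feasible packings in a $1$-$B$ instance. A feasible packing is either a single large item or a set of at most $B$ small items. Since $T=B$, there are $B-N$ small items, so every small item fits once no large item is taken.

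\textbf{Case (ii).} Here $|V|=1$, so $N=0$ and all $T$ items are small. Each has value $v_{\min}/T$, where $v_{\min}$ is the unique element of $V$, and $\OPT$ takes all of them, giving value $v_{\min}$. Every selected item is small, so $v(\ALG(\mathcal{I}))=|\ALG(\mathcal{I})\cap\mathcal{I}_s|\cdot v_{\min}/|\mathcal{I}_s|$. Dividing the competitiveness inequality by $v_{\min}$ gives exactly $\mathbb{E}[|\ALG(\mathcal{I})\cap\mathcal{I}_s|]/|\mathcal{I}_s|\ge\rho$, and no error term appears.

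\textbf{Case (i).} Let $v^*=\max V$ be the value of the best large item. Since $V\subseteq H_x$, every other element of $V$ is at most $v^*/x^2$.

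1. Write $v(\ALG(\mathcal{I}))\le v^*\cdot\mathbb{1}\{\text{the }v^*\text{ item is selected}\}+W$, where $W$ is the value obtained in outcomes where that item is not selected.

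2. In those outcomes the packing is either a single large item of value at most $v^*/x^2$, or a set of at most $B-N\le B$ small items. Each small item has value $v_{\min}/T\le v^*/(x^2T)$, so their total is at most $B v^*/(x^2 T)=v^*/x^2$ because $T=B$.

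3. Hence $W\le v^*/x^2$ in every outcome. Taking expectations and dividing by $v^*$ gives $\Pr[v^*\text{ selected}]\ge\rho-1/x^2$.

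This is at least $\rho-O(1/x)$. The hypothesis $B\le x$ keeps the bound of $O(1/x)$ type even if one bounds the small-item total more crudely, using $B$ items of value at most $v^*/x^2$ each, which gives $B/x^2\le 1/x$. The hypothesis $|V|-1\le B$ guarantees the instance is well-defined, i.e.\ $N\le T$.

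The main obstacle is bookkeeping in step 2: confirming that the aggregate small value is the minimum of $V$, and is therefore separated from $v^*$ by at least $x^2$. This holds because the minimum is a distinct element of $H_x$ when $N\ge1$. It also requires that no mixed packing can contain both a large item and small items, which follows because a large item has size $B$, the full capacity.
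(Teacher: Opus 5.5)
Your proof is correct and takes the same route as the paper: you split $\mathbb{E}[v(\ALG(\mathcal{I}))]$ into $v^*$ times the probability of selecting the top item plus a residual controlled by the $x^2$ gaps in $H_x$, rearrange the competitiveness inequality, and handle case (ii) identically. The only difference is in the residual. You bound it in every outcome by $v^*/x^2$, using that a packing missing the top item is either one lower large item or at most $B-N$ small items of total value at most $\min V$; this gives $\rho-1/x^2$ without needing $B\le x$. The paper instead bounds the residual by $|V\setminus\{v^*\}|\cdot x^{2\ell^*-2}\le B\,x^{2\ell^*-2}\le x^{2\ell^*-1}$, which is where its hypothesis $B\le x$ enters.
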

\begin{proof}
    Consider first the large-optimum case. We can write $v^*=x^{2\ell^*}$ because $V \subseteq H_x$. Since all values in $H_x$ are separated by a factor of $x^2$, every value in $V\setminus\{v^*\}$ is at most $x^{2\ell^*-2}$. Therefore,
    \begin{align*}
        \mathbb{E}[v(\ALG(\mathcal{I}))]
        &\leq \Pr[\ALG(\mathcal{I})\text{ contains an item of value }v^*]\cdot x^{2\ell^*}
        + |V\setminus\{v^*\}|\cdot x^{2v^*-2}\\
        &\leq \Pr[\ALG(\mathcal{I})\text{ contains an item of value }v^*]\cdot x^{2\ell^*}
        + B\cdot x^{2\ell^*-2}\\
        &\leq \Pr[\ALG(\mathcal{I})\text{ contains an item of value }v^*]\cdot x^{2\ell^*}
        + x^{2\ell^*-1},
    \end{align*}
    where the last two inequalities follow from our assumption that $|V\setminus\{v^*\}| \le B \le x$. Since $\ALG$ is $\rho$-competitive and $v(\OPT(\mathcal{I}))=x^{2\ell^*}$, rearranging gives
    \begin{align*}
        \Pr[\ALG(\mathcal{I})\text{ contains an item of value }v^*]
        \ge \rho-O\left(\frac1x\right).
    \end{align*}

    In the small-optimum case, $|V|=1$, so there are no large items. Let $v_s \in V$ be the total value of the small items. We can write $v_s=x^{2\ell^*}$ once again. It follows that
    \begin{align*}
        \mathbb{E}[v(\ALG(\mathcal{I}))]
        = \frac{\mathbb{E}[|\ALG(\mathcal{I})\cap \mathcal{I}_s|]}{|\mathcal{I}_s|}\cdot x^{2\ell^*}.
    \end{align*}
    Since $v(\OPT(\mathcal{I}))=x^{2\ell^*}$, $\rho$-competitiveness implies
    \begin{align*}
        \frac{\mathbb{E}[|\ALG(\mathcal{I})\cap \mathcal{I}_s|]}{|\mathcal{I}_s|}\ge \rho.
    \end{align*}
\end{proof}

\Cref{lem:intermediate} lets us replace cardinal cardinal objective by the ordinal objective $\tilde{v}$ defined in the \Cref{sec:prelim}. We now show that after restricting $H_x$ to a particular infinite subset of values, the algorithm's decisions depend only on time, number of large items, and whether the current item is a record.

A \textit{cardinal} algorithm is fully defined by the probability it chooses to accept an item for the first time as a function of the \textit{history} $\Lambda$. This is because after an algorithm chooses to accept an item, it's remaining decisions are fixed: after accepting a large item, no other items can be accepted, and after accepting a small item, it is optimal to accept all remaining small items. The history $\Lambda$ is summarized by the current time $t$, the instance $\mathcal{I}_t$ observed so far, an indicator $r$ for whether the current item is a record (i.e., $r=1$ indicates a record, and $r=0$ otherwise), and the observed arrival order $\pi_t$ consistent with $r$.

\begin{definition}
    For a cardinal algorithm, define a decision function $d^c_t(\Lambda) = d^c_t(\mathcal{I}_t,r,\pi_t)$ as the probability that the algorithm accepts an item for the first time given a history $\Lambda$.
\end{definition}
Any time we refer to a decision function, the corresponding algorithm will be clear from context. Next, observe that the arrival order does not provide any useful information to a cardinal algorithm which will help simplify our arguments.
\begin{observation}\label{obs:arrivalorder}
    Let $d^c_t(\mathcal{I}_t,r,\pi_t)$ correspond to $\rho$-competitive cardinal algorithm. We can define a new decision function $f^c_t(\mathcal{I}_t,r,\pi_t) = \mathbb{E}_{\pi' \sim \Pi_t} [ d^c_t(\mathcal{I}_t,r,\pi')]$ over all $\mathcal{I}_t$, $r$, and $\pi_t$ where $\Pi_t$ is the distribution over all observable arrival orders at time $t$ consistent with $r$. The new cardinal algorithm defined by $f^c_t(\mathcal{I}_t,r,\pi_t)$ will still be $\rho$-competitive.
\end{observation}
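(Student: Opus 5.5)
The plan is to show that averaging the decision function over arrival orders changes nothing about the expected payoff on any fixed instance. Since $\rho$-competitiveness is a statement about $\mathbb{E}[v(\ALG(\mathcal{I}))]$ instance by instance, the averaged algorithm would then inherit the guarantee directly. Fix an instance $\mathcal{I}$ and couple the two algorithms through the uniformly random arrival order $\pi$.

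The key observation is that, conditional on the observed multiset $\mathcal{I}_t$ at time $t$ and the record indicator $r$, the observed prefix order $\pi_t$ is uniformly distributed over all orders consistent with $(\mathcal{I}_t, r)$. This follows from the exchangeability of a uniformly random permutation: the set of the first $t$ items is independent of their internal order, except for the constraint that the $t$-th item is (or is not) a record. Moreover, the \emph{future}, namely the items arriving after time $t$ and their order, is conditionally independent of $\pi_t$ given $(\mathcal{I}_t, r)$ and the identity of the current item. I will take $\Pi_t$ in the statement to be exactly this conditional distribution.

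I would then write the expected payoff as a sum over the first-acceptance time $t$. After the first acceptance the continuation is forced: a large item ends the process, and a small item leads to accepting all remaining small items. So the payoff of accepting at time $t$ is a function $g(\mathcal{I}_t, \text{current item}, \text{future})$ that does not depend on $\pi_t$. The probability of reaching time $t$ without accepting, and then accepting, is $\prod_{s<t}(1-d^c_s)\cdot d^c_t$, evaluated along the history. Here lies the main obstacle. Earlier non-acceptance probabilities depend on earlier prefix orders, which are correlated with $\pi_t$. Consequently, replacing each $d^c_s$ by its conditional average at every step simultaneously is not literally the same as averaging the joint product.

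My first attempt would be to argue by induction over $t$, backwards, in the style of a dynamic program. Define the continuation value $W_t$ as a function of the history. For the averaged algorithm $f^c$, the optimal continuation value depends only on $(\mathcal{I}_t, r)$, because neither the future nor the available actions depend on $\pi_t$. I would then show that the one-step value at time $t$ is linear in the acceptance probability with coefficients independent of $\pi_t$, and that averaging over $\Pi_t$ preserves the expected value.

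To handle the correlation cleanly, I would instead reformulate the averaged algorithm as a randomized simulation. At each time $t$, the new algorithm samples a fresh $\pi' \sim \Pi_t$ and accepts with probability $d^c_t(\mathcal{I}_t, r, \pi')$. If exact equality of the expected value is not available, I would fall back on showing that the per-step marginal acceptance probabilities given $(\mathcal{I}_t, r, \text{future})$ agree between the two algorithms. Since the payoff depends only on the time and state of first acceptance, matching these marginals suffices.

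I expect the marginal-matching argument to be the delicate part, and I would handle it by conditioning on the unordered set of the first $t$ items together with the future. Under that conditioning, the original algorithm's first-acceptance probability at time $t$ is an average over uniformly random prefix orders, and the averaged rule is built to reproduce exactly this average. With the marginals matched, $\mathbb{E}[v]$ is the same for both algorithms on every instance, so $\rho$-competitiveness carries over.
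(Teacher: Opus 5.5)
\textbf{Verdict: there is a genuine gap.} It sits at the step you yourself flag, and your last paragraph then asserts it away.

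Your ``fresh $\pi'$'' simulation is just the algorithm whose hazard at time $t$ is $f^c_t$. Conditioning on the unordered prefix does not make its first-acceptance probabilities match the original's. Given $(\mathcal{I}_t,x_t)$, the original first accepts at $t$ with probability $\mathbb{E}_{\pi_t}\bigl[\prod_{s<t}(1-d^c_s)\,d^c_t\bigr]$. The simulation does so with probability $\mathbb{E}_{\pi_t}\bigl[\prod_{s<t}(1-f^c_s)\,f^c_t\bigr]$, and these differ.

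Here is a concrete example, writing histories as ordered tuples. Take $T=3$: a small item $a$ and a large item $b$ arrive first in either order, then a large item $c$. Set $d^c_2(a,b)=0$, $d^c_2(b,a)=\tfrac12$, $d^c_3(a,b,c)=0$, $d^c_3(b,a,c)=1$, and $d^c\equiv 0$ on all other histories.
\begin{itemize}
\item All four listed histories are reached with positive probability, so no choice of off-path values rescues the argument.
\item The time-$2$ average is trivial, because the two orders have different current items and different $r$. Hence $f^c_2=d^c_2$, while $f^c_3=\tfrac12$.
\item The original first accepts at time $3$ with conditional probability $\tfrac12(1\cdot 0)+\tfrac12(\tfrac12\cdot 1)=\tfrac14$.
\item The hazard-averaged rule does so with probability $\tfrac12(1\cdot\tfrac12)+\tfrac12(\tfrac12\cdot\tfrac12)=\tfrac38$.
\end{itemize}
Since $t=3$ is the last step and the payoff $v(c)>0$ multiplies this probability, the expected value changes. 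So the marginal matching in your final paragraph is false for the rule you define, not merely delicate.

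Your backward-induction attempt breaks for a related reason. The original algorithm's continuation value depends on $\pi_t$ through its own later decisions. An ``optimal'' continuation value depends on the instance, so it cannot define a single algorithm.

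A smaller point: $\Pi_t$ must keep the current item fixed, not just $r$. For $r=0$ the pair $(\mathcal{I}_t,r)$ does not tell you whether the current item is small or a non-record large item, and the payoff depends on which.

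The missing idea is to average the \emph{joint} first-acceptance probabilities rather than the hazards, and then realize the averaged profile by a state-dependent hazard.
\begin{itemize}
\item Let $Q_t(h_t)=\prod_{s\le t}(1-d^c_s(h_s))$ and $q_t(h_t)=Q_{t-1}(h_{t-1})\,d^c_t(h_t)$.
\item Let $\bar q_t(S,x)$ and $\bar Q_t(S)$ be their averages over uniformly random orders of $S$, with $x$ held last for $\bar q_t$.
\item For a fixed instance, the payoff of a first acceptance at time $t$ is a function $g(\mathcal{I}_t,x_t)$. The prefix order is uniform given $(\mathcal{I}_t,x_t)$, whatever the instance. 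Hence the expected payoff equals $\sum_t\mathbb{E}[\bar q_t(\mathcal{I}_t,x_t)\,g(\mathcal{I}_t,x_t)]$.
\item Define the new algorithm by the hazard $\lambda_t(S,x)=\bar q_t(S,x)/\bar Q_{t-1}(S\setminus\{x\})$, taken to be zero when the denominator vanishes. Then $\lambda_t\le 1$, because $q_t\le Q_{t-1}$ pathwise.
\item Run an induction on $t$ using $\bar Q_t(S)=\frac{1}{|S|}\sum_{x\in S}\bigl[\bar Q_{t-1}(S\setminus\{x\})-\bar q_t(S,x)\bigr]$, which holds for both algorithms. It shows the new algorithm has the same $\bar Q_t$ and $\bar q_t$, hence the same expected payoff on every instance.
\end{itemize}
You cannot shortcut this by declaring $\bar q_t$ itself to be the pathwise first-acceptance probability, since along a fixed path $\sum_{s\le t}\bar q_s$ can exceed $1$.

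This joint-probability reading of $d^c_t$ is also the one the paper implicitly uses in the proof of Lemma~\ref{lem:cardinal-to-ordinal}, where the payoff is written linearly in $d^c_t$. The paper's own justification of the observation is only a one-line remark that the payoff depends on the instance and the record indicator. It does not resolve this point either.
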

Informally, the observation says that an algorithm cannot improve its competitive ratio by using arrival order information. This is clear because the payoff only depends on the instance and whether the current item is a record. Thus, we will drop $\pi_t$ from the decision functions going forward.

An \textit{ordinal} algorithm is defined similarly to a cardinal algorithm, but the history replaces the partially observed instance $\mathcal{I}_t$ with the number of large items $k$ seen so far, which we will show to be sufficiently informative for a particular family of instances. \Cref{lem:ramsey} essentially constructs an infinite subset of instances from our previously defined family where any ordinal algorithm makes the same decisions as a cardinal algorithm with a loss.
\begin{definition}\label{def:ordinalalg}
    For an ordinal algorithm, define a decision function $d^o_t(\Lambda) = d^o_t(k,r)$ as the probability that the algorithm accepts an item for the first time given a history $\Lambda$.
\end{definition}
Note that we drop the arrival order in \Cref{def:ordinalalg} from the history for the same reason as in \Cref{obs:arrivalorder}.

\begin{lemma}\label{lem:ramsey}
    For every $\epsilon>0$, there is an infinite set $R \subseteq H_x$ and a function $d^o_t(k,r)$ such that, for every $t\in[T]$, $k\in\{0,\ldots,t\}$, $r\in\{0,1\}$, and $\mathcal{I} \in \mathcal{H}_x(R, k+1)$,
    \begin{align*}
        d^c_t(\mathcal{I},r)=d^o_t(k,r)\pm O(\epsilon).
    \end{align*}
\end{lemma}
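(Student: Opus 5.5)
We will use the standard hypergraph Ramsey theorem on the finitely many "colour classes" induced by the decision function. The state space of relevant tuples $(t,k,r)$ is finite: $t\in[T]$, $k\in\{0,\dots,t\}$, $r\in\{0,1\}$. We handle the tuples one at a time and shrink the value set at each step.

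First we discretize. Fix $\epsilon>0$ and partition $[0,1]$ into $M=\lceil 1/\epsilon\rceil$ intervals $J_1,\dots,J_M$ of length at most $\epsilon$. Fix a tuple $(t,k,r)$. Then every $(k+1)$-element subset $V\subseteq H_x$, viewed as an instance in $\mathcal{H}_x(H_x,k+1)$, receives the colour $c(V)=j$ such that $d^c_t(I(V),r)\in J_j$.

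Two remarks make this colouring well defined. By \Cref{obs:arrivalorder} we may assume the decision function ignores $\pi_t$. The record bit $r$ is part of the input, and we only colour the pairs $(V,r)$ that are consistent with it; for instance, $r=1$ requires $k\ge1$. If some configuration is inconsistent, its colour is irrelevant and can be fixed arbitrarily.

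Since $H_x$ is countably infinite, infinite Ramsey's theorem for $(k+1)$-uniform hypergraphs with $M$ colours gives an infinite $R'\subseteq H_x$ all of whose $(k+1)$-subsets have the same colour $j$. We then define $d^o_t(k,r)$ to be the midpoint of $J_j$. For every $V\subseteq R'$ with $|V|=k+1$ this gives $|d^c_t(I(V),r)-d^o_t(k,r)|\le\epsilon$.

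Next we iterate. Enumerate the finitely many tuples $(t,k,r)$ as $\tau_1,\dots,\tau_L$, with $L\le 2(T+1)^2$. Set $R_0=H_x$, and for each $i$ apply the Ramsey step above inside $R_{i-1}$ to obtain an infinite $R_i\subseteq R_{i-1}$ that is monochromatic for $\tau_i$. Because $R_L\subseteq R_i$ for every $i$, monochromaticity for $\tau_i$ is inherited by subsets. Hence $R=R_L$ works simultaneously for all tuples, with error at most $\epsilon$, which is $O(\epsilon)$.

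One subtlety must be resolved before Ramsey's theorem can be applied. The history at time $t$ reveals only the $k$ large items seen so far, together with the small items, whose value is the aggregate $\min V/T$. The decision function therefore depends on the observed partial instance $\mathcal{I}_t$, not on the whole of $V$. If the large items seen correspond to a sub-collection of $V$, then $d^c_t$ is determined by a $(k+1)$-set consisting of the observed large values plus the minimum, which fixes the small-item value. This is exactly the indexing in the statement, $\mathcal{I}\in\mathcal{H}_x(R,k+1)$.

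So I would colour the $(k+1)$-sets formed by $\{$small aggregate$\}\cup\{$observed large values$\}$. I would also make sure that "record" is read off consistently: $r=1$ means the current large item is the maximum of the observed large values. No further ordering information is needed, because by \Cref{obs:arrivalorder} positions can be averaged away.

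The main obstacle is this modelling step: ensuring that the cardinal decision is genuinely a function of an unordered $(k+1)$-subset of $H_x$ plus $(t,r)$, so that the colouring is well defined. Randomization and the arrival order are already absorbed by averaging, but the step needs care about which element plays the role of the small-item aggregate. If the aggregate does not have to be the minimum of the observed set, I would fix this by colouring ordered structures. A finite set of reals carries its canonical increasing order, so roles such as "minimum is the small item" and "maximum is the current record" are determined by the set itself. This is precisely what makes the Ramsey argument go through.
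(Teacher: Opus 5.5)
Your proposal is correct and is essentially the paper's argument: nested applications of the infinite Ramsey theorem to the $(k+1)$-subsets of $H_x$, coloured by an $\epsilon$-discretization of $d^c_t$, with monochromaticity inherited by the final infinite set $R$. The only difference is bookkeeping. The paper does one Ramsey step per hyperedge size $m=k+1$ with the vector colour $\bigl(\lfloor d^c_t(\mathcal{I},r)/\epsilon\rfloor\bigr)_{(t,r)}$, while you do one step per tuple $(t,k,r)$ with a scalar colour; this is equivalent, since the vector colouring is just the product of your scalar colourings.
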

\begin{proof}
    We construct nested infinite sets
    \begin{align*}
        H_x=R_0\supseteq R_1\supseteq R_2\supseteq \cdots \supseteq R_{T+1}.
    \end{align*}
    Let $\epsilon > 0$. At each step $m\in\{1,\ldots,T+1\}$, we will construct an infinite hypergraph where the vertices are from the set $R_{m-1}$ and the hyperedges are all elements in the set $\mathcal{H}_x(R_{m-1},m)$. We will color the hyperedges using vectors. Color each hyperedge $\mathcal{I} \in \mathcal{H}_x(R_{m-1},m)$ by the vector
    \begin{align*}
        \textbf{c} = \left(\left\lfloor \frac{d^c_t(\mathcal{I},r)}{\epsilon}\right\rfloor\right)_{(t,r)\in[T]\times\{0,1\}}
    \end{align*}
    where $c_{t,r}$ will be the corresponding component of $\textbf{c}$. This is a finite coloring because there are only finitely many pairs $(t,r)$ and each probability lies in $[0,1]$. By the infinite Ramsey theorem, there is an infinite monochromatic set $R_m\subseteq R_{m-1}$ for this coloring. For each $k=m-1$, set $d^o_t(k,r) = \epsilon \cdot c_{t,r}$. Thus, we have that $d^o_t(k,r)-O(\epsilon) < d^c_t(\mathcal{I},r) \leq d^o_t(k,r)+O(\epsilon)$ for all $\mathcal{I} \in \mathcal{H}_x(R_{m},m)$. Since $R_{T+1}\subseteq R_m$ for every $m$, the same property holds on the final infinite set $R=R_{T+1}$ for all $t\in[T]$, $k\le t$, and $r \in \{0,1\}$.
\end{proof}

We are now ready to prove \Cref{thm:upperbound}, which is the main theorem of this section. This will allow us to upper bound the optimal competitive ratio of the knapsack secretary problem with the expected value of the optimal prior-aware ordinal algorithm on an ordinal objective with a specific bad distribution chosen for the instances.

\begin{lemma}\label{lem:cardinal-to-ordinal}
    Consider algorithms designed for the hard instances described in \Cref{sec:hard-instance}. Let $\ALG^C$ be a $\rho$-competitive cardinal algorithm with respect to the cardinal objective. Then there exists an ordinal algorithm $\ALG^O$ such that
    \begin{align*}
        \mathbb{E}[\tilde{v}(\ALG^O(\mathcal{I}))] \geq \rho-O\left(\frac1T\right)
    \end{align*}
    for any hard instance $\mathcal{I}$. That is, $\ALG^O$ is $(\rho - O(1/T))$-competitive with respect to the ordinal objective $\tilde{v}$.
    % In particular, by taking $u \ge T$ and then choosing $\epsilon \leq 1/T^2$, the loss can be written as $O(1/T)$.
\end{lemma}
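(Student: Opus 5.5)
The plan is to combine \Cref{lem:intermediate} with \Cref{lem:ramsey}, choosing the parameters $x$ and $\epsilon$ as functions of $T$ so that every error term is $O(1/T)$.

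\emph{Parameter choice.} Fix $T=B$ and choose $x\ge T$. Then the hypothesis $|V|-1\le B\le x$ of \Cref{lem:intermediate} holds, and its loss term $O(1/x)$ is $O(1/T)$. Choose $\epsilon\le 1/T^2$ in \Cref{lem:ramsey}.

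\emph{Step 1 (passing to the ordinal objective).} By \Cref{obs:arrivalorder} we may assume $\ALG^C$ ignores the arrival order, so it is given by decision functions $d^c_t(\mathcal{I}_t,r)$. \Cref{lem:intermediate} applies to every instance $I(V)$ with $V\subseteq H_x$. In both of its cases it gives $\mathbb{E}[\tilde v(\ALG^C(\mathcal{I}))]\ge \rho-O(1/x)$, since $\tilde v$ is exactly the indicator of capturing $v^*$, respectively the fraction of small items collected.

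\emph{Step 2 (building $\ALG^O$).} Apply \Cref{lem:ramsey} to get an infinite set $R\subseteq H_x$ and functions $d^o_t(k,r)$. The hard instances we retain are those with $V\subseteq R$, which is still an infinite family for every $N$. Define $\ALG^O$ to accept for the first time at step $t$ with probability $d^o_t(k,r)$. After a first acceptance, both algorithms behave identically: they stop after a large item, and they keep taking small items after a small one.

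\emph{Step 3 (coupling the two algorithms).} Fix an instance $\mathcal{I}=I(V)$ with $V\subseteq R$ and a fixed arrival order. At every step $t$ the observed partial instance $\mathcal{I}_t$ corresponds to a value set in $\mathcal{H}_x(R,k+1)$, namely the $k$ large values seen so far plus the aggregate small value. Hence the first-acceptance probabilities of the two algorithms differ by $O(\epsilon)$ at every step. Couple their randomness step by step, using one uniform variable per step compared against each algorithm's acceptance probability. The two runs diverge only if some step's uniform falls between the two probabilities, and a union bound over the $T$ steps puts this at probability $O(T\epsilon)$. Since $\tilde v\in[0,1]$, this gives
\[
\mathbb{E}[\tilde v(\ALG^O(\mathcal{I}))]\ge \mathbb{E}[\tilde v(\ALG^C(\mathcal{I}))]-O(T\epsilon)\ge \rho-O(1/x)-O(T\epsilon)=\rho-O(1/T).
\]

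\emph{Main obstacle.} The delicate step is Step 3, specifically whether the aggregate small value really belongs to the partial value set, and whether the set seen at time $t$ lies in the correct Ramsey layer $\mathcal{H}_x(R,k+1)$. This has to hold consistently across instances of different $N$ that share a prefix. I would handle it by treating the observed set as the $k$ large values together with the minimum value, which is always in $R$. Then the Ramsey layer with $m=k+1$ covers it regardless of the eventual $N$.

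The competitiveness claim is only asserted over the refined family $V\subseteq R$. That is enough for the later argument, because it preserves all cluster sizes $N$.
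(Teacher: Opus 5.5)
Your proof is correct and follows the same skeleton as the paper. \Cref{lem:intermediate} turns $\rho$-competitiveness into $\mathbb{E}[\tilde v(\ALG^C(\mathcal{I}))]\ge\rho-O(1/x)$. \Cref{lem:ramsey} supplies $d^o_t(k,r)$, the $O(\epsilon)$ per-step discrepancy accumulates into an $O(T\epsilon)$ loss, and taking $x$ of order $T$ and $\epsilon=1/T^2$ finishes the argument.

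The one real difference is how you accumulate the error.

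\emph{The paper's route.} It expands $\mathbb{E}[\tilde v]$ as a sum over the step of first acceptance and the observed prefix, and substitutes $d^c=d^o\pm O(\epsilon)$ term by term. It treats the two optimal cases separately. In the large-optimal case only $r=1$ decisions enter, giving error $T\cdot O(\epsilon)$. In the small-optimal case only the $k=0$ decisions enter, giving error $\frac{T+1}{2}O(\epsilon)$. This expansion is linear in $d$ and implicitly reads $d^c_t$ as the probability that the first acceptance happens at step $t$.

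\emph{Your route.} Your coupling instead reads $d^c_t$ as the acceptance probability conditional on no earlier acceptance. You bound the divergence probability by a union bound and handle both cases at once using only $\tilde v\in[0,1]$. It is slightly cruder in constants, but needs neither the case split nor the payoff expansion.

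Two further points in your proposal make precise what the paper leaves implicit:
\begin{itemize}
\item For prefixes with $t=k$, you take the decision on a $(k+1)$-set to be a function of the set minus its minimum. This is still a legitimate finite coloring, so the Ramsey step applies.
\item You restrict to instances with $V\subseteq R$. The paper's proof, like yours, only controls such instances even though its statement says ``any hard instance.'' This suffices for \Cref{thm:upperbound}, because every value of $N$ is still realized in $R$ and the ordinal bound depends only on the law of $N$.
\end{itemize}
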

\begin{proof}
    Since $\ALG^C$ is $\rho$-competitive with respect to the cardinal objective on our hard family of instances, we can apply \Cref{lem:intermediate} which implies that for any hard instance $\mathcal{I}$,
    \begin{align*}
        \mathbb{E}[\tilde{v}(ALG^C(\mathcal{I}))] \geq \rho - O\left(\frac{1}{x}\right).
    \end{align*}
    Now we apply \Cref{lem:ramsey} to the decision function of $\ALG^C$ to obtain a decision function $d^o_t(k,r)$ corresponding to an ordinal algorithm $\ALG^O$, i.e., the algorithm accepts an item for the first time at time $t$ having seen $k$ large items and the record indicator $r$ with probability $d^o_t(k,r)$.
    
    % Let $E_{t,k}$ be the event that $k$ large items have been seen by time $t$. This event does not assert that the current item is a record; the record information is carried separately by $r$.

    Suppose the optimal solution is a large item. The ordinal objective is then the probability of accepting the best large item. Such a payoff can occur only when the current item is a record large item. Also recall that $E_{t,k}$ is the event that we observe $k$ large items by time $t$. Let $L_t$ be the event that a best large item arrives at time $t$. Thus, we have that
    \begin{align*}
        \mathbb{E}[\tilde{v}(\ALG^C(\mathcal{I}))]
        &= \sum_{t = 1}^T\sum_{\mathcal{I}_{t}} d^c_t(\mathcal{I}_{t},1)\Pr[L_t \wedge \mathcal{I}_t \text{ is observed}]\\
        &= \sum_{t = 1}^T\sum_{k = 1}^{t}(d^o_t(k,1)\pm O(\epsilon))\Pr[L_t \wedge E_{t,k}]\\
        &= \mathbb{E}[\tilde{v}(\ALG^O(\mathcal{I}))]\pm T \cdot O(\epsilon).
    \end{align*}
    When we sum over $\mathcal{I}_t$, we mean over every possible observation of the instance $\mathcal{I}$ at time $t$. Note that the observation events form a partition for each fixed $t$ because only one realization can occur. Thus, the second sum collapses, and we are essentially summing over the probability of stopping on the best large item at time $t$. Thus, we obtain our final equality where the total additive error is multiplied by $T$.

    Now suppose the small items are optimal. Then payoff is obtained only by accepting small items. If the algorithm first accepts a small item at time $t$, it can accept the remaining small items and obtain a fraction of $(T-t+1)/T$. Therefore,
    \begin{align*}
        \mathbb{E}[\tilde{v}(\ALG^C(\mathcal{I}))]
        &=\sum_{t = 1}^T \sum_{\mathcal{I}_{t}} \frac{T-t+1}{T} d^c_t(\mathcal{I}_t,0)\Pr[\text{no large items in } \mathcal{I} \wedge \mathcal{I}_t \text{ is observed}]\\
        &=\sum_{t = 1}^T \frac{T-t+1}{T}(d^o_t(0,0)\pm O(\epsilon))\Pr[\text{no large items in } \mathcal{I} \wedge E_{t,0}]\\
        &=\mathbb{E}[\tilde{v}(\ALG^O(\mathcal{I}))]\pm \frac{T+1}{2} \cdot O(\epsilon).
    \end{align*}
    By setting $x=T$ and $\epsilon=1/T^2$, we obtain our desired result.
\end{proof}

\begin{theorem}\label{thm:upperbound}
    Consider only algorithms designed for the hard instances described in \Cref{sec:hard-instance}. We will let $\ALG^C$ denote an cardinal algorithm and $\ALG^O_\mathcal{D}$ denote an ordinal prior-aware algorithm where $\mathcal{D}$ is a distribution over instances. Let $\mathcal{D}$ be any such distribution. Then, we have that
    \begin{align*}
        \sup_{\ALG^C} \inf_\mathcal{I} \frac{\mathbb{E}\left[v\left(\ALG^C(\mathcal{I})\right)\right]}{v\left(\OPT(\mathcal{I})\right)} \leq \sup_{\ALG^O_\mathcal{D}} \mathbb{E}_{\mathcal{I} \sim \mathcal{D}}\left[\tilde{v}\left(\ALG^O_\mathcal{D}(\mathcal{I})\right)\right] + O\left(\frac{1}{T}\right).
    \end{align*}
\end{theorem}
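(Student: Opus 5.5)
The plan is to chain \Cref{lem:cardinal-to-ordinal} with \Cref{obs:ubdist}. Let $\rho^\ast$ denote the left-hand side. Fix any $\delta>0$ and choose a cardinal algorithm $\ALG^C$ that is $(\rho^\ast-\delta)$-competitive with respect to the cardinal objective on every hard instance. Such an algorithm exists by the definition of the supremum, and the infimum over instances is exactly what makes it $(\rho^\ast-\delta)$-competitive.

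Applying \Cref{lem:cardinal-to-ordinal} with $\rho=\rho^\ast-\delta$ produces an ordinal algorithm $\ALG^O$, whose decision function is $d^o_t(k,r)$, satisfying
\begin{align*}
\mathbb{E}[\tilde v(\ALG^O(\mathcal{I}))]\ge \rho^\ast-\delta-O(1/T)
\end{align*}
for every hard instance $\mathcal{I}$. I will take the instances restricted to the Ramsey set $R$ from \Cref{lem:ramsey}, which is the family the lemma actually controls. The $O(1/T)$ constant comes from setting $x=T$ and $\epsilon=1/T^2$ there, so it is independent of $\delta$ and of the algorithm.

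Next, fix a distribution $\mathcal{D}$ supported on these instances; in practice it is indexed by $N$, with representatives taken from $\mathcal{H}_x(R,N+1)$. Since the pointwise bound holds for every $\mathcal{I}$ in the support, averaging gives
\begin{align*}
\mathbb{E}_{\mathcal{I}\sim\mathcal{D}}[\tilde v(\ALG^O(\mathcal{I}))]\ge \rho^\ast-\delta-O(1/T).
\end{align*}
This is exactly the first inequality of \Cref{obs:ubdist}, specialized to ordinal algorithms. For the second inequality, note that $\ALG^O$ is in particular a prior-aware ordinal algorithm that simply ignores $\mathcal{D}$. Its expected value is therefore at most $\sup_{\ALG^O_\mathcal{D}}\mathbb{E}_{\mathcal{I}\sim\mathcal{D}}[\tilde v(\ALG^O_\mathcal{D}(\mathcal{I}))]$. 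Letting $\delta\to0$ then yields the claim.

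The main obstacle is the quantifier order. The Ramsey set $R$ depends on the algorithm $\ALG^C$, whereas $\mathcal{D}$ is supposed to be fixed before the algorithm is chosen. I would handle this by letting $\mathcal{D}$ be specified only through the law of $N$ (the ordinal data). The ordinal objective and the ordinal algorithm's behavior depend only on $N$ and the relative order, not on which values are drawn from $R$. Hence $\mathbb{E}_{\mathcal{D}}[\tilde v(\ALG^O)]$ is the same for any choice of representatives, and the right-hand side is well defined and independent of $\ALG^C$. I would make this invariance explicit: for fixed $N$, every $\mathcal{I}\in\mathcal{H}_x(R,N+1)$ induces the same distribution over the triples $(t,k,r)$ and over the position of the maximal large item.
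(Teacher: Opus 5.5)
Your proposal is correct and takes essentially the same route as the paper. The paper also chains \Cref{lem:cardinal-to-ordinal} (which gives an ordinal algorithm that is $(\rho^\ast - O(1/T))$-competitive for $\tilde v$ on every hard instance) with \Cref{obs:ubdist} (average over $\mathcal{D}$, then enlarge to prior-aware ordinal algorithms). Your $\delta$-slack for a possibly unattained supremum, and your remark that an ordinal algorithm's $\tilde v$-performance depends only on $N$ (so the algorithm-dependent Ramsey set $R$ creates no quantifier-order problem), are details the paper leaves implicit, and you handle them correctly.
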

\begin{proof}
Let the optimal competitive ratio be
\begin{align*}
    \rho^* = \sup_{\ALG^C} \inf_\mathcal{I} \frac{\mathbb{E}[v(\ALG^C(\mathcal{I}))] }{v(\OPT(\mathcal{I}))}.
\end{align*}
By \Cref{lem:cardinal-to-ordinal}, we know there exists an ordinal algorithm $\ALG^O$ such that
\begin{align*}
    \mathbb{E}[\tilde{v}(\ALG^O(\mathcal{I}))] \geq \rho^* - O\left(\frac{1}{T}\right)
\end{align*}
for any hard instance $\mathcal{I}$. It follows that
\begin{align*}
    \sup_{\ALG^O} \inf_{\mathcal{I}} \mathbb{E}[\tilde{v}(\ALG^O(\mathcal{I}))] \geq \rho^* - O\left(\frac{1}{T}\right).
\end{align*}
By \Cref{obs:ubdist}, we know that
\begin{align*}
\sup_{\ALG^O} \inf_\mathcal{I} \mathbb{E} \left[\tilde{v}\left(\ALG^O(\mathcal{I})\right)\right] \leq \sup_{\ALG^O_{\mathcal{D}}} \mathbb{E}_{\mathcal{I} \sim \mathcal{D}}\left[\tilde{v}\left(\ALG_{\mathcal{D}}^O(\mathcal{I})\right)\right].
\end{align*}
Thus, we have shown our desired result.
\end{proof}

In the remaining sections, we will analyze the upper bound proven in \Cref{thm:upperbound}. Additionally, when mentioning an algorithm, we will always be referring to a prior-aware ordinal algorithm, and the competitive ratio will always be with respect to an ordinal objective.

\subsection{Defining the Dynamic Program}\label{sec:dp}
We now construct a dynamic program to compute the competitive ratio achieved by the optimal algorithm with a given prior distribution $\mathcal{D}$ on the instances. Since an algorithm's decisions do not depend on the specific values of the items, any distribution $\mathcal{D}$ on instances can just be considered a distribution on the total number $N$ of large items instead. Define $V_t(k,r)$ as value of the dynamic program (i.e., the best achievable competitive ratio) at time $t$ given the number of large items $k$ that have been observed and an indicator $r$ for whether the current item is a record. Recall that $E_{t,k}$ is the event that $k$ large items have been observed by time $t$. The value function can be computed as the maximum between the stopping utility $y_t(k,r)$ and the expected continuation utility $\mathbb{E}_{N\sim\mathcal{D}}[C_t(k,N)\mid E_{t,k}]$ when $t \geq 1$:
\begin{align*}
    V_t(k,r)=\max\left\{y_t(k,r),\mathbb{E}_{N\sim\mathcal{D}}[C_t(k,N)\mid E_{t,k}]\right\}.
\end{align*}
The stopping payoff, i.e., the payoff if we choose to accept an item for the first time, is
\begin{align*}
    y_t(k,r)=
    \begin{cases}
        \frac{T-t+1}{T}\Pr_{N\sim\mathcal{D}}[N=0\mid E_{t,k}] & r=0\\
        \mathbb{E}_{N\sim\mathcal{D}}\left[\frac{k}{N}\,\middle|\,E_{t,k}\right] & r=1
    \end{cases}
\end{align*}
where the $r=1$ case indicates that the current item is a record and the $r=0$ case indicates the current item is not a record. When $r=1$, the utility we obtain is the probability that the current record is the overall best item:
\begin{align*}
    \Pr[\text{current item is overall best}\mid \text{current item is relative best}]
    =\frac{1/N}{1/k}=\frac{k}{N}.
\end{align*}
Note that if $r=1$, it must be the case that $N > 0$. When $r=0$, we only obtain utility if the current item is a small item and no large items arrive. In this case, the utility we obtain is exactly the fraction of small items we can still accept.

The continuation utility $C_t(k,n)$ is the utility we obtain from not accepting at time $t$ conditioned on $E_{t,k}$ and $N = n$. To compute it, we consider every state that is possible from an item arriving at time $t+1$. There will be $n-k$ large items and $T-t-(n-k)$ small items remaining. If the next item is large, then by rank symmetry it is a record with probability $1/(k+1)$ and not a record with probability $k/(k+1)$. Hence,
\begin{align*}
    C_t(k,n)
    ={}& \frac{n-k}{(k+1)(T-t)}V_{t+1}(k+1,1)
    +\frac{k(n-k)}{(k+1)(T-t)}V_{t+1}(k+1,0)+\frac{T-t-(n-k)}{T-t}V_{t+1}(k,0).
\end{align*}
The first term is when we transition to a state where the next item is a record, the second term is when transition to a state where the next item is large but not a record, and the last term is when we transition to a state where the next item is small. Finally, we write the boundary conditions which occur at the last timestep:
\begin{align*}
    V_T(k,1)&=1,\\
    V_T(k,0)&=
    \begin{cases}
        1/T & k=0\\
        0 & k\ne 0
    \end{cases}.
\end{align*}
Before any item is observed, we define the optimal value (i.e., the starting point of the DP) as
\begin{align*}
    V_0=\mathbb{E}_{N\sim\mathcal{D}}\left[\frac{N}{T}V_1(1,1)+\frac{T-N}{T}V_1(0,0)\right].
\end{align*}

\subsection{Simplifying the Dynamic Program}\label{sec:simplified-dp}
In this section, we use the beta-binomial distribution with $\alpha=1$ for the number of large items to significantly simplify the dynamic program. This choice is used for the rest of the section and removes the dependence on $k$ for the case where at least one large item has appeared. Here, $V_t^{(1)}$ represents the best achievable competitive ratio conditioned on having seen at least one large item by time $t$. Note that since $k \ge 1$, the small items and non-record large items do not give any utility and will be treated the same.

\begin{lemma}\label{lem:dp-large-simplified}
    Let $\mathcal{D}=\mathrm{BetaBin}(T,1,\beta)$. Conditional on having seen at least one large item, the optimal competitive ratio is described by the one-dimensional dynamic program for $t \geq 1$:
    \begin{align*}
        V_t^{(1)}=
        \begin{cases}
            \max\left\{\frac{\beta+t}{\beta+T},C_t\right\} & t<T\\
            1 & t=T
        \end{cases}
    \end{align*}
    where
    \begin{align}\label{eq:dynprog}
        C_t=\begin{cases}
            \frac{1}{\beta+t+1}V_{t+1}^{(1)}+\frac{\beta+t}{\beta+t+1}C_{t+1} & t<T\\
            0 & t=T
        \end{cases}.
    \end{align}
\end{lemma}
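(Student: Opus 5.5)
The plan is a backward induction on $t$ that uses beta-binomial conjugacy from \Cref{subsec:betabin}. The goal is to show that for every $k\ge 1$ the two value functions $V_t(k,1)$ and $V_t(k,0)$ of the dynamic program in \Cref{sec:dp} do not depend on $k$. The claimed recursion then follows with $V^{(1)}_t=V_t(k,1)$ and $C_t=V_t(k,0)$.

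\textbf{Step 1: one-step predictive probability.} From the posterior $N-k\mid E_{t,k}\sim\mathrm{BetaBin}(T-t,1+k,\beta+t-k)$, the mean formula gives
\begin{align*}
\mathbb{E}\left[\tfrac{N-k}{T-t}\,\middle|\,E_{t,k}\right]=\frac{1+k}{\beta+t+1}.
\end{align*}
This is the probability that the item at time $t+1$ is large. Combining it with rank symmetry (a new large item is a record with probability $1/(k+1)$), the transition probabilities in $C_t(k,N)$, after averaging over $N$, become:
\begin{itemize}
\item the next item is a record with probability $\frac{1}{\beta+t+1}$, which does not depend on $k$;
\item the next item is a non-record large item with probability $\frac{k}{\beta+t+1}$;
\item the next item is small with probability $\frac{\beta+t-k}{\beta+t+1}$.
\end{itemize}

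\textbf{Step 2: the record stopping payoff.} Let $s_t(k)=\mathbb{E}[k/N\mid E_{t,k}]$. I will show $s_t(k)=\frac{\beta+t}{\beta+T}$ for all $k\ge1$ by backward induction. At $t=T$ we have $N=k$, so $s_T(k)=1$. For $t<T$, the current record is the overall best exactly when it is still the best after step $t+1$ and that later state's record is the overall best. If the next item is small, the state becomes $(t+1,k)$. If the next item is large, the current record survives with probability $k/(k+1)$ and the state becomes $(t+1,k+1)$. This gives
\begin{align*}
s_t(k)=\frac{\beta+t-k}{\beta+t+1}\,s_{t+1}(k)+\frac{1+k}{\beta+t+1}\cdot\frac{k}{k+1}\,s_{t+1}(k+1).
\end{align*}
Substituting the inductive hypothesis $s_{t+1}\equiv\frac{\beta+t+1}{\beta+T}$, the coefficients add to $\frac{\beta+t}{\beta+t+1}$, so $s_t(k)=\frac{\beta+t}{\beta+T}$.

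Justifying this recursion rigorously is the step I expect to be the main obstacle. It needs the tower property over the next arrival, together with the fact that, conditional on the next item being large, its rank among the $k+1$ large items seen so far is uniform and independent of $N$. If this conditioning becomes delicate, the fallback is to verify $\mathbb{E}[k/(k+J)]$ directly with Beta integrals.

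\textbf{Step 3: $k$-independence of the value functions.} For $k\ge1$ we have $\Pr[N=0\mid E_{t,k}]=0$, so $y_t(k,0)=0$. Hence stopping on a non-record is worthless, and $V_t(k,0)$ equals the expected continuation utility. I will show by backward induction that $V_t(k,1)=V^{(1)}_t$ and $V_t(k,0)=C_t$ for all $k\ge1$.

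At $t=T$ the boundary conditions give $V_T(k,1)=1$ and $V_T(k,0)=0=C_T$. For $t<T$, the inductive hypothesis says both non-record successors, a small item or a non-record large item, have value $C_{t+1}$, and the record successor has value $V^{(1)}_{t+1}$. Using the probabilities from Step 1, the continuation equals
\begin{align*}
\frac{1}{\beta+t+1}V^{(1)}_{t+1}+\frac{\beta+t}{\beta+t+1}C_{t+1},
\end{align*}
which is \eqref{eq:dynprog}. Then $V_t(k,1)=\max\{s_t(k),C_t\}=\max\{\frac{\beta+t}{\beta+T},C_t\}$ by Step 2, which gives the formula for $V^{(1)}_t$.
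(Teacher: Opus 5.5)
Your proposal is correct and follows essentially the same route as the paper. Both arguments rest on the $k$-independent record probability $\frac{1}{\beta+t+1}$ from conjugacy and rank symmetry, the record stopping payoff $\frac{\beta+t}{\beta+T}$, and collapsing the non-record states (small or non-record large) into one continuation value because their stopping payoff is zero once $k\ge1$. The paper obtains $\frac{\beta+t}{\beta+T}$ as the telescoping product $\prod_{j=t}^{T-1}\bigl(1-\frac{1}{\beta+j+1}\bigr)$ of no-future-record probabilities, which is exactly your Step~2 recursion unrolled. Your explicit inductions in Steps~2 and~3 make rigorous what the paper asserts in a line.
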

\begin{proof}
    Since the beta-binomial distribution parameter $\alpha=1$, the probability that the next item is large conditioned on $E_{t,k}$ is
    \begin{align*}
        \frac{\alpha + k}{(\alpha + k) + (\beta + t - k)} = \frac{k+1}{\beta+t+1}.
    \end{align*}
    If the next item is large, rank symmetry implies that it is a record with probability $1/(k+1)$. Therefore, the probability that the next item is a record large item conditioned on $E_{t,k}$ is
    \begin{align*}
        p_t = \frac{k+1}{\beta+t+1} \cdot \frac{1}{k+1} = \frac{1}{\beta+t+1}.
    \end{align*}
    The probability of transitioning to a non-record state, i.e., the next item to arrive is a non-record large item or a small item, occurs with the remaining probability $1-p_t$.
    
    The stopping utility when the current item is a record conditioned on $E_{t,k}$ also loses its dependence on $k$:
    \begin{align*}
        \mathbb{E}\left[\frac{k}{N}\,\middle|\,E_{t,k}\right]
        &=\Pr[\text{no future item is a record} \mid E_{t,k}]\\
        &=\prod_{j=t}^{T-1}\left(1-p_t\right)
        =\frac{\beta+t}{\beta+T}
    \end{align*}
    where the product telescopes. Let $V_t^{(1)}$ be the optimal competitive ratio achievable at time $t$ when the current item is a record, and let $C_t$ be the continuation utility at time $t$. The dynamic program can only stop when the current item is a record; otherwise, it will choose to continue. Hence, the structure of the dynamic program follows from the probability $p_t$ of a record arriving and the stopping utility. The boundary conditions are immediate at $t=T$. Thus, for $k\ge1$, we have shown that the optimal competitive ratio is described by a dynamic program that only depends on the time $t$.
\end{proof}

We now take the dynamic program for the case where at least one large item has appeared from \Cref{lem:dp-large-simplified} and use it to construct a dynamic program for the case where no large item has appeared by time $t$.
\begin{lemma}\label{lem:dp-no-large-discrete}
    Let $\mathcal{D}=\mathrm{BetaBin}(T,1,\beta)$ and $p_t=\frac{1}{\beta+t+1}$. Conditional on having seen no large items, the optimal competitive ratio is described by the following dynamic program for $t \geq 1$:
    \begin{align*}
        V_t^{(0)}=\max\left\{\left(\frac{\beta+t}{\beta+T}\right)\left(\frac{T-t+1}{T}\right),\ p_tV_{t+1}^{(1)}+(1-p_t)V_{t+1}^{(0)}\right\}.
    \end{align*}
    Moreover, the competitive ratio before the process begins can be written as
    \begin{align}\label{eq:no-item-val}
        V_0^{(0)} = \max_\mathcal{\mathcal{T}} \underset{\tau \sim \mathcal{T}}{\mathbb{E}} \left[ p_0V_{1}^{(1)} + \left(\frac{\beta }{\beta + T}\right) \left(\frac{T - \tau + 1}{T}\right) + \sum_{t=1}^{\tau-1} \left(\frac{\beta}{\beta + t}\right) p_t V_{t+1}^{(1)} \right].
    \end{align}
    where $\mathcal{T}$ is a distribution over $[T]$ and $\tau$ is a random variable representing when small items start being accepted if no large items have been seen.
\end{lemma}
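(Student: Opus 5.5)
This lemma is the $k=0$ specialization of the general dynamic program of \Cref{sec:dp}. I would start from $V_t(0,0)$ and compute the stopping and continuation terms under $\mathrm{BetaBin}(T,1,\beta)$, using the conjugate update of \Cref{subsec:betabin}.

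\textbf{Stopping payoff.} With $k=0$ and $r=0$, the stopping payoff is $\frac{T-t+1}{T}\Pr[N=0\mid E_{t,0}]$. By conjugacy, the number of remaining large items $J$ is distributed as $\mathrm{BetaBin}(T-t,1,\beta+t)$. Its mass at zero is
\begin{align*}
\Pr[J=0]=\frac{B(1,\beta+T)}{B(1,\beta+t)}=\frac{\beta+t}{\beta+T}.
\end{align*}
Equivalently, this is the telescoping product $\prod_{j=t}^{T-1}(1-p_j)$, exactly as in the proof of \Cref{lem:dp-large-simplified}. This yields the first term in the max.

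\textbf{Continuation.} Given $E_{t,0}$, the next item is large with probability $\frac{\alpha+0}{\alpha+\beta+t}=\frac{1}{\beta+t+1}=p_t$. A first large item is automatically a record. From that point on we are in the regime $k\ge 1$, whose optimal value is $V_{t+1}^{(1)}$ by \Cref{lem:dp-large-simplified}; that lemma applies because its value does not depend on $k$. With probability $1-p_t$ the next item is small, and the state is $(t+1,0,0)$, whose value is $V_{t+1}^{(0)}$. This gives the recursion. At the boundary, $V_T^{(0)}=1/T\cdot\Pr[N=0\mid E_{T,0}]$, which matches the stopping formula at $t=T$.

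\textbf{Unrolling into \eqref{eq:no-item-val}.} At time $0$ the first item is a record large item with probability $p_0=\frac{1}{\beta+1}$ (take $\alpha=1$, $k=0$, $t=0$), contributing $p_0V_1^{(1)}$. An optimal policy in the $k=0$ branch is a (possibly randomized) stopping time $\tau$. Since no new information arrives on that branch except ``still no large item'', $\tau$ is just a distribution $\mathcal{T}$ over $[T]$, and we may maximize over such distributions. For a deterministic $\tau$, I would expand the recursion along the no-large path. The probability of surviving to time $t+1$ without a large item is
\begin{align*}
\prod_{j=0}^{t}(1-p_j)=\frac{\beta}{\beta+t+1}.
\end{align*}
\begin{itemize}
\item The first large item appears at time $t+1\le\tau$ with probability $\frac{\beta}{\beta+t}p_t$, and then contributes $V_{t+1}^{(1)}$.
\item We reach $\tau$ with no large item with probability $\frac{\beta}{\beta+\tau}$. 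Stopping there yields $\frac{\beta+\tau}{\beta+T}\cdot\frac{T-\tau+1}{T}$, so the contribution is $\frac{\beta}{\beta+T}\frac{T-\tau+1}{T}$.
\end{itemize}
Summing these terms gives the bracket in \eqref{eq:no-item-val}. Taking expectations over $\tau\sim\mathcal{T}$ and maximizing gives $V_0^{(0)}$; the maximum is attained by a point mass, so randomization does not hurt.

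\textbf{Main obstacle.} The delicate step is the index bookkeeping in the sum. The summation range is $t=1,\dots,\tau-1$ while $p_0$ is separated out, and the products must telescope with consistent offsets. In particular, the time-$0$ survival factor $1-p_0=\frac{\beta}{\beta+1}$ has to combine with the later factors to give $\frac{\beta}{\beta+t}$. I would verify this by induction on $\tau$, checking it against the one-step recursion.
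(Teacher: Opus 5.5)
Your proposal is correct and follows the paper's proof: you specialize the general DP to $k=0$, get the stopping payoff from $\Pr[N=0\mid E_{t,0}]=\prod_{j=t}^{T-1}(1-p_j)=\frac{\beta+t}{\beta+T}$, note that the first large item is always a record (arriving with probability $p_t$), and unroll along the no-large path using survival probability $\frac{\beta}{\beta+t}$. Your beta-function computation of $\Pr[J=0]$ and your explicit telescoping of $\prod_{j=0}^{t-1}(1-p_j)$ only fill in steps the paper asserts without derivation.
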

\begin{proof}
    Similar to \Cref{lem:dp-large-simplified}, we have that the probability that the next item is a record large item conditioned on $E_{t,0}$ is
    \begin{align*}
        p_t = \frac{1}{\beta + t +1}
    \end{align*}
    If no large item has appeared by time $t$ and the algorithm accepts the current small item, then it should continue accepting small items. Thus, $\frac{T-t+1}{T}$ fraction of small items will be accepted. Note that the first large item to arrive is always a record. Thus, the probability that no large item appears in the instance conditioned on $E_{t,0}$ is
    \begin{align*}
        \Pr[N=0\mid E_{t,0}] &= \Pr[\text{no future item is a record} \mid E_{t,0}]\\
        &=\prod_{j=t}^{T-1}\left(1-p_t\right)
        =\frac{\beta+t}{\beta+T}.
    \end{align*}
    We only obtain utility if the small items are optimal, so the expected stopping utility will be the product of the fraction of small items accepted and the probability that no future item is a record. If the algorithm continues, then a record appears at time $t+1$ with probability $p_t$; otherwise, the process remains in a state where no large items have been seen. Hence, the recurrence for $V_t^{(0)}$ follows.

    To write the optimal competitive ratio before the process begins in the stopping-time form, we expand the recurrence. In the first timestep, either a large item or a small item arrives. With probability $p_0$, a large item arrives and we obtain utility $V_1^{(1)}$, and with probability $1-p_0$, a small item arrives. Note $\tau$ is the time at which we start accepting small items if no large items have arrived. Thus, we only need to expand $V_{1}^{(0)}$ up to time $\tau$. The terminal term is the product of the payoff $\frac{T-\tau+1}{T}$ obtained when stopping at time $\tau$ and the probability that no large item has arrived at any time. The $\frac{\beta}{\beta+t} p_t$ term is the probability that no large item has appeared by time $t$ and then at time $t+1$, the first large item arrives and gives utility $V_{t+1}^{(1)}$.
\end{proof}

\subsection{Convergence of the Dynamic Program to the Limiting ODE}\label{sec:dp-to-ode}

In this section, we will show that $V_t^{(1)}$ converges to an ordinary differential equation as $T \to \infty$. By solving the differential equation, we obtain a closed-form solution which is the \textit{limiting} ODE for $V_t^{(1)}$. From this, we can also obtain the limiting value of $V_0^{(0)}$ as $T \to \infty$ by substituting in our closed-form solution. Since our dynamic programming solution converges to this limiting value, there exists a sufficiently large $T$ such that we get arbitrarily close to this value.

As $T \to \infty$, the problem changes into a continuous-time version of the problem. Thus, we will reparameterize our dynamic program so we can take its limit with respect to $T$. Let $\beta=\gamma T$, $s=t/T$, and $h = 1/T$. Here, $s$ is the time and $h$ is the step size where both have been scaled to be on $[0,1]$. We will show that the limiting differential equation for the continuation utility in \Cref{lem:dp-large-simplified} is
\begin{align}\label{eq:large-ode}
    \frac{d C_s}{d s}
    =\frac{1}{\gamma+s}\left(C_s-\max\left\{\frac{\gamma+s}{\gamma+1},C_s\right\}\right),\ C_1=0.
\end{align}
using tools from numerical analysis. Define
\begin{align*}
    f(s,C)=\frac{1}{\gamma+s}\left(C-\max\left\{\frac{\gamma+s}{\gamma+1},C\right\}\right).
\end{align*}
Then, $\frac{dC_s}{ds}=f(s,C_s)$ with terminal condition $C_1=0$. We will show that $f(s, C)$ is Lipschitz in $C$ for all $s \in [0,1]$. Additionally, it will be useful to show that $C_s$ (or $C(s)$) is Lipschitz in $s$ where $s \in [0,1]$. This will allow us to show that numerical methods such as the Euler method will converge to the solution of the ordinary differential equations. As we will see from \Cref{lem:backward-euler-dp}, our dynamic program can be viewed as the backward Euler method for the limiting ODE \eqref{eq:large-ode}, a concept from numerical analysis.

\begin{lemma}\label{lem:ode-lipschitz}
    The function $f(s, C)$ is Lipschitz on $C$ for all $s \in [0,1]$.
\end{lemma}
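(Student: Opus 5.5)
We need to show $|f(s,C_1)-f(s,C_2)|\le L|C_1-C_2|$ for some constant $L$ uniform in $s\in[0,1]$. The plan is to rewrite $f$ in a form where the max disappears, and then use the $1$-Lipschitzness of the map $u\mapsto\min\{u,0\}$.

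First I would simplify the expression. Set $g(s)=\frac{\gamma+s}{\gamma+1}$. Using the identity $C-\max\{g,C\}=\min\{C-g,0\}$, we get
\begin{align*}
    f(s,C)=\frac{1}{\gamma+s}\min\left\{C-g(s),\,0\right\}.
\end{align*}
The map $u\mapsto \min\{u,0\}$ is $1$-Lipschitz, since it is the minimum of two $1$-Lipschitz functions. Translation by the fixed quantity $g(s)$ preserves Lipschitz constants. Hence for fixed $s$,
\begin{align*}
    |f(s,C_1)-f(s,C_2)|\le \frac{1}{\gamma+s}\,|C_1-C_2|.
\end{align*}

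Next I would make the constant uniform in $s$. The parameter $\gamma=\beta/T>0$ is fixed, since $\beta>0$ is a shape parameter of the beta distribution. For $s\in[0,1]$ we have $\gamma+s\ge\gamma$, so $L=1/\gamma$ works for every $s$. This gives a global Lipschitz constant in $C$ on $[0,1]\times\mathbb{R}$. That is exactly what is needed for uniqueness of the solution to \eqref{eq:large-ode} and for the standard convergence results for Euler-type methods.

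The only step requiring any care is ensuring that the constant does not blow up near $s=0$. This is exactly why $\gamma>0$ matters: if $\gamma=0$ the bound $1/s$ would be unbounded as $s\to 0$. I would therefore state explicitly that $\gamma>0$ is fixed. I would also note, for later use, that $f$ is continuous in $s$, because $g$ and $1/(\gamma+s)$ are continuous on $[0,1]$.
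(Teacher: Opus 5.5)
Your proof is correct and essentially identical to the paper's: the paper writes $C-\max\{g(s),C\}=-(g(s)-C)^+$ and uses that the positive part is $1$-Lipschitz, which is the same as your $\min\{C-g(s),0\}$ rewriting. It arrives at the same bound $\frac{1}{\gamma+s}|C_1-C_2|\le\frac{1}{\gamma}|C_1-C_2|$, uniform in $s\in[0,1]$ because $\gamma>0$.
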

\begin{proof}
    For $C_1,C_2\in\mathbb{R}$,
    \begin{align*}
        |f(s,C_1)-f(s,C_2)|
        &= \frac{1}{\gamma+s}\left|\left(\frac{\gamma+s}{\gamma+1}-C_1\right)^+-\left(\frac{\gamma+s}{\gamma+1}-C_2\right)^+\right|\\
        &\le \frac{1}{\gamma+s}|C_1-C_2|
        \le \frac1\gamma |C_1-C_2|
    \end{align*}
    where $x^+ = \max\{x,0\}$.
\end{proof}

\begin{lemma}\label{lem:s-lipschitz}
    Let $C(s)$ be the solution to ODE \eqref{eq:large-ode}. Then $C(s)$ is Lipschitz on $s$ where $s \in [0,1]$.
\end{lemma}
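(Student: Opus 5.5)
We will show that $C$ is Lipschitz on $[0,1]$ by bounding its derivative uniformly, $|C'(s)| \le L$. The mean value theorem then gives $|C(s_1)-C(s_2)|\le L|s_1-s_2|$. Existence and uniqueness of the solution $C$ on $[0,1]$ with terminal condition $C_1=0$ comes from \Cref{lem:ode-lipschitz}. Indeed, $f$ is Lipschitz in $C$ with constant $1/\gamma$ uniformly in $s$, and it is continuous in $s$ because $\gamma>0$ keeps $\gamma+s$ away from zero. So Picard--Lindel\"of, integrated backward from $s=1$, gives a unique $C^1$ solution on the whole interval.

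To bound the derivative, we first observe from the definition of $f$ that
\begin{align*}
    C'(s) = f(s,C(s)) = -\frac{1}{\gamma+s}\left(\frac{\gamma+s}{\gamma+1}-C(s)\right)^+ \le 0 .
\end{align*}
So it suffices to bound $\left(\frac{\gamma+s}{\gamma+1}-C(s)\right)^+$ from above. Since $C$ is nonincreasing and $C(1)=0$, we have $C(s)\ge 0$ on $[0,1]$. Hence the positive part is at most $\frac{\gamma+s}{\gamma+1}\le 1$. This yields
\begin{align*}
    |C'(s)| \le \frac{1}{\gamma+s}\cdot\frac{\gamma+s}{\gamma+1} = \frac{1}{\gamma+1},
\end{align*}
so $C$ is $\frac{1}{\gamma+1}$-Lipschitz. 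A cruder bound of $1/\gamma$ would also suffice.

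The only step needing care is $C(s)\ge 0$, that is, making sure the solution cannot grow large in absolute value going backward. We get this from the sign of $f$: because $f\le 0$ everywhere, $C$ is nonincreasing in $s$, and integrating backward from $C(1)=0$ keeps $C\ge 0$. For an upper bound, which some later steps may also need, we argue as follows. Whenever $C(s)\ge \frac{\gamma+s}{\gamma+1}$, we have $f=0$. Therefore $C$ can never strictly exceed $\max_{u\in[s,1]}\frac{\gamma+u}{\gamma+1}\le 1$. This gives $0\le C\le 1$.
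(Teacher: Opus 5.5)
Your proof is correct and follows the same route as the paper: bound $|f(s,C(s))|$ uniformly in $s$ and apply the mean value theorem. The paper obtains the constant $1/\gamma$, while your argument gives the sharper $1/(\gamma+1)$. Your derivation of $C\ge 0$ from the sign of $f$ and the terminal condition $C(1)=0$ is also more self-contained than the paper's justification, which asserts $C\in[0,1]$ because $C$ ``represents a competitive ratio''; that interpretation implicitly relies on the DP-to-ODE convergence, which the paper proves later using this very lemma.
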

\begin{proof}
    We know that $C(s) \in [0,1]$ because it represents a competitive ratio. Let $s_1,s_2 \in [0,1]$ where $s_1 < s_2$. We also know that $C(s)$ is differentiable on $(0,1)$, so we can apply the mean value theorem. Thus, there exists a point $u \in (s_1,s_2)$ such that 
    \begin{align*}
        f(u,C(u)) = \frac{C(s_2) - C(s_1)}{s_2-s_1}.
    \end{align*}
    By rearranging and applying the absolute value to both sides, we obtain the following:
    \begin{align*}
        | C(s_2) - C(s_1) | = |f(u,C(u))| \cdot |s_2-s_1|.
    \end{align*}
    Observe that $| f(u,C(u)) | \leq \frac{1}{\gamma}$ because $\left| C(s)-\max\left\{\frac{\gamma+s}{\gamma+1},C(s)\right\} \right| \leq 1$. Thus, we have that
    \begin{align*}
        | C(s_2) - C(s_1) | \leq \frac{1}{\gamma} \cdot |s_2-s_1|.
    \end{align*}
\end{proof}

\begin{lemma}\label{lem:backward-euler-dp}
    Computing an approximate value for ODE \eqref{eq:large-ode} using the backward Euler method is equivalent to computing recurrence \eqref{eq:dynprog} for the dynamic program in \Cref{lem:dp-large-simplified}.
\end{lemma}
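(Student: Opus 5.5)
The plan is to rewrite recurrence \eqref{eq:dynprog} directly in the scaled variables and read off the backward Euler step for \eqref{eq:large-ode}. Since the ODE carries a terminal condition $C_1=0$, it is integrated from $s=1$ down to $s=0$. With step size $h=1/T$ and grid points $s_t=t/T$, one backward Euler step from $s_{t+1}$ to $s_t=s_{t+1}-h$ has the form
\begin{align*}
    C_{t}=C_{t+1}-h\,f(s_{t+1},C_{t+1}).
\end{align*}
Here $f$ is evaluated at the new endpoint $s_{t+1}$ of the step, taken in the direction of integration. It is therefore the implicit step in the backward-in-time sense, and the quantity $C_{t+1}$ it needs is already known from the previous step.

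First, I will substitute $\beta=\gamma T$ into \Cref{lem:dp-large-simplified}. This gives
\begin{align*}
    \frac{\beta+t}{\beta+T}=\frac{\gamma+s_t}{\gamma+1},
    \qquad
    \frac{1}{\beta+t+1}=\frac{h}{\gamma+s_{t+1}}.
\end{align*}
Next, I will rearrange \eqref{eq:dynprog}, $C_t=\frac{1}{\beta+t+1}V^{(1)}_{t+1}+\frac{\beta+t}{\beta+t+1}C_{t+1}$, as
\begin{align*}
    C_{t+1}-C_t=\frac{1}{\beta+t+1}\left(C_{t+1}-V^{(1)}_{t+1}\right)=\frac{h}{\gamma+s_{t+1}}\left(C_{t+1}-V^{(1)}_{t+1}\right).
\end{align*}
Then, for $t+1<T$, I will use $V^{(1)}_{t+1}=\max\left\{\frac{\gamma+s_{t+1}}{\gamma+1},C_{t+1}\right\}$. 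Substituting this shows the right-hand side equals exactly $h\,f(s_{t+1},C_{t+1})$, which is precisely the backward Euler update displayed above.

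Finally, I will check the boundary. The condition $C_T=0$ matches $C(1)=0$. At $t+1=T$ we have $V^{(1)}_T=1=\max\{1,0\}=\max\left\{\frac{\gamma+1}{\gamma+1},C_T\right\}$, so the same identity holds at the first step. By induction downward in $t$, the DP values $C_t$ coincide exactly with the backward Euler iterates on the grid $\{s_t\}$.

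The main point needing care is the direction and indexing convention: the evaluation point must be $s_{t+1}$, not $s_t$. I will state explicitly that, because the ODE is integrated from $s=1$ toward $s=0$, the scheme evaluating $f$ at $(s_{t+1},C_{t+1})$ is the one being identified. No approximation is involved, so the identity is exact.
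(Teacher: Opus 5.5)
Your proposal is correct and is essentially the paper's argument: both reduce to the identity $C_t = C_{t+1} - h\,f(s_{t+1},C_{t+1})$ under $\beta=\gamma T$, $s=t/T$, $h=1/T$. You rearrange the DP into Euler form where the paper substitutes the Euler step into the DP, and you also verify the terminal step $V_T^{(1)}=1=\max\{1,C_T\}$, which the paper only asserts. One wording slip: in the direction of integration (from $s=1$ down to $s=0$) the point $s_{t+1}$ is the \emph{starting} point of the step, so the scheme is explicit Euler run backward in time---equivalently, the implicit-Euler relation $C_{t+1}=C_t+h\,f(s_{t+1},C_{t+1})$ read in forward time---but your displayed update is correct either way.
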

\begin{proof}
    Let $h=1/T$ be the step size. Recall that $\beta=\gamma T$ and $s=t/T$. Applying the backward Euler method to ODE \eqref{eq:large-ode} starting from $s=1$ gives
    \begin{align*}
        C_s
        &=C_{s+h}-h\, f(s+h,C_{s+h})\\
        &=C_{s+h}-\frac{h}{\gamma+s+h}\left(C_{s+h}-\max\left\{\frac{\gamma+s+h}{\gamma+1},C_{s+h}\right\}\right).
    \end{align*}
    By substituting $s$, $h$, and $\gamma$, we return to the original discrete indices. This gives recurrence \eqref{eq:dynprog} from the dynamic program in \Cref{lem:dp-large-simplified}:
    \begin{align*}
        C_t
        &=\frac{\beta+t}{\beta+t+1}C_{t+1}
        +\frac{1}{\beta+t+1}\max\left\{\frac{\beta+t+1}{\beta+T},C_{t+1}\right\}.
    \end{align*}
    where $V_{t+1}^{(1)} = \max\left\{ \frac{\beta+t+1}{\beta+T},C_{t+1} \right\}$. The base cases clearly follow. Thus, we have our desired result.
\end{proof}

We can use the dynamic program from \Cref{lem:dp-large-simplified} and create a piecewise-linear interpolation, i.e., at each time $t$, evaluate the dynamic program and form a linear interpolation between successive points at $t$ and $t+1$. We will show that the piecewise-linear interpolation converges to ODE \eqref{eq:large-ode} as the step size decreases. Going forward, we will reparameterize the dynamic program to use $s$, $h$, and $\gamma$ as was shown in \Cref{lem:backward-euler-dp}. To avoid confusion, we will use $D_s$ to denote the piecewise-linear interpolation evaluated at $s$. Note that this is the same as evaluating the reparameterized dynamic program at time $s$. Let $C(s)$ be the solution of ODE \eqref{eq:large-ode} evaluated at $s$.

\begin{theorem}\label{thm:dp-converges-ode}
    Let $D$ be the piecewise-linear interpolation formed from the continuation utility of the dynamic program in \Cref{lem:dp-large-simplified}. Then $D$ converges uniformly on $[0,1]$ to the solution $C$ of ODE \eqref{eq:large-ode} as $h \to 0$.
\end{theorem}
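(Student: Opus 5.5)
The argument is the standard convergence proof for a one-step method: bound the local truncation error, then propagate it with a discrete Gr\"onwall inequality. By \Cref{lem:backward-euler-dp}, the grid values $D_{s_j}$, with $s_j = jh$, satisfy the backward recursion
\begin{align*}
    D_{s_j} = D_{s_{j+1}} - h\, f(s_{j+1}, D_{s_{j+1}}), \qquad D_1 = 0 .
\end{align*}
The solution $C$ of ODE \eqref{eq:large-ode} has the same terminal value $C(1)=0$. Define the grid error $e_j = D_{s_j} - C(s_j)$, so that $e_T = 0$. I will show $\max_j |e_j| = O(h)$ and then pass from the grid to all of $[0,1]$.

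\textbf{Step 1: local error.} Integrate the ODE over $[s_j, s_{j+1}]$:
\begin{align*}
    C(s_j) = C(s_{j+1}) - \int_{s_j}^{s_{j+1}} f(u, C(u))\,du .
\end{align*}
Subtracting this from the recursion gives
\begin{align*}
    e_j = e_{j+1} - h\bigl(f(s_{j+1}, D_{s_{j+1}}) - f(s_{j+1}, C(s_{j+1}))\bigr) + \delta_j,
    \qquad
    \delta_j = \int_{s_j}^{s_{j+1}} \bigl(f(u, C(u)) - f(s_{j+1}, C(s_{j+1}))\bigr)\,du .
\end{align*}
To show $|\delta_j| = O(h^2)$ it suffices that $u \mapsto f(u, C(u))$ is Lipschitz on $[0,1]$. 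This follows from three facts:
\begin{itemize}
    \item $f$ is Lipschitz in $C$ with constant $1/\gamma$ (\Cref{lem:ode-lipschitz});
    \item $C$ is Lipschitz in $s$ with constant $1/\gamma$ (\Cref{lem:s-lipschitz});
    \item $f$ is Lipschitz in $s$ for fixed $C$, because $1/(\gamma+s)$ and $(\gamma+s)/(\gamma+1)$ are smooth on $[0,1]$ with $\gamma > 0$, and $x \mapsto x^+$ is $1$-Lipschitz.
\end{itemize}
Combining these, $|f(u,C(u)) - f(s_{j+1}, C(s_{j+1}))| \le L h$ for $u \in [s_j, s_{j+1}]$, with $L$ depending only on $\gamma$. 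Hence $|\delta_j| \le L h^2$.

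\textbf{Step 2: error propagation.} Apply \Cref{lem:ode-lipschitz} to the middle term of the error recursion:
\begin{align*}
    |e_j| \le \Bigl(1 + \frac{h}{\gamma}\Bigr)|e_{j+1}| + L h^2 .
\end{align*}
Unroll this from $e_T = 0$ backwards over at most $T = 1/h$ steps:
\begin{align*}
    |e_j| \le L h^2 \sum_{i=0}^{T-1} \Bigl(1 + \frac{h}{\gamma}\Bigr)^i \le L h\, \gamma \bigl(e^{1/\gamma} - 1\bigr) = O(h).
\end{align*}

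\textbf{Step 3: off the grid.} On each interval $[s_j, s_{j+1}]$, $D$ is the linear interpolation of its endpoint values, so it lies between $D_{s_j}$ and $D_{s_{j+1}}$. Two bounds then suffice:
\begin{itemize}
    \item by \Cref{lem:s-lipschitz}, $C$ varies by at most $h/\gamma$ on the interval;
    \item the increment $|D_{s_j} - D_{s_{j+1}}| = h\,|f(s_{j+1}, D_{s_{j+1}})|$ is at most $h/\gamma$, using the same bound $|f| \le 1/\gamma$ as in \Cref{lem:s-lipschitz} (valid since $D \in [0,1]$).
\end{itemize}
Together with Step 2, this gives $\sup_{s \in [0,1]} |D_s - C(s)| = O(h)$, which tends to $0$ as $h \to 0$.

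\textbf{Main obstacle.} The delicate point is the regularity used in Step 1. The max in $f$ means $C$ is only piecewise $C^1$, so a Taylor-expansion estimate of the truncation error is not available. That is why Step 1 relies only on Lipschitz continuity of $f(u, C(u))$. This is enough, because the integral form of the local error needs only that modulus of continuity, not a second derivative.

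A minor bookkeeping issue is that \Cref{lem:backward-euler-dp} evaluates $f$ at the right endpoint $s+h$. The local-error and Gr\"onwall estimates above are written with $f$ evaluated at $s_{j+1}$ to match, so the scheme is effectively an explicit step taken backwards in time and needs no implicit solve.
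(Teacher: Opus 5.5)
Your proposal is correct and follows essentially the same route as the paper: it bounds the integral-form local truncation error by $O(h^2)$ using Lipschitz continuity of $u \mapsto f(u,C(u))$, which rests on \Cref{lem:ode-lipschitz} and \Cref{lem:s-lipschitz}, and then unrolls $|e_j| \le (1+h/\gamma)|e_{j+1}| + O(h^2)$ from the zero terminal error to get an $O(h)$ bound. Your Step~3, which controls the interpolant between grid points, covers a detail the paper's proof leaves implicit (it only tracks the error at grid points), so your version is slightly more complete.
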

\begin{proof}
    To show that $D$ converges to $C$ on $[0,1]$ as $h \to 0$, we will compute the total accumulated error at any time $s$ and show that it approaches $0$ as $h \to 0$. Define $\varepsilon_s = D_s - C(s)$. By definition, we have
    \begin{align*}
        D_s & = D_{s+h} - h f(s+h, D_{s + h}),\\
        C(s) & = C(s + h) - \int_{s}^{s + h} f(u, C(u)) du.
    \end{align*}
    It follows that
    \begin{align*}
        \varepsilon_s =\ & \varepsilon_{s + h} + \int_{s}^{s + h} f(u, C(u)) du - h f(s+h, D_{s + h})\\
        =\ & \varepsilon_{s + h} + \int_{s}^{s + h} \left[f(u, C(u)) - f(s+h, C(s+h) \right]du - h \left[f(s+h, D_{s + h}) - f(s+h, C(s+h))\right]
    \end{align*}
    We take the absolute value of both sides of the equation and obtain the following upper bound on the accumulated error by applying \Cref{lem:ode-lipschitz}:
    \begin{align*}
        |\varepsilon_s| \leq\ & |\varepsilon_{s + h}| + \left| \int_{s}^{s + h} \left[f(u, C(u)) - f(s+h, C(s+h) \right]du \right| + h | f(s+h, D_{s + h}) - f(s+h, C(s+h)) |\\
        \leq\ &  |\varepsilon_{s + h}| + \left| \int_{s}^{s + h} \left[f(u, C(u)) - f(s+h, C(s+h) \right]du \right| + \frac{h}{\gamma} | \varepsilon_{s+h} |.
    \end{align*}
    It remains to upper bound the integral. Observe that $C(s)$ is bounded and Lipschitz on $[0,1]$ by \Cref{lem:s-lipschitz}. Additionally, observe that $\frac{1}{\gamma + s}$ and $\frac{\gamma + s}{\gamma + 1}$ are bounded and Lipschitz on $[0,1]$. Since the family of Lipschitz functions bounded on $[0,1]$ are closed under the $\max$ operation, difference operation, and product operation, it follows that $s \mapsto f(s,C(s))$ is Lipschitz on $s$ where $L$ is the corresponding Lipschitz constant. Thus, we can bound the integral as follows:
    \begin{align*}
        \left| \int_{s}^{s + h} \left[f(u, C(u)) - f(s+h, C(s+h) \right]du \right| & \leq \int_{s}^{s + h} L|u - (s + h)| du\\
        & = \left[ L\left((s+h)u - \frac{u^2}{2}\right) \right]^{u=s+h}_{u=s}\\
        & = \frac{Lh^2}{2}
    \end{align*}
    It follows that
    \begin{align*}
        |\varepsilon_s| \leq \left(1 + \frac{h}{\gamma}\right) |\varepsilon_{s + h}| + \frac{Lh^2}{2} = \left(1 + \frac{h}{\gamma}\right) |\varepsilon_{s + h}| + O(h^2).
    \end{align*}
    We will unroll the recurrence to obtain the following:
    \begin{align*}
        |\varepsilon_s| \leq \left(1+\frac{h}{\gamma}\right)^{\frac{1}{h}(1-s)} |\varepsilon_1| + \left[ \sum_{k=0}^{\frac{1}{h}(1-s-h)} \left(1 + \frac{h}{\gamma}\right)^k \right] O\left(h^2\right)
    \end{align*}
    Since $e_1=0$, we have that
    \begin{align*}
        |e_t| \leq O\left(h^2\right) \sum_{k=0}^{\frac{1}{h}(1-s-h)} \left(1 + \frac{h}{\gamma}\right)^k \leq O\left(h^2\right) \frac{1}{h} \left(1 + \frac{h}{\gamma}\right)^\frac{1}{h} \leq O\left(h\right) e^{1/\gamma} = O\left(h\right).
    \end{align*}
    The last inequality follows from the fact that $(1+x) \le e^x$. Thus, the accumulated error approaches $0$ as $h \to 0$ proving uniform convergence.
\end{proof}

We obtain the following corollary which says our dynamic program can get arbitrarily close to the solution of the ODE as we decrease the step size. This allows us to focus on analyzing the solution of the ODE.
\begin{corollary}\label{cor:converge}
    For every $\varepsilon > 0$, there exists a step size $h$ such that $| D_0 - C(0) | < \varepsilon$.
\end{corollary}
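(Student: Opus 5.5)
The corollary follows from \Cref{thm:dp-converges-ode}, specialised to the single point $s=0$. That theorem gives uniform convergence of the piecewise-linear interpolation $D$ to the ODE solution $C$ on $[0,1]$ as $h\to 0$. In particular $\sup_{s\in[0,1]}|D_s-C(s)|\to 0$, so the pointwise error at $s=0$ also vanishes.

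The argument has three steps.

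\emph{Step 1: read off an explicit rate.} The proof of \Cref{thm:dp-converges-ode} bounds the accumulated error by $|\varepsilon_s|\le O(h)\,e^{1/\gamma}$ uniformly in $s$. Here the hidden constant depends only on the Lipschitz constant $L$ of $s\mapsto f(s,C(s))$ and on $\gamma$. Both are fixed once $\gamma>0$ is fixed, by \Cref{lem:ode-lipschitz} and \Cref{lem:s-lipschitz}. So there is a constant $K=K(\gamma)$ with $|D_0-C(0)|\le Kh$.

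\emph{Step 2: pick the step size.} Given $\varepsilon>0$, choose an integer $T>K/\varepsilon$ and set $h=1/T$. Then $|D_0-C(0)|\le K/T<\varepsilon$.

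\emph{Step 3: check the grid.} The point $s=0$ lies on the discretisation grid $\{t/T\}$ for every $T$. Hence $D_0$ is exactly the reparameterised DP value $C_0$ at $t=0$, and no interpolation error enters at this point.

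The only delicate issue is that $h$ must be of the form $1/T$ with $T$ an integer, and that $\beta=\gamma T$ must scale with $T$ so that the DP being interpolated is the one analysed in \Cref{lem:backward-euler-dp}. I would state explicitly that $\gamma$ is held fixed while $T\to\infty$, so that $K$ does not depend on $h$.
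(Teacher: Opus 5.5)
Your proposal is correct and matches the paper's approach. The paper gives no separate proof and treats the corollary as an immediate consequence of the uniform convergence in \Cref{thm:dp-converges-ode}, evaluated at the grid point $s=0$. Reading the explicit $O(h)\,e^{1/\gamma}$ rate off that theorem's proof, with $\gamma$ held fixed and $h=1/T$, is a harmless refinement of the same argument.
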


\subsection{Solving the ODE}\label{sec:solving-ode}
In this section, we solve the system defined by $V_s^{(1)}=\max\left\{\frac{\gamma+s}{\gamma+1},C_s\right\}$ and ODE \eqref{eq:large-ode} for the continuation utility $C_s$. Note that going forward, we will only be discussing the continuous formulation of the problem. We will use standard tools from ordinary differential equations.

\begin{lemma}\label{lem:v1-ode-solution}
    Suppose $0<\gamma\le \frac{1}{e-1}$, and let $s^*=\frac{1+\gamma}{e}-\gamma$.
    Then $s^*\in[0,1]$, and solving ODE \eqref{eq:large-ode} gives the following solution for $V_s^{(1)}$:
    \begin{align*}
        V_s^{(1)}=\max\left\{\frac{\gamma+s}{\gamma+1},\frac{1}{e}\right\}
        =
        \begin{cases}
            \frac{1}{e} & s\le s^*\\
            \frac{\gamma+s}{\gamma+1} & s>s^*
        \end{cases}.
    \end{align*}
\end{lemma}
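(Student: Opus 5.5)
The plan is to solve ODE \eqref{eq:large-ode} backward from the terminal condition $C_1=0$. There are two regimes: one where the stopping payoff $\frac{\gamma+s}{\gamma+1}$ dominates inside the $\max$, and one where the continuation value $C_s$ dominates. I will construct an explicit candidate solution on each regime and glue the two at $s^*$. Uniqueness of the solution then follows from Picard--Lindel\"of, because $f(s,C)$ is Lipschitz in $C$ uniformly in $s\in[0,1]$ by \Cref{lem:ode-lipschitz} and is continuous. Hence it suffices to exhibit one continuous, piecewise $C^1$ function that satisfies the ODE and the terminal condition.

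\textbf{Step 1: the range of $s^*$.} We have $s^*\ge 0$ iff $\frac{1+\gamma}{e}\ge\gamma$ iff $\gamma\le\frac{1}{e-1}$. Also $s^*\le 1$ iff $1+\gamma\le e(1+\gamma)$, which always holds.

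\textbf{Step 2: the regime near $s=1$.} Write $u=\gamma+s$ and assume $C_s\le \frac{u}{\gamma+1}$. The ODE becomes linear:
\[
C' - \frac{C}{u} = -\frac{1}{\gamma+1}, \qquad\text{equivalently}\qquad \Bigl(\frac{C}{u}\Bigr)' = -\frac{1}{(\gamma+1)u}.
\]
Integrating from $s$ to $1$ with $C_1=0$ gives
\[
C_s=\frac{\gamma+s}{\gamma+1}\ln\frac{1+\gamma}{\gamma+s}.
\]
The assumption $C_s\le\frac{\gamma+s}{\gamma+1}$ is then equivalent to $\ln\frac{1+\gamma}{\gamma+s}\le 1$, i.e.\ $s\ge s^*$. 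So this branch is self-consistent exactly on $[s^*,1]$. On that interval $V_s^{(1)}=\frac{\gamma+s}{\gamma+1}$, and at $s=s^*$ we get $C_{s^*}=\frac1e$.

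\textbf{Step 3: the regime $s<s^*$.} Take $C_s\equiv\frac1e$. For $s<s^*$ we have $\frac{\gamma+s}{\gamma+1}<\frac1e=C_s$, so the $\max$ equals $C_s$. The right-hand side $f(s,C_s)$ is therefore $0$, matching $C'=0$. The function is continuous at $s^*$, so the glued function solves the ODE on $[0,1]$, and by uniqueness it is the solution. On this interval $V_s^{(1)}=\max\{\frac{\gamma+s}{\gamma+1},C_s\}=\frac1e$.

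\textbf{Step 4: combine.} In both regimes $V_s^{(1)}=\max\{\frac{\gamma+s}{\gamma+1},\frac1e\}$. The breakpoint is $s^*$, because $\frac{\gamma+s}{\gamma+1}=\frac1e$ exactly at $s=s^*$; this gives the stated case split.

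\textbf{Main obstacle.} The only delicate point is justifying the gluing: the ODE right-hand side is merely Lipschitz, not smooth, at the switching curve. I will handle this by checking the integral form $C(s)=C(1)-\int_s^1 f(u,C(u))\,du$ directly on each piece, and then invoking uniqueness from \Cref{lem:ode-lipschitz}.
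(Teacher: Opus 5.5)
Your proposal is correct and follows the same route as the paper: integrate the linear branch backward from $C_1=0$ with the integrating factor $1/(\gamma+s)$ to get $C_s=\frac{\gamma+s}{\gamma+1}\ln\frac{1+\gamma}{\gamma+s}$, locate $s^*$ as the point where this equals $\frac{\gamma+s}{\gamma+1}$, and take $C_s\equiv 1/e$ below it. Your self-consistency check (that $C_s\le\frac{\gamma+s}{\gamma+1}$ holds exactly when $s\ge s^*$), together with the uniqueness appeal via \Cref{lem:ode-lipschitz}, makes rigorous what the paper only asserts, namely which term dominates the $\max$ on each side of $s^*$.
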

\begin{proof}
    The condition $s^*\ge0$ is equivalent to $\gamma \le \frac{1}{e-1}$, and $s^*\le1$ follows from $\gamma\ge0$. For $s > s^*$, stopping weakly dominates continuing, so $V_s^{(1)}=\frac{\gamma+s}{\gamma+1}$ and ODE \eqref{eq:large-ode} can be written as
    \begin{align*}
        \frac{d C_s}{ds}=\frac{1}{\gamma+s}\left(C_s-\frac{\gamma+s}{\gamma+1}\right),
        \qquad C_1=0.
    \end{align*}
    Multiplying both sides of the differential equation by an integrating factor $\frac{1}{\gamma+s}$ gives
    \begin{align*}
        \frac{d}{d s}\left(\frac{C_s}{\gamma+s}\right)
        =-\frac{1}{(\gamma+1)(\gamma+s)}.
    \end{align*}
    We integrate both sides from $s$ to $1$ and rearrange yielding the following:
    \begin{align*}
        \int_s^1 \frac{d}{d u}\left(\frac{C_u}{\gamma+u}\right) du
        & =- \frac{1}{\gamma+1} \int_s^1 \frac{1}{\gamma+u} du,\\
        C_s & =\frac{\gamma+s}{\gamma+1}\ln\left(\frac{\gamma+1}{\gamma+s}\right).
    \end{align*}
    Note that $C_1 = 0$, so the solution is valid. The indifference point, i.e., the time when the utility from stopping is the same as continuing, is determined by solving for $s^*$ in $C_{s^*}=\frac{\gamma+s^*}{\gamma+1}$. Rearranging gives us
    \begin{align*}
        \ln\left(\frac{\gamma+1}{\gamma+s^*}\right)=1.
    \end{align*}
    This gives us the following:
    \begin{align*}
        s^*=\frac{1+
        \gamma}{e}-\gamma.
    \end{align*}
    Observe that $\frac{\gamma+s^*}{\gamma+1}=\frac{1}{e}$. For $s \leq s^*$, continuing weakly dominates stopping, so $\frac{d C_s}{ds}=0$. Since $C_{s^*}=\frac{1}{e}$, it is clear that $C_s = \frac{1}{e}$ for all $s \leq s^*$. This proves the claimed solution for $V_s^{(1)}$.
\end{proof}

\subsection{The Case of No Large Items}\label{sec:no-large-case}
In this section, we construct a continuous analogue for $V_0^{(0)}$ defined by Equation~\eqref{eq:no-item-val} in \Cref{lem:dp-no-large-discrete} similar to how we did for $V_t^{(1)}$. From this, we analyze $F_{\gamma}(\sigma)$ which is our utility if we start accepting small items at time $\sigma \in [0,1]$ where $\sigma = \tau/T$ with parameter $\gamma$ for the distribution. To further evaluate $F_{\gamma}(\sigma)$, we use our solution for $V_s^{(1)}$ from \Cref{sec:solving-ode}.

\begin{lemma}
    The continuous analogue of Equation~\eqref{eq:no-item-val} in \Cref{lem:dp-no-large-discrete} is
    \begin{align*}
        V_0^{(0)}=\sup_\mathcal{\mathcal{S}} \underset{\sigma \sim \mathcal{S}}{\mathbb{E}} \left[ F_\gamma(\sigma) \right]
    \end{align*}
    where
    \begin{align*}
        F_\gamma(\sigma) =\int_0^\sigma \frac{\gamma}{(\gamma+s)^2}V_s^{(1)}\,ds + \frac{\gamma}{\gamma+1}(1-\sigma)
    \end{align*}
    and $\mathcal{S}$ is a distribution over $[0,1]$.    
\end{lemma}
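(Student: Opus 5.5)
We rescale Equation~\eqref{eq:no-item-val} with $\beta=\gamma T$, $s=t/T$, $h=1/T$ and $\sigma=\tau/T$, and identify the limit of each of its three terms separately. The stopping distribution $\mathcal{T}$ over $[T]$ becomes a distribution $\mathcal{S}$ over $[0,1]$ supported on the grid $h\mathbb{Z}$. Since every distribution on $[0,1]$ is a weak limit of grid-supported distributions, and $F_\gamma$ will turn out to be continuous, the maximum over $\mathcal{T}$ converges to the supremum over $\mathcal{S}$. It therefore suffices to show that, for each fixed $\tau$, the bracketed quantity in \eqref{eq:no-item-val} converges to $F_\gamma(\tau/T)$ uniformly in $\tau$. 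Uniformity lets us interchange the limit with the expectation over $\sigma$ and with the supremum.

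\textbf{The two simple terms.} The term $p_0 V_1^{(1)}$ satisfies $p_0=\frac{1}{\gamma T+1}=O(h)$ and $V_1^{(1)}\le 1$, so it vanishes. The terminal term is
\[
\frac{\beta}{\beta+T}\cdot\frac{T-\tau+1}{T}=\frac{\gamma}{\gamma+1}\left(1-\sigma+h\right),
\]
which converges uniformly to $\frac{\gamma}{\gamma+1}(1-\sigma)$.

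\textbf{The Riemann-sum term.} For the sum, we write
\[
\frac{\beta}{\beta+t}\,p_t=\frac{\gamma T}{(\gamma T+t)(\gamma T+t+1)}=h\,\frac{\gamma}{(\gamma+s)(\gamma+s+h)}.
\]
This equals $h\frac{\gamma}{(\gamma+s)^2}$ up to an $O(h^2)$ error, uniformly in $s\in[0,1]$, because $\gamma>0$ keeps the denominators bounded away from $0$. The sum is then
\[
\sum_{t=1}^{\tau-1} h\,\frac{\gamma}{(\gamma+s)^2}\,V^{(1)}_{t+1}+O(h).
\]
Here the $O(h)$ error collects at most $T$ terms, each of size $O(h^2)$ and multiplied by $V^{(1)}\le 1$.

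\textbf{Main obstacle.} The main step is replacing the discrete $V^{(1)}_{t+1}$ by the continuous $V^{(1)}_{s}$. For this we use Theorem~\ref{thm:dp-converges-ode}, which gives $D\to C$ uniformly. Since $V^{(1)}=\max\{\frac{\gamma+s}{\gamma+1},C\}$ and $\max$ is $1$-Lipschitz, the discrete $V^{(1)}_{t+1}$ is within $o(1)$ of $\max\{\frac{\gamma+s+h}{\gamma+1},C(s+h)\}$ uniformly. This is in turn within $O(h)$ of $V^{(1)}_s$, by the Lipschitz continuity of $C$ (Lemma~\ref{lem:s-lipschitz}). The resulting sum is a left Riemann sum of the bounded, Lipschitz integrand $\frac{\gamma}{(\gamma+s)^2}V^{(1)}_s$ over $[0,\sigma]$. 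Its error relative to the integral is $O(h)$, uniformly in $\sigma$, and the single missing endpoint term contributes at most another $O(h)$.

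\textbf{Conclusion.} Combining these estimates, the bracket equals $F_\gamma(\sigma)+o(1)$ uniformly in $\sigma$. Moreover, $F_\gamma$ is Lipschitz in $\sigma$, since its derivative is bounded by $\frac{1}{\gamma}+1$. This gives the required continuity, so passing to the limit in the maximum yields $V_0^{(0)}\to\sup_{\mathcal{S}}\mathbb{E}_{\sigma\sim\mathcal{S}}[F_\gamma(\sigma)]$.
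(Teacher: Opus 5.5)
Your proposal is correct and follows the same route as the paper. You rescale with $\beta=\gamma T$, $s=t/T$, $h=1/T$, $\sigma=\tau/T$, drop the vanishing $p_0V_1^{(1)}$ term, keep the terminal term $\frac{\gamma}{\gamma+1}(1-\sigma+h)$, and read the remaining sum as a Riemann sum for $\int_0^\sigma \frac{\gamma}{(\gamma+s)^2}V_s^{(1)}\,ds$. The paper simply takes the limit $h\to 0$ formally, whereas you also justify replacing the discrete $V^{(1)}_{t+1}$ by the continuous $V^{(1)}_s$ (via Theorem~\ref{thm:dp-converges-ode} and Lemma~\ref{lem:s-lipschitz}) and make the error uniform in $\tau$ so the limit commutes with the supremum over stopping distributions; the paper leaves these details implicit.
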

\begin{proof}
    To see why this is the continuous analogue, we will reparameterize the argument of the expectation in Equation~\eqref{eq:no-item-val} by letting $\beta = \gamma T$, $s=t/T$, $h=1/T$, and $\sigma = \tau/T$ similar to before:
    \begin{align*}
        & p_0V_{1}^{(1)} + \left(\frac{\beta }{\beta + T}\right) \left(\frac{T - \tau + 1}{T}\right) + \sum_{t=1}^{\tau-1} \left(\frac{\beta}{\beta + t}\right) p_t V_{t+1}^{(1)}\\
        =\ & \frac{h}{\gamma + h} V_{h}^{(1)} + \left(\frac{\gamma}{\gamma + 1}\right)\left(1-\sigma+h\right) + \sum_{\substack{t\in[\tau-1]\\s=th}} \left(\frac{\gamma}{\gamma + s}\right)\left(\frac{h}{\gamma + s + h}\right) V_{s+h}^{(1)}.
    \end{align*}
    By taking the limit as $h \to 0$, we obtain the desired result:
    \begin{align*}
        & \lim_{h \to 0} \left[ \frac{h}{\gamma + h} V_{h}^{(1)} + \left(\frac{\gamma}{\gamma + 1}\right)\left(1-\sigma+h\right) + \sum_{\substack{t\in[\tau-1]\\s=th}} \left(\frac{\gamma}{\gamma + s}\right)\left(\frac{1}{\gamma + s + h}\right) V_{s+h}^{(1)} h \right]\\
        =\ & \int_{0}^{\sigma} \frac{\gamma}{(\gamma + s)^2} V_{s}^{(1)} ds +  \left(\frac{\gamma}{\gamma + 1}\right)\left(1-\sigma\right).
    \end{align*}
    In the continuous version, $\sigma$ is a random variable representing the time on $[0,1]$ at which the algorithm starts accepting small items if no large item has appeared yet. The first term accounts for the event that the first large item appears before $\sigma$, and the second term accounts for the event that the algorithm reaches time $\sigma$ having observed no large items and starts accepting small items.
\end{proof}

\begin{lemma}\label{lem:F-closed-form}
    For $0<\gamma\le\frac{1}{e-1}$ and $s^*=\frac{1+\gamma}{e}-\gamma$,
    \begin{align*}
        F_\gamma(\sigma)=
        \begin{cases}
            \frac1e\left(\frac{\sigma}{\gamma+\sigma}\right)+\frac{\gamma}{\gamma+1}(1-\sigma), & \sigma\le s^*,\\
            \frac1e\left(\frac{s^*}{\gamma+s^*}\right)
            +\frac{\gamma}{\gamma+1}\ln\left(\frac{\gamma+\sigma}{\gamma+s^*}\right)
            +\frac{\gamma}{\gamma+1}(1-\sigma), & \sigma>s^*.
        \end{cases}
    \end{align*}
    Moreover, $F_\gamma(\sigma)$ is concave on $[0,1]$, so $\sigma$ will be taken to be deterministic.
\end{lemma}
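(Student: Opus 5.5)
The plan is to substitute the piecewise form of $V_s^{(1)}$ from \Cref{lem:v1-ode-solution} into the integral defining $F_\gamma(\sigma)$. We split at $s^*$, evaluate both pieces in closed form, and then check concavity by differentiating twice.

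\textbf{The case $\sigma\le s^*$.} Here $V_s^{(1)}=1/e$ on all of $[0,\sigma]$, so the integral is
\begin{align*}
\frac1e\int_0^\sigma \frac{\gamma}{(\gamma+s)^2}\,ds=\frac1e\left(1-\frac{\gamma}{\gamma+\sigma}\right)=\frac1e\cdot\frac{\sigma}{\gamma+\sigma}.
\end{align*}
Adding the boundary term $\frac{\gamma}{\gamma+1}(1-\sigma)$ gives the first branch.

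\textbf{The case $\sigma>s^*$.} We split the integral at $s^*$. The part on $[0,s^*]$ is the same as above with $\sigma$ replaced by $s^*$. On $[s^*,\sigma]$ we have $V_s^{(1)}=\frac{\gamma+s}{\gamma+1}$, so the integrand simplifies to $\frac{\gamma}{(\gamma+1)(\gamma+s)}$. This integrates to $\frac{\gamma}{\gamma+1}\ln\frac{\gamma+\sigma}{\gamma+s^*}$, which gives the second branch. The two branches agree at $\sigma=s^*$, so $F_\gamma$ is continuous.

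\textbf{Concavity.} Since $F_\gamma(\sigma)$ is the integral of a continuous function plus a linear term, $F_\gamma$ is $C^1$ with
\begin{align*}
F_\gamma'(\sigma)=\frac{\gamma}{(\gamma+\sigma)^2}V_\sigma^{(1)}-\frac{\gamma}{\gamma+1}.
\end{align*}
It therefore suffices to show that $\sigma\mapsto \frac{\gamma}{(\gamma+\sigma)^2}V_\sigma^{(1)}$ is nonincreasing.
\begin{itemize}
\item On $[0,s^*]$ this function is $\frac{\gamma}{e(\gamma+\sigma)^2}$, which is decreasing.
\item On $(s^*,1]$ it is $\frac{\gamma}{(\gamma+1)(\gamma+\sigma)}$, which is also decreasing.
\item At $s^*$ both expressions equal $\frac{\gamma}{e(\gamma+s^*)^2}$, using $\frac{\gamma+s^*}{\gamma+1}=\frac1e$. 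So the derivative is continuous and nonincreasing across the breakpoint.
\end{itemize}
Hence $F_\gamma$ is concave on $[0,1]$. The only step needing care is this matching at $s^*$, because concavity of each piece separately does not imply global concavity; the continuity of $F_\gamma'$ is what closes the gap.

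\textbf{Deterministic $\sigma$.} Jensen's inequality gives $\mathbb{E}[F_\gamma(\sigma)]\le F_\gamma(\mathbb{E}\sigma)$. Linearity of expectation alone would already give $\sup_{\mathcal S}\mathbb{E}[F_\gamma(\sigma)]=\sup_\sigma F_\gamma(\sigma)$. Since $F_\gamma$ is continuous on the compact interval $[0,1]$, this supremum is attained by a deterministic $\sigma$, so we may take $\sigma$ to be deterministic.
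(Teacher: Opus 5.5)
Your proof is correct and follows essentially the same route as the paper: substitute $V_s^{(1)}$, integrate on each side of $s^*$, and get concavity from a derivative that is decreasing on each piece and continuous at $s^*$. The paper phrases the last step via negative second derivatives on each piece. Your side remark that randomization cannot help a single expectation is also right; the paper invokes Jensen because concavity is what is actually needed later, when the same random $\sigma$ is scored against both $F_\gamma(\sigma)$ and $1-\sigma$.
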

\begin{proof}
    Substitute the expression for $V_s^{(1)}$ from \Cref{lem:v1-ode-solution}. If $\sigma\le s^*$, then $V_s^{(1)}=\frac{1}{e}$ throughout the integral, so
    \begin{align*}
        F_\gamma(\sigma)
        &=\int_0^\sigma \frac{\gamma}{(\gamma+s)^2}\cdot\frac1e\,ds
        +\frac{\gamma}{\gamma+1}(1-\sigma)\\
        &=\frac1e\left(\frac{\sigma}{\gamma+\sigma}\right)+\frac{\gamma}{\gamma+1}(1-\sigma).
    \end{align*}
    If $\sigma>s^*$, split the integral at $s^*$ and use $V_s^{(1)}=\frac{\gamma+s}{\gamma+1}$ on $[s^*,\sigma]$ to obtain the second expression:
    \begin{align*}
        F(\sigma) & = \int_0^{s^*} \frac{\gamma}{(\gamma + s)^2} \cdot \frac{1}{e} ds + \int_{s^*}^{\sigma} \frac{\gamma}{(\gamma + s)} \cdot \frac{1}{\gamma+1} ds + \frac{\gamma}{\gamma + 1} (1-\sigma)\\
        & = \frac{1}{e}\left(\frac{s^*}{\gamma + s^*}\right) + \frac{\gamma}{\gamma + 1}\left[\ln\left(\frac{\gamma + \sigma}{\gamma + s^*}\right)\right] + \frac{\gamma}{\gamma + 1} (1-\sigma)
    \end{align*}
    The first derivatives and second derivatives are
    \begin{align*}
        F_\gamma'(\sigma)&=\frac{\gamma}{e(\gamma+\sigma)^2}-\frac{\gamma}{\gamma+1}, &
        F_\gamma''(\sigma)&=-\frac{2\gamma}{e(\gamma+\sigma)^3}
        \qquad (\sigma \leq s^*),\\
        F_\gamma'(\sigma)&=\frac{\gamma}{\gamma+1}\left(\frac{1}{\gamma+\sigma}-1\right), &
        F_\gamma''(\sigma)&=-\frac{\gamma}{(\gamma+1)(\gamma+\sigma)^2}
        \qquad (\sigma>s^*).
    \end{align*}
    Both second derivatives are negative, and the one-sided first derivatives agree at $s^*$. Hence $F_\gamma(\sigma)$ is concave. Maximizing the expected value of this concave function over a randomized $\sigma$ is no better than maximizing over a deterministic $\sigma$. To see this, observe that $\mathbb{E}_\sigma[F_\gamma(\sigma)] \leq F_\gamma(\mathbb{E}_\sigma[\sigma])$ by Jensen's inequality.
\end{proof}

\subsection{Analytic Upper Bound Argument}\label{sec:analytic-upper-bound}
We now use the preceding characterization to obtain an upper bound below $1/e$. We use an adversarial approach where an adversary chooses between two distributions: the beta-binomial distribution with $\alpha=1$ and $\beta=\gamma T$ and the degenerate distribution with $N=0$. Before the algorithm sees a large item, these two cases are indistinguishable. If the algorithm waits until time $\sigma$ before accepting small items, then the degenerate no-large distribution gives payoff $1-\sigma$, while the beta-binomial distribution gives payoff $F_\gamma(\sigma)$. By the concavity property of $F_\gamma(\sigma)$ proven in \Cref{lem:F-closed-form} and $1-\sigma$ being linear, the optimal $\sigma$ can be taken to be deterministic. Thus, the relevant guarantee to optimize over $\sigma$ is $\min\{F_\gamma(\sigma),1-\sigma\}$ because the adversary chooses the worst case. We will show there exists a $\gamma$ such that the intersection of $F_\gamma(\sigma)$ and $1-\sigma$ optimizes the guarantee and that this guarantee is below $1/e$. Throughout this section, we will let $s^* = \frac{1+\gamma}{e} - \gamma$ as in previous sections.

\begin{lemma}\label{lem:intersection-after-sstar}
    Any intersection of $F_\gamma(\sigma)$ and $1-\sigma$ can only occur when $\sigma\ge s^*$ for $0<\gamma\le\frac{1}{e-1}$.
\end{lemma}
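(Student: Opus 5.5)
We will show that on $[0,s^*]$ we have the strict inequality $F_\gamma(\sigma) > 1-\sigma$, except possibly at the endpoint $\sigma=s^*$. Hence any point where the two functions meet must satisfy $\sigma\ge s^*$.

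Set $G(\sigma)=F_\gamma(\sigma)-(1-\sigma)$. For $\sigma\le s^*$, the closed form in \Cref{lem:F-closed-form} gives
\begin{align*}
    G(\sigma)=\frac1e\cdot\frac{\sigma}{\gamma+\sigma}+\frac{\gamma}{\gamma+1}(1-\sigma)-(1-\sigma)
    =\frac1e\cdot\frac{\sigma}{\gamma+\sigma}-\frac{1-\sigma}{\gamma+1}.
\end{align*}
This function is strictly increasing in $\sigma$: the first term increases (its derivative is $\gamma/(e(\gamma+\sigma)^2)>0$), and the second term $-(1-\sigma)/(\gamma+1)$ also increases. Consequently $G(\sigma)\le G(s^*)$ for every $\sigma\le s^*$, with strict inequality when $\sigma<s^*$.

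Next we compute $G(s^*)$ using the identity $\gamma+s^*=(1+\gamma)/e$ from \Cref{lem:v1-ode-solution}. This identity gives
\[
    \frac{s^*}{\gamma+s^*}=\frac{e s^*}{1+\gamma}, \qquad 1-s^*=(1+\gamma)\left(1-\tfrac1e\right).
\]
Substituting both into $G$,
\[
    G(s^*)=\frac{s^*}{1+\gamma}-\left(1-\tfrac1e\right)=\frac{1}{e}-\frac{\gamma}{1+\gamma}-1+\frac1e=\frac{2}{e}-1-\frac{\gamma}{1+\gamma}<0,
\]
since $2/e<1$. Therefore $G<0$ on all of $[0,s^*]$.

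This is the opposite sign from the one the plan was aiming for. On $[0,s^*]$ the linear function $1-\sigma$ lies strictly above $F_\gamma$; indeed $G(0)=-1/(\gamma+1)<0$. The two curves therefore cannot cross anywhere in $[0,s^*]$, not even at $s^*$ itself, and the conclusion holds regardless of which sign was expected. So on $[0,s^*]$ we have $G<0$, which leaves no zero of $G$ there, and every intersection has $\sigma>s^*$. The boundary case $\sigma=s^*$ is covered by the strict inequality $G(s^*)<0$.

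The main obstacle is only bookkeeping: checking that $G(s^*)<0$ holds uniformly for all $0<\gamma\le 1/(e-1)$. The explicit formula makes this immediate. If one instead wants to verify that an intersection does exist beyond $s^*$, note that $G$ is concave on $[s^*,1]$ and $G(1)=F_\gamma(1)>0$. Since $G(s^*)<0<G(1)$ and $G$ is continuous, $G$ has exactly one zero in $(s^*,1)$; concavity plus the sign change rules out a second zero. That refinement is not needed for the lemma.
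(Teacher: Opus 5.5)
Your argument is correct and is essentially the paper's (monotonicity on $[0,s^*]$ plus a comparison at the endpoint $s^*$): your value $G(s^*)=\frac{2}{e}-1-\frac{\gamma}{1+\gamma}$ equals $-\frac{(e-2)+2(e-1)\gamma}{e(\gamma+1)}$, the negative of the paper's gap. The one difference is that working with the difference $G$ makes monotonicity automatic, so you skip the paper's check that $F_\gamma'(s^*)>0$ (which needs $\gamma+1<e$). Do fix the opening sentence, which claims $F_\gamma(\sigma)>1-\sigma$ on $[0,s^*]$, the opposite of what you then prove, and drop the mid-proof remark about the ``opposite sign''.
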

\begin{proof}
    On $[0,s^*]$, the derivative of $F_\gamma(\sigma)$ with respect to $\sigma$ is
    \begin{align*}
        F_\gamma'(\sigma)=\frac{\gamma}{e(\gamma+\sigma)^2}-\frac{\gamma}{\gamma+1},
    \end{align*}
    which decreases with $\sigma$. The first derivative's value at $s^*$ is
    \begin{align*}
        F_\gamma'(s^*) = \frac{\gamma}{e\left(\frac{1+\gamma}{e}\right)^2}-\frac{\gamma}{\gamma+1} = \frac{\gamma}{(\gamma+1)^2}(e-(\gamma+1))>0,
    \end{align*}
    because $\gamma\le\frac{1}{e-1}$ implies $\gamma+1\le \frac{e}{e-1}<e$. Hence $F_\gamma(\sigma)$ is increasing on $[0,s^*]$ while $1-\sigma$ is decreasing. We will show that $1-s^* > F_\gamma(s^*)$ because this will imply that an intersection can only occur after $s^*$. It remains to compare the two functions at $s^*$. Evaluating $F_\gamma(\sigma)$ at $s^*$ gives us
    \begin{align*}
        F_\gamma(s^*)
        &=\frac1e\left(\frac{s^*}{\gamma+s^*}\right)+\frac{\gamma}{\gamma+1}(1-s^*)\\
        &=\frac{(e-1)\gamma^2+1}{e(\gamma+1)}.
    \end{align*}
    Meanwhile, evaluating $1-\sigma$ at $s^*$ gives us
    \begin{align*}
        1-s^*=(1+\gamma)\left(1-\frac1e\right).
    \end{align*}
    Taking the difference gives
    \begin{align*}
        1-s^*-F_\gamma(s^*)
        =\frac{(e-1)(\gamma+1)^2-1-(e-1)\gamma^2}{e(\gamma+1)}
        =\frac{(e-2)+2(e-1)\gamma}{e(\gamma+1)}>0.
    \end{align*}
    Therefore, the intersection cannot occur before $s^*$.
\end{proof}

\begin{figure}[htbp]
    \centering
    \makebox[\textwidth][c]{%
        \includegraphics[width=1\textwidth]{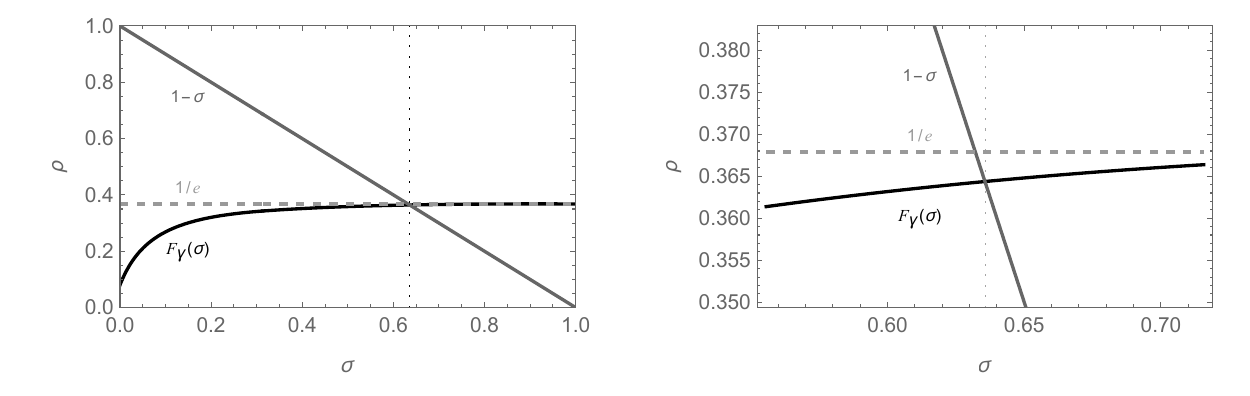}
    }
    \caption{The left plot shows $F_\gamma(\sigma)$ where $\gamma \approx 0.08704$, $1-\sigma$, and $1/e$ over the $[0,1]$ interval, and the right plot zooms in on the intersection point to show it is below $1/e$.}
    \label{fig:intersection-plot}
\end{figure}

\Cref{thm:bound-below-one-over-e} proves our main impossibility result. The idea behind argument can be viewed pictorially in \Cref{fig:intersection-plot}. The proof has two main parts. We show that $F_\gamma(\sigma)$ is increasing up to its intersection with $1-\sigma$, and we also show that this intersection point is below $1/e$.

\begin{theorem}\label{thm:bound-below-one-over-e}
    The optimal competitive ratio for the knapsack secretary problem is at most
    \begin{align*}
        0.36437 < \frac1e-0.0035.
    \end{align*}
\end{theorem}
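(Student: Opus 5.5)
The plan is to combine \Cref{thm:upperbound} with an adversarial mixture prior, and then reduce everything to a one-dimensional calculation in the continuum limit.

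\textbf{The prior.} Fix $\gamma\in(0,\frac{1}{e-1}]$ and $q\in[0,1]$. Let $\mathcal{D}_q$ place mass $q$ on the degenerate instance $N=0$ and mass $1-q$ on $\mathrm{BetaBin}(T,1,\gamma T)$. Because $\mathrm{BetaBin}$ already puts positive mass on $N=0$, this mixture is still an honest distribution over our hard family. By \Cref{thm:upperbound}, the optimal competitive ratio is at most the value of the best prior-aware ordinal algorithm for $\mathcal{D}_q$, plus $O(1/T)$.

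\textbf{Reducing the algorithm to a threshold $\sigma$.} I would argue that the two components cannot be told apart until the first large item arrives. Once a large item arrives, the state is the beta-binomial one exactly: the degenerate component has posterior weight zero. So the optimal continuation from that point is the one from \Cref{lem:dp-large-simplified}.

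Before any large item arrives, a policy is determined by a (possibly randomized) time $\tau$ at which it starts accepting small items. Its expected payoff is then $q\cdot\frac{T-\tau+1}{T}+(1-q)\cdot(\text{the bracket in }\eqref{eq:no-item-val})$.

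Passing to the limit via \Cref{thm:dp-converges-ode}, \Cref{cor:converge} and the continuous analogue of $V_0^{(0)}$, the value becomes
\[
\sup_{\sigma}\bigl(q(1-\sigma)+(1-q)F_\gamma(\sigma)\bigr),
\]
with a deterministic $\sigma$ by concavity (\Cref{lem:F-closed-form}). Taking the infimum over $q$ and applying the minimax theorem for a concave--linear function on compact intervals gives the bound $\sup_\sigma \min\{F_\gamma(\sigma),1-\sigma\}+o(1)$.

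One subtlety is that the limit exchange must hold for the mixture as well as for the pure beta-binomial prior. I would handle this by noting that the error bounds are uniform in $\sigma$ and in the convex weight $q$.

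\textbf{Computing $\sup_\sigma\min\{F_\gamma,1-\sigma\}$.} First I show that $F_\gamma$ is increasing up to the crossing. On $[0,s^*]$ this was shown in \Cref{lem:intersection-after-sstar}. On $(s^*,1]$ we have $F_\gamma'(\sigma)=\frac{\gamma}{\gamma+1}\bigl(\frac{1}{\gamma+\sigma}-1\bigr)>0$ whenever $\gamma+\sigma<1$, which covers the relevant range since the crossing occurs with $\sigma<1-\gamma$.

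Next, $1-\sigma$ is decreasing and $F_\gamma(1)<1-1=0$ fails, so I need a direct check instead. At $\sigma=1$ the value $1-\sigma$ is $0$ while $F_\gamma(1)>0$, and at $s^*$ \Cref{lem:intersection-after-sstar} gives $1-s^*>F_\gamma(s^*)$. Hence there is a unique crossing $\sigma_\gamma\in(s^*,1)$, and the max--min equals $1-\sigma_\gamma$.

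So I must choose $\gamma$ (the figure suggests $\gamma\approx0.08704$) and certify that $1-\sigma_\gamma\le0.36437$. Equivalently, setting $\sigma_0=1-0.36437$, I need $F_\gamma(\sigma_0)\ge 1-\sigma_0=0.36437$, with $\sigma_0>s^*$. Monotonicity then forces the crossing to satisfy $\sigma_\gamma\le\sigma_0$.

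\textbf{Main obstacle.} The delicate part is this explicit numerical verification using the closed form in \Cref{lem:F-closed-form}, which involves logarithms and $1/e$. The margin over $1/e$ is only about $0.0035$, so I would evaluate each term with rigorous interval bounds on $e$ and $\ln$, choosing a rational $\gamma$ near $0.087$.

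I would also check the earlier step that $\sigma_0$ lies in the region where $F_\gamma$ is increasing, that is, $\gamma+\sigma_0<1$. The final $O(1/T)$ and discretization errors are absorbed by taking $T$ large.
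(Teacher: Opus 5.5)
\paragraph{Gap: the final certification is reversed.} Your last step checks the inequality in the wrong direction. You need $\sup_\sigma\min\{F_\gamma(\sigma),1-\sigma\}=1-\sigma_\gamma\le 0.36437$, i.e.\ $\sigma_\gamma\ge\sigma_0:=0.63563$.

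The function $(1-\sigma)-F_\gamma(\sigma)$ is convex, positive at $s^*$ and negative at $1$. So it is positive exactly on $[s^*,\sigma_\gamma)$, and $\sigma_\gamma\ge\sigma_0$ holds iff $F_\gamma(\sigma_0)\le 1-\sigma_0$.

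Your condition $F_\gamma(\sigma_0)\ge 1-\sigma_0$ instead gives $\sigma_\gamma\le\sigma_0$, i.e.\ $1-\sigma_\gamma\ge 0.36437$. That is a lower bound on the max--min value and certifies nothing about impossibility. It is satisfied by bad parameters: as $\gamma\to0$ the crossing value tends to $1/e$. At $\gamma=0.08704$ it is actually false: there $\sigma_\gamma\approx0.635635$ and $F_\gamma(\sigma_0)\approx0.364365<0.36437$. So the numerical check you planned would either fail or prove the wrong thing.

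The correct check has two parts:
\begin{itemize}
\item $\gamma+\sigma_0<1$, so that $F_\gamma$ is nondecreasing on $[0,\sigma_0]$;
\item $F_\gamma(\sigma_0)\le 0.36437$.
\end{itemize}
Then $\min\{F_\gamma(\sigma),1-\sigma\}\le F_\gamma(\sigma_0)\le 0.36437$ for $\sigma\le\sigma_0$, and $\min\{F_\gamma(\sigma),1-\sigma\}\le 1-\sigma\le 0.36437$ for $\sigma\ge\sigma_0$. The margin at this $\gamma$ is only about $5\times10^{-6}$, so your interval arithmetic must be that tight. It helps that for $\sigma\ge s^*$ the closed form simplifies to $F_\gamma(\sigma)=\frac1e+\frac{\gamma}{1+\gamma}\bigl(1-\sigma+\ln\frac{\gamma+\sigma}{1+\gamma}\bigr)$.

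\paragraph{Comparison of routes.} Apart from this, your reduction is a legitimate variant of the paper's. The paper does not form a single prior. It fixes the ordinal algorithm from \Cref{lem:cardinal-to-ordinal} and lets an adversary choose between $\mathrm{BetaBin}(T,1,\gamma T)$ and the point mass at $N=0$. Both are scored with the same waiting time $\sigma$, so the guarantee is at most $\min\{F_\gamma(\sigma),1-\sigma\}$ directly. Jensen handles randomized $\sigma$, and no minimax theorem is needed.

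Your mixture $\mathcal{D}_q$ matches the single-prior form of \Cref{thm:upperbound} literally, but it needs Sion's theorem to get $\inf_q\sup_\sigma$ equal to $\sup_\sigma\min$. A more elementary alternative is the explicit choice $q=F_\gamma'(\sigma_\gamma)/(1+F_\gamma'(\sigma_\gamma))$, which makes $\sigma_\gamma$ the maximizer of the concave mixture objective. Also, pointwise convergence for each fixed $q$ already suffices, so you do not need uniformity in $q$.
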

\begin{proof}
    Consider the point at which $1-\sigma$ and $F_\gamma(\sigma)$ first intersect. It is clear that $1-\sigma$ is decreasing up to the intersection point. If we also show that $F_\gamma(\sigma)$ is increasing up to the intersection point, then $\min\{F_\gamma(\sigma),1-\sigma\}$ is maximized at the first intersection point. We will select $\gamma$ such that $0 < \gamma \leq \frac{1}{e-1}$ and $F_\gamma(\sigma)$ has our desired property. We know from \Cref{lem:intersection-after-sstar} that any intersection point must occur when $\sigma > s^*$ for $0 < \gamma \leq \frac{1}{e-1}$. When $\sigma \leq s^*$, we already know from the proof of \Cref{lem:intersection-after-sstar} that $F'_\gamma(\sigma) > 0$. Thus, we only need to focus on the case when $\sigma > s^*$. We will compute the intersection point by solving for $\sigma$ when $F_\gamma(\sigma) = 1 - \sigma$ and $\sigma > s^*$:
    \begin{align}\label{eq:intersection-equation}
        \frac1e\left(\frac{s^*}{\gamma+s^*}\right)+\frac{\gamma}{\gamma+1}\ln\left(\frac{\gamma+\sigma}{\gamma+s^*}\right)+\frac{\gamma}{\gamma+1}(1-\sigma) & = 1-\sigma.
    \end{align}
    We select $\gamma$ to maximize the intersection point $\sigma^*$, which minimizes the resulting $1-\sigma^*$ value. Solving \eqref{eq:intersection-equation} and optimizing over $\gamma\in(0,\frac{1}{e-1}]$ gives
    \begin{align*}
        \gamma \approx 0.08704,
        \qquad
        \sigma^* \approx 0.63563.
    \end{align*}
    Observe that for this $\gamma$, $s^* \approx 0.31286$, so we do not contradict \Cref{lem:intersection-after-sstar}, and we have that
    \begin{align*}
        F_\gamma'(\sigma)&=\frac{\gamma}{\gamma+1}\left(\frac{1}{\gamma+\sigma}-1\right) > 0
    \end{align*}
    on $(s^*,\sigma^*]$. Recall that one-sided first derivatives also agree at $s^*$. Thus, $F_\gamma(\sigma)$ is increasing up to the intersection point $\sigma^*$ for the given $\gamma$. Combining this with \Cref{cor:converge}, it follows that any algorithm's guarantee can be no more than
    \begin{align*}
        1-\sigma \approx 0.36437.
    \end{align*}
    Since $1/e\approx0.36788$, this is less than $1/e-0.0035$.
\end{proof}

\section{A refined algorithm for the $1$-$B$ knapsack secretary problem}
\label{sec:one-B}
In addition to our impossibility analysis, we also provide an improved approximation for $1$-$B$ knapsack instances. Throughout this section, we write $k:=B$. We use the standard random-order model. Equivalently, each item receives an independent arrival time uniformly distributed in $[0,1]$. All values are assumed to be nonnegative. For notational simplicity, we assume that all item values are distinct; ties may be broken by an arbitrary fixed perturbation.

For an input instance $\mathcal{I}$, let
\[
    \OPT_{\ell}
    :=
    \max\{v_i: i \text{ is large}\}
\]
be the value of the best large item, where the maximum is $0$ if no large item exists, and let
\[
    \OPT_{s}
    :=
    \sum_{i=1}^{k} v_i^{s}
\]
be the value of the optimal packing of the small items, where $v_1^{s} \ge v_2^{s} \ge \cdots $ are the small-item values, padded with zero-valued dummy items if fewer than $k$ small items exist. These items do not change $\OPT_{s}$ and do not affect the distribution of the arrival times of the real items. Since the algorithm only accepts items strictly above the threshold, zero-valued dummy items do not decrease the value obtained from real positive-valued items. Then,
\[
    \OPT = \max\{\OPT_{\ell},\OPT_{s}\}.
\]

\paragraph{The algorithmic template.}

We specialize the sequential large/small framework of Albers, Khan, and Ladewig~\cite{albers2021improved} to the 1-B structure. In this special case, the large-item subproblem collapses to a one-choice secretary problem, and the small-item subproblem is exactly a k-secretary problem with $k=B$. This lets us replace the general-purpose subroutines by sharper secretary subroutines and obtain a better guarantee. For the case $B=2$, we use the boosting algorithm of Abels, Ladewig, Schewior, and Stinzendörfer~\cite{abels2022knapsack}. Their algorithm multiplies the values of the small items by a factor $\alpha>1$ and then applies the size-oblivious secretary algorithm of Albers, Khan, and Ladewig~\cite{albers2021improved}. They show that, for $1.400382 \lesssim \alpha \le e/(e-1)$, this algorithm is $(1/e-o(1))$-competitive for the $1$-$2$ knapsack secretary problem.

For the remaining cases $B \ge 3$, we use the following two-phase algorithm.

\begin{algorithm}[H]
\caption{Three-Phase Algorithm for $1$-$B$ Knapsack, $B\ge 3$}
\label{alg:two-phase-refined}
\KwParams{$0<c_1<c_2<1$}

    Observe and reject all items arriving before time $c_1$; let $r_{\mathrm{sample}}$ be the largest value observed before time $c_1$ with the convention $r_{\mathrm{sample}}=-\infty$ if no item is observed before time $c_1$\;

    \ForEach{item $i$ arriving during $[c_1,c_2)$, in arrival order}{
        \If{$i$ is large and $v_i>r_{\mathrm{sample}}$}{
            Select $i$ and terminate the algorithm\;
        }
    }
    
    Let $\mu$ be the $k$-th largest small-item value observed before time $c_2$, where missing small items are padded by value $0$\;
    
    \ForEach{item $i$ arriving during $[c_2,1]$, in arrival order}{
        \If{$i$ is small, $v_i>\mu$, and fewer than $k$ small items have been selected}{
            Select $i$\;
        }
    }
\end{algorithm}

The zero padding in the definition of $\mu$ is only a notational convention. It means that if fewer than $k$ small items have appeared by time $c_2$, then the threshold is $0$. Since all values are nonnegative, this convention can only help the algorithm.

\subsection{Analysis of the two-phase algorithm for $B\ge 3$}

We use the following special case of the $\singleref$ algorithm of Albers and Ladewig~\cite{albers2021new}. The $\singleref$ algorithm with parameters $(c,r)$ samples the first $c$ fraction of the input, uses the $r$-th best sampled item as a threshold, and then accepts the first $k$ later items beating that threshold. We only use the case $r=k$, which is sufficient for our uniform bound, even though this is not optimal for every $k$ for $\singleref$.

\begin{lemma}
\label{lem:single-ref-r-k}
For the $k$-secretary problem, $\singleref$ with sampling fraction $c$ and reference rank $r=k$ is $(\rho_k(c)-o(1))\text{-competitive}$, where
\[
    \rho_k(c)
    :=
    \int_c^1
    \Pr\!\left[
        \operatorname{Bin}\!\left(2k-1,\frac{c}{t}\right)\ge k
    \right]dt.
\]
\end{lemma}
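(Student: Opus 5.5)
We prove the bound item by item. Because the objective is additive over the optimal set, it suffices to show that each of the top $k$ items, $v_1\ge\cdots\ge v_k$, is selected with probability at least $\rho_k(c)-o(1)$. Summing then gives expected value at least $(\rho_k(c)-o(1))\sum_{j\le k}v_j$. We use the continuous-time model with i.i.d. uniform arrival times on $[0,1]$; the $o(1)$ term absorbs the difference from the discrete random-order model as $n\to\infty$.

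Fix one of the top $k$ items, say item $j$, and condition on its arrival time $t\in(c,1]$. Item $j$ is accepted at time $t$ when two conditions hold:
\begin{enumerate}
\item $v_j$ exceeds the threshold $\mu$, the $k$-th best sampled value;
\item fewer than $k$ items were accepted during $(c,t)$.
\end{enumerate}
We reduce both conditions to a single event that involves only the $2k-1$ other items among the top $2k$. Let $S$ be that set of $2k-1$ other items, and suppose at least $k$ of them arrive before time $t$.

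Consider the $k$ best items among those arriving before $t$, excluding $j$. They all come from the top $2k$, so each has value at least $v_{2k}$. The key claim is that if at least $k$ of these lie in the sample $[0,c)$, then the threshold $\mu$ is at least the $k$-th best value seen before $t$. Hence at most the items above it are accepted in $(c,t)$, and counting shows fewer than $k$ acceptances occur before $t$. We also get $v_j>\mu$, since $j$ is among the top $k$ overall and therefore beats the $k$-th sampled value unless $k$ better items were sampled, which is impossible.

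The cleanest sufficient event is this: at least $k$ items of $S$ arrive in $[0,c)$. In that case the $k$-th sampled value is at least the $k$-th value of $S$ seen in the sample. Every item accepted in $(c,t)$ must then beat the $k$-th sampled item, and so it belongs to the top $2k-1$ by rank. We then need to argue that these items number fewer than $k$.

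Given that an item of $S$ arrives before $t$, it lands in $[0,c)$ independently with probability $c/t$. This suggests the correct event: among the items of $S$ arriving before $t$, at least $k$ fall in the sample. The standard single-reference argument of Albers and Ladewig~\cite{albers2021new} identifies this event with $\mathrm{Bin}(2k-1,c/t)\ge k$ after the appropriate conditioning. Concretely, we view the top-$2k$ items restricted to $[0,t)$ and project their positions. For the uniform bound we need only a lower bound on the acceptance probability, so we invoke their analysis with reference rank $r=k$ directly rather than rederiving it.

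Integrating this lower bound over $t\in[c,1]$ gives exactly $\rho_k(c)$ for each top-$k$ item, and summing over the $k$ items yields the lemma. The main obstacle is the middle step: rigorously showing that the binomial event guarantees both that the capacity is not exhausted before $t$ and that $v_j>\mu$. The accounting of which top-$2k$ items can be accepted in $(c,t)$ is delicate. Our first approach there is the rank-counting argument above: any accepted item must beat $\mu$, and hence is among the top $k-1$ items of $S$ that were not sampled.
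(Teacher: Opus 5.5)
Your per-item decomposition is sound and differs from the paper's route. The paper invokes Theorem~1 of Albers and Ladewig, where $r=k$ makes every weight $\gamma_j$ equal to $k$, to reduce the ratio to the acceptance probability of the single best item. You instead bound the acceptance probability of each top-$k$ item and sum. Since item $j\le k$ automatically beats $\mu$ (only $j-1\le k-1$ items exceed it), this would suffice.

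The gap is the step that is supposed to produce $\rho_k(c)$. Your set $S$ is \emph{fixed}: the $2k-1$ items other than $j$ among the top $2k$. For a fixed set, the event ``at least $k$ items of $S$ arrive in $[0,c)$'' has probability $\Pr[\operatorname{Bin}(2k-1,c)\ge k]$, not $\Pr[\operatorname{Bin}(2k-1,c/t)\ge k]$. Each item of $S$ is sampled with unconditional probability $c$, and the number of $S$-items arriving before $t$ is itself $\operatorname{Bin}(2k-1,t)$. So conditioning on arrival before $t$ does not give $2k-1$ independent trials with success probability $c/t$.

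Your rank-counting correctly shows that this event forces $j$ to be accepted. But integrating gives only $(1-c)\Pr[\operatorname{Bin}(2k-1,c)\ge k]$, which is strictly smaller than $\rho_k(c)$ because $c/t>c$ for $t<1$; for $k=1$ it is $c(1-c)$ instead of $c\ln(1/c)$. Your other candidate event (all $k$ best pre-$t$ items are sampled) has probability only $(c/t)^k$. Citing Albers and Ladewig cannot rescue an identification that is false. Moreover, as quoted in the paper, their theorem is phrased through the best item's probabilities $p_1^{(j)}$, not per-item bounds for ranks $2,\ldots,k$.

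The missing idea is to let the reference set depend on $t$. Let $S_t$ be the $2k-1$ highest-valued items among those other than $j$ arriving before $t$. Conditional on \emph{which} items arrive before $t$, their arrival times are i.i.d.\ uniform on $[0,t)$. Hence $|S_t\cap[0,c)|\sim\operatorname{Bin}(2k-1,c/t)$ whenever at least $2k-1$ items precede $t$, which happens with probability $1-o(1)$ for $t\ge c$.

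Now suppose $|S_t\cap[0,c)|\ge k$. Then the $k$ best sampled items all lie in $S_t$, so $\mu$ is the value of an item of $S_t$. Hence at most $2k-2$ items arriving before $t$ exceed $\mu$, and exactly $k-1$ of them are in the sample. So at most $k-1$ items are accepted in $(c,t)$, and $j$ is accepted.

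This is precisely the event the paper uses for the best item, and it holds verbatim for every item of rank at most $k$. With this replacement your argument is complete and self-contained, and it even bypasses the appeal to the $\gamma_j$-weighted formula.
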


\begin{proof}
In the notation of Albers and Ladewig~\cite{albers2021new}, let $p_1^{(j)}$ be the probability that the best item is accepted as the $j$-th selected item. Theorem~1 of~\cite{albers2021new} gives the competitive ratio of $\singleref$ as
\[
    \frac{1}{k}
        \sum_{j=1}^{k} \gamma_j \cdot p_1^{(j)}
    -o(1), \qquad
    \text{where}\ \gamma_j = \begin{cases}
        r + 2(j-1),& \text{if}\ j \le k-r+1\\
        k,& \text{else}.
    \end{cases}
\]
For $r=k$, $\gamma_j = k$ for all $j$, and the expression reduces to
\[
    \sum_{j=1}^{k}p_1^{(j)}-o(1),
\]
which is exactly the probability that the best item is accepted, up to the asymptotic error term.

It remains to express this probability. Condition on the best item arriving at time $t>c$. Up to the asymptotically negligible boundary effects absorbed in the $o(1)$ term, with reference rank $r=k$, the best item is accepted exactly when the threshold at time $c$ is below the best item and fewer than $k$ items better than the threshold have already been accepted before it arrives. Equivalently, among the $2k-1$ highest-value items that arrive before time $t$, excluding the best item itself, at least $k$ lie in the sample interval $[0,c]$. Conditional on arriving before time $t$, each of these items lies in the sample interval $[0,c]$ with probability $c/t$, independently. Therefore the acceptance probability of the best item is
\[
    \int_c^1
    \Pr\!\left[
        \operatorname{Bin}\!\left(2k-1,\frac{c}{t}\right)\ge k
    \right]dt.
\]
\end{proof}

\begin{lemma}
\label{lem:two-phase-refined}
For every $k=B\ge 3$ and every $0<c_1<c_2<1$, Algorithm~\ref{alg:two-phase-refined} has competitive ratio at least
\[
    \min\left\{
        c_1\ln\frac{c_2}{c_1},
        \;
        \frac{c_1}{c_2}\rho_k(c_2)
    \right\}
    -o(1).
\]
\end{lemma}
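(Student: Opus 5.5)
Since $\OPT=\max\{\OPT_\ell,\OPT_s\}$, it suffices to show two separate bounds: $\mathbb{E}[\ALG]\ge (c_1\ln(c_2/c_1)-o(1))\OPT_\ell$ whenever $\OPT_\ell\ge \OPT_s$, and $\mathbb{E}[\ALG]\ge (\tfrac{c_1}{c_2}\rho_k(c_2)-o(1))\OPT_s$ whenever $\OPT_s>\OPT_\ell$. Actually both bounds will hold unconditionally, so the minimum follows immediately. We work in the continuous-time model with independent uniform arrival times.

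\textbf{Large-item bound.} Let $i^*$ be the best large item, arriving at time $t\in[c_1,c_2)$. The algorithm selects $i^*$ if two conditions hold. First, no earlier item in $[c_1,t)$ triggered selection. Second, $v_{i^*}>r_{\mathrm{sample}}$. We use a sufficient condition modeled on the classical secretary analysis. Consider the highest-value item among all items (large or small) arriving before $t$. If it arrived in $[0,c_1)$, then $r_{\mathrm{sample}}$ equals its value. In that case no large item in $[c_1,t)$ beats $r_{\mathrm{sample}}$, so nothing is selected before $t$. Also, $r_{\mathrm{sample}}<v_{i^*}$ holds unless this item has value exceeding $v_{i^*}$.

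This last point is the obstacle: a small item (or any item) with value above $v_{i^*}$ in the sample blocks $i^*$. To avoid it, I restrict attention to items of value at least $v_{i^*}$ in a coupled way. I condition on the best item overall \emph{among those with value $\le v_{i^*}$} arriving before $t$. The relevant event is then ``the best item with value $\le v_{i^*}$ before $t$ lies in $[0,c_1)$, and no item with value $>v_{i^*}$ arrives in $[0,c_2)$.'' This is too lossy if there are many high small items.

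The cleaner route is to note that when $\OPT_\ell\ge\OPT_s\ge v^s_1$, every small item has value at most $v_{i^*}$. Every large item is also at most $v_{i^*}$. So $i^*$ is the overall best item, and the sample threshold only matters through the best item before $t$, which lies in $[0,c_1)$ with probability $c_1/t$ (up to $o(1)$ when few items exist; with no earlier item at all the threshold is $-\infty$ and $i^*$ is taken, which only helps). Integrating over $t\in[c_1,c_2)$ gives $\int_{c_1}^{c_2}c_1/t\,dt=c_1\ln(c_2/c_1)$. When $\OPT_s>\OPT_\ell$, the large-item bound is not needed.

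\textbf{Small-item bound.} Now $\OPT=\OPT_s$. We need the event that no large item is selected during $[c_1,c_2)$. A sufficient event is that the best item arriving before $c_2$ arrives before $c_1$. Then $r_{\mathrm{sample}}$ is the maximum of everything before $c_2$, and no large item in $[c_1,c_2)$ exceeds it. This event has probability $c_1/c_2$ (or the set before $c_2$ is empty, which also yields no selection).

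The main step I expect to be delicate is independence. The third phase is exactly $\singleref$ with sampling fraction $c_2$ and $r=k$, run on the small items (padded), but the event above depends on arrivals before $c_2$. I would argue as follows. Conditioned on the \emph{set} of items arriving before $c_2$ (and hence $\mu$ and the future), the event is determined by the relative position of the single maximum within $[0,c_2)$. Given the set, the arrival times within $[0,c_2)$ are i.i.d.\ uniform, and the event has conditional probability exactly $c_1/c_2$. Phase-three behaviour depends only on that set and on the items after $c_2$. Hence $\mathbb{E}[\ALG]\ge \tfrac{c_1}{c_2}\,\mathbb{E}[\text{value of }\singleref(c_2,k)\text{ on small items}]$. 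Applying \Cref{lem:single-ref-r-k} to the small-item $k$-secretary instance gives $(\rho_k(c_2)-o(1))\OPT_s$, with large items simply ignored by phase three.

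Combining the two cases gives the stated minimum.
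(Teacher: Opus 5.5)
Your proof is correct and follows essentially the same route as the paper. It uses the same case split on $\OPT=\OPT_\ell$ versus $\OPT=\OPT_s$, the same ``best earlier item lies in the sample'' events giving the bounds $c_1/t$ and $c_1/c_2$, and the same reduction of phase three to \singleref{} with sampling fraction $c_2$ and reference rank $r=k$.

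The only real difference is in the small-item case. You use the best item overall before $c_2$, which depends on small-item arrival times and so forces your conditioning-on-the-set argument; that argument is valid. The paper instead uses the best \emph{large} item before $c_2$, an event that depends only on large-item arrival times and is therefore independent of the small-item phase outright.

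Also, your aside that both bounds hold unconditionally is not what your argument shows for the large-item bound, since that argument uses $\OPT_\ell\ge\OPT_s$. Your case split makes this harmless.
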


\begin{proof}
We distinguish the two possible forms of the offline optimum.

\paragraph{Case 1: $\OPT=\OPT_{\ell}$.}

Let $g^\star$ be the most valuable large item. Since the optimal solution is a large item, no small item has value larger than $v_{g^\star}$; otherwise that single small item would already give a better feasible solution. Thus $g^\star$ is also a globally maximum-value item.

We lower-bound the probability that Algorithm~\ref{alg:two-phase-refined} selects $g^\star$.  Condition on $g^\star$ arriving at time $t\in[c_1,c_2)$. If no large item arrives before time $t$, then no large item can be selected before $g^\star$, and the algorithm selects $g^\star$ at time $t$. Otherwise, let $h_t$ be the highest-value large item arriving before time $t$. If $h_t$ arrives before time $c_1$, then $r_{\mathrm{sample}}\ge v_{h_t}$. Consequently, no large item arriving in $[c_1,t)$ has value larger than $r_{\mathrm{sample}}$, so no large item is selected before time $t$. Moreover, since $g^\star$ is the globally best item and was not sampled, we have $v_{g^\star}>r_{\mathrm{sample}}$, and hence the algorithm selects $g^\star$ when it arrives.

Conditional on the set of large items arriving before time $t$ being nonempty, the arrival time of its highest-value member is uniformly distributed in
$[0,t]$. Therefore,
\[
    \Pr[g^\star\text{ is selected}\mid T_{g^\star}=t]
    \ge
    \frac{c_1}{t}.
\]
Integrating over $t\in[c_1,c_2)$ gives
\[
    \Pr[g^\star\text{ is selected}]
    \ge
    \int_{c_1}^{c_2}\frac{c_1}{t}\,dt
    =
    c_1\ln\frac{c_2}{c_1}.
\]
Thus
\[
    \mathbb E[\ALG]
    \ge
    c_1\ln\frac{c_2}{c_1}\cdot \OPT.
\]

\paragraph{Case 2: $\OPT=\OPT_{s}$.}

Let $t_i$ represent the arrival time of item $i$. Let $L_{<c_2} := \{i : i \text{ is large and } t_i<c_2\}$ be the set of large items arriving before time $c_2$. We define an event $H$ that guarantees that the algorithm reaches the small-item phase. If $L_{<c_2}=\emptyset$, then let $H$ occur. Otherwise, let $g_{<c_2} \in \arg\max\{v_i : i \in L_{<c_2}\}$
be the highest-value large item arriving before time $c_2$, and define $H:=\{t_{g_{<c_2}}<c_1\}$.

We first lower-bound the probability of $H$. Conditional on $L_{<c_2}\neq\emptyset$ and on the identity of the set $L_{<c_2}$, the arrival times of the items in $L_{<c_2}$ are independent and uniform in $[0,c_2)$. Hence the arrival time of the highest-value item in this set is uniform in $[0,c_2)$. Since this probability is $c_1/c_2$ for every fixed nonempty identity of $L_{<c_2}$, averaging over the possible identities gives
\[
    \Pr[H\mid L_{<c_2}\neq\emptyset]
    =
    \frac{c_1}{c_2}.
\]
If $L_{<c_2}=\emptyset$, then $H$ occurs by definition. Thus $\Pr[H]\ge \frac{c_1}{c_2}$.

On the event $H$, the algorithm cannot select a large item before time $c_2$. Indeed, if $L_{<c_2}=\emptyset$, then no large item arrives before $c_2$.  Otherwise, $g_{<c_2}$ is contained in the first sample, and hence for every $i \in L_{<c_2}$:
\[
    r_{\mathrm{sample}}
    \ge
    v_{g_{<c_2}}
    \ge
    v_i.
\]
Therefore, no large item arriving in $[c_1,c_2)$ has value larger than $r_{\mathrm{sample}}$, and the algorithm reaches the small-item phase.

The event $H$ depends only on the arrival times of large items, and is independent of the arrival times and relative order of the small items.  Thus, conditional on $H$, the distribution of the small-item arrival sequence is unchanged. Moreover, conditional on $H$, the small-item phase of Algorithm~\ref{alg:two-phase-refined} is exactly the $\singleref$ $k$-secretary algorithm with sampling fraction $c_2$ and reference rank $r=k$. By Lemma~\ref{lem:single-ref-r-k},
\[
    \mathbb E[\ALG\mid H]
    \ge
    \bigr(\rho_k(c_2)-o(1)\bigl) \OPT_{s}.
\]
Since the algorithm obtains a nonnegative value on the complement of $H$, we get
\[
\begin{aligned}
    \mathbb E[\ALG]
    &\ge
    \Pr[H]\cdot \mathbb E[\ALG\mid H]  \\
    &\ge
    \frac{c_1}{c_2}
    \bigl(\rho_k(c_2)-o(1)\bigr)\OPT_{s}  \\
    &=
    \left(\frac{c_1}{c_2}\rho_k(c_2)-o(1)\right)
    \OPT_{s}.
\end{aligned}
\]
Since $\OPT=\OPT_{s}$ in the present case, this proves
\[
    \mathbb E[\ALG]
    \ge
    \left(\frac{c_1}{c_2}\rho_k(c_2)-o(1)\right)\OPT.
\]

Combining the two cases proves the lemma.
\end{proof}

We now choose explicit parameters. The choice is optimized for the worst case $k=3$; larger values of $k$ only improve the small-item term.

\begin{lemma}
\label{lem:rho-monotone}
For every $c\ge 1/2$, the sequence $\rho_k(c)$ is non-decreasing in $k$. Consequently, for every $k \ge 3$ and every $c \ge 1/2$,
\[
    \rho_k(c)\ge \rho_3(c).
\]
\end{lemma}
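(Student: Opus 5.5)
The plan is to prove the monotonicity pointwise in the integrand of $\rho_k(c)$ and then integrate. For $t\in[c,1]$ the success probability $p:=c/t$ lies in $[c,1]\subseteq[1/2,1]$ whenever $c\ge 1/2$. So it suffices to show that for every $p\in[1/2,1]$ and every $k\ge 1$,
\[
    \Pr\!\left[\operatorname{Bin}(2k+1,p)\ge k+1\right]\;\ge\;\Pr\!\left[\operatorname{Bin}(2k-1,p)\ge k\right].
\]
Integrating this inequality over $t\in[c,1]$ gives $\rho_{k+1}(c)\ge\rho_k(c)$, and induction from $k=3$ gives $\rho_k(c)\ge\rho_3(c)$ for all $k\ge 3$.

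For the pointwise inequality I would use a coupling. Write $\operatorname{Bin}(2k+1,p)=X+Y_1+Y_2$, where $X\sim\operatorname{Bin}(2k-1,p)$ and $Y_1,Y_2$ are independent $\mathrm{Bernoulli}(p)$. Compare the events $\{X\ge k\}$ and $\{X+Y_1+Y_2\ge k+1\}$ according to the value of $X$:
\begin{itemize}
    \item If $X\ge k+1$, both events occur.
    \item If $X\le k-2$, neither occurs.
    \item If $X=k$, the first event occurs, but the second fails exactly when $Y_1=Y_2=0$, which has probability $(1-p)^2$.
    \item If $X=k-1$, the first event fails, but the second occurs exactly when $Y_1=Y_2=1$, which has probability $p^2$.
\end{itemize}
Hence the difference of the two probabilities equals
\[
    p^2\Pr[X=k-1]-(1-p)^2\Pr[X=k].
\]

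The final step is a direct computation. Since $\binom{2k-1}{k-1}=\binom{2k-1}{k}$, we have $\Pr[X=k-1]=\binom{2k-1}{k}p^{k-1}(1-p)^{k}$ and $\Pr[X=k]=\binom{2k-1}{k}p^{k}(1-p)^{k-1}$. The difference therefore simplifies to
\[
    \binom{2k-1}{k}p^{k}(1-p)^{k}\bigl(p-(1-p)\bigr)=\binom{2k-1}{k}p^{k}(1-p)^{k}(2p-1),
\]
which is nonnegative exactly when $p\ge 1/2$; this is where the hypothesis $c\ge 1/2$ enters. I do not expect a real obstacle. The only point needing care is confirming that $p=c/t$ ranges over $[c,1]$, so that $c\ge1/2$ guarantees $p\ge 1/2$ throughout the range of integration.
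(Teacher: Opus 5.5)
Your proposal is correct and matches the paper's proof: the same coupling $\operatorname{Bin}(2k+1,p)=X+Y_1+Y_2$, the same difference $p^2\Pr[X=k-1]-(1-p)^2\Pr[X=k]$, and the same pointwise-then-integrate step using $c/t\ge c\ge 1/2$. Your direct evaluation via $\binom{2k-1}{k-1}=\binom{2k-1}{k}$ is slightly cleaner than the paper's identity $\Pr[X=k]=\frac{q}{1-q}\Pr[X=k-1]$, because it never divides by $1-p$ and so also covers the endpoint $p=1$.
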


\begin{proof}
For $q\in[0,1]$, define $M_k(q) := \Pr[\operatorname{Bin}(2k-1,q)\ge k]$. We first show that $M_k(q)$ is non-decreasing in $k$ whenever $q\ge 1/2$. Let $X \sim \operatorname{Bin}(2k-1,q)$, and let $Y_1,Y_2$ be independent Bernoulli random variables with success probability $q$, independent of $X$. Then
\[
    M_{k+1}(q)
    =
    \Pr[X+Y_1+Y_2\ge k+1].
\]
Thus
\[
\begin{aligned}
    M_{k+1}(q)-M_k(q)
    &=
    q^2\Pr[X=k-1]-(1-q)^2\Pr[X=k].
\end{aligned}
\]
Since $\Pr[X=k] = \frac{q}{1-q}\Pr[X=k-1]$, we obtain
\[
\begin{aligned}
    M_{k+1}(q)-M_k(q)
    &=
    \left(q^2-q(1-q)\right)\Pr[X=k-1]\\
    &=
    q(2q-1)\Pr[X=k-1].
\end{aligned}
\]
Therefore $M_k(q)$ is non-decreasing in $k$ for $q\ge 1/2$.

Now suppose $c\ge 1/2$. For every $t\in[c,1]$, $\frac{c}{t} \ge c \ge \frac{1}{2}$. Hence, for all $k \ge 3$:
\[
    M_k\!\left(\frac{c}{t}\right)
    \ge
    M_3\!\left(\frac{c}{t}\right).
\]
Integrating over $t\in[c,1]$ gives
\[
    \rho_k(c)
    =
    \int_c^1 M_k\!\left(\frac{c}{t}\right)dt
    \ge
    \int_c^1 M_3\!\left(\frac{c}{t}\right)dt
    =
    \rho_3(c).
\]
\end{proof}

For $k=3$, we can compute $\rho_3(c)$ explicitly:
\[
\begin{aligned}
    \rho_3(c)
    &=
    \int_c^1
    \Pr\!\left[
        \operatorname{Bin}\!\left(5,\frac{c}{t}\right)\ge 3
    \right]dt  \\
    &=
    \int_c^1
    \left(
        10\left(\frac{c}{t}\right)^3 \left(1-\frac{c}{t}\right)^2
        +5\left(\frac{c}{t}\right)^4 \left(1-\frac{c}{t}\right)
        +\left(\frac{c}{t}\right)^5
    \right)dt\\
    &=
    \int_c^1
    \left(
        10\left(\frac{c}{t}\right)^3
        -15\left(\frac{c}{t}\right)^4
        +6\left(\frac{c}{t}\right)^5
    \right)dt\\
    &=
    \frac32 c-5c^3+5c^4-\frac32 c^5.
\end{aligned}
\]

\begin{theorem}
\label{thm:two-phase-B-ge-3}
For every fixed $B=k\ge 3$, Algorithm~\ref{alg:two-phase-refined} with $ c_1=0.3657, c_2=0.6258$ is at least $(1/5.10-o(1))$-competitive.
\end{theorem}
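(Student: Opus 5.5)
The plan is to instantiate \Cref{lem:two-phase-refined} with $c_1=0.3657$ and $c_2=0.6258$, and then show that both terms inside the minimum are at least $1/5.10\approx 0.196078$, uniformly in $k\ge 3$. Since \Cref{lem:two-phase-refined} already holds for every $k=B\ge 3$ and every $0<c_1<c_2<1$, only two things remain: remove the dependence on $k$ from the small-item term, and carry out two numerical evaluations with enough care that rounding cannot flip the comparison.

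\textbf{Large-item term.} This term is independent of $k$. We have $c_2/c_1 = 0.6258/0.3657\approx 1.7112$, so $\ln(c_2/c_1)\approx 0.5373$. Hence
\[
c_1\ln\frac{c_2}{c_1}\approx 0.3657\cdot 0.5373\approx 0.1965 > 0.19608.
\]
To make this rigorous, I would use a certified lower bound on the logarithm, for example a truncated alternating series or interval arithmetic.

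\textbf{Small-item term.} Because $c_2=0.6258\ge 1/2$, \Cref{lem:rho-monotone} gives $\rho_k(c_2)\ge \rho_3(c_2)$ for all $k\ge 3$. It therefore suffices to evaluate the closed form
\[
\rho_3(c)=\tfrac32 c-5c^3+5c^4-\tfrac32 c^5
\]
at $c=0.6258$. The powers are approximately $c^3\approx 0.24508$, $c^4\approx 0.15337$ and $c^5\approx 0.09598$, which give $\rho_3(c_2)\approx 0.3362$. Multiplying by $c_1/c_2\approx 0.58437$ yields
\[
\frac{c_1}{c_2}\,\rho_3(c_2)\approx 0.1965 > 0.19608.
\]

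\textbf{Conclusion and main obstacle.} Both terms exceed $1/5.10$, so \Cref{lem:two-phase-refined} gives a competitive ratio of at least $1/5.10-o(1)$ for every fixed $B\ge 3$; the $o(1)$ is inherited from \Cref{lem:single-ref-r-k}. There is no conceptual obstacle. The only delicate point is that both quantities exceed $1/5.10$ by only about $4\times 10^{-4}$, since the parameters were chosen to balance the two terms. I would therefore present the arithmetic with explicit error bounds, for instance evaluating the polynomial $\rho_3$ and the logarithm with rational interval bounds, so that the margin is clearly certified.
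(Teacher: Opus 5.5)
Your proposal is correct and matches the paper's proof: you instantiate \Cref{lem:two-phase-refined}, bound $c_1\ln(c_2/c_1)$ numerically, use \Cref{lem:rho-monotone} (valid since $c_2>1/2$) to reduce the small-item term to $\rho_3(c_2)>0.33618$, and check that both terms exceed $1/5.10$, where the paper gets $>0.19646$ and $>0.19645$. One small arithmetic slip: $\ln(c_2/c_1)\approx 0.53722$ rather than $0.5373$, which does not change the conclusion $c_1\ln(c_2/c_1)\approx 0.19646>1/5.10$.
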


\begin{proof}
By Lemma~\ref{lem:two-phase-refined}, the competitive ratio is at least
\[
    \min\left\{
        c_1\ln\frac{c_2}{c_1},
        \;
        \frac{c_1}{c_2}\rho_k(c_2)
    \right\}
    -o(1).
\]

First,
\[
    c_1\ln\frac{c_2}{c_1}
    =
    0.3657\ln\frac{0.6258}{0.3657}
    >
    0.19646
    >
    \frac1{5.10}.
\]

Second, since $c_2=0.6258>1/2$, Lemma~\ref{lem:rho-monotone} gives
$\rho_k(c_2)\ge \rho_3(c_2)$ for every $k\ge3$.  Using the closed form for
$\rho_3$,
\[
    \rho_3(0.6258)
    =
    \frac{3}{2}(0.6258)
    -5(0.6258)^3
    +5(0.6258)^4
    -\frac{3}{2}(0.6258)^5
    >
    0.33618.
\]
Therefore
\[
    \frac{c_1}{c_2}\rho_k(c_2)
    \ge
    \frac{0.3657}{0.6258}\cdot 0.33618
    >
    0.19645
    >
    \frac1{5.10}.
\]
Thus both terms in the minimum are larger than $1/5.10$, and the theorem
follows.
\end{proof}

\subsection{Combining with the $1$-$2$ algorithm}

We now combine the two-phase algorithm above with the specialized algorithm of Abels, Ladewig, Schewior, and Stinzendörfer~\cite{abels2022knapsack} for $B=2$.

\begin{algorithm}[H]
\caption{Combined Algorithm for $1$-$B$ Knapsack}
\label{alg:combined-one-B}
\setcounter{AlgoLine}{0}
    \KwParams{capacity parameter $B\ge 2$}
    
    \eIf{$B=2$}{
        Run the boosted $1$-$2$ knapsack secretary algorithm of
        Abels et al.~\cite{abels2022knapsack} with any
        $\alpha\in[1.400382,e/(e-1)]$\;
    }{
        Run Algorithm~\ref{alg:two-phase-refined} with
        $c_1=0.3657$ and $c_2=0.6258$\;
    }
\end{algorithm}

\begin{theorem}
\label{thm:combined-one-B}
For every fixed $B\ge 2$, Algorithm~\ref{alg:combined-one-B} is $(1/5.10-o(1))$-competitive for the $1$-$B$ knapsack secretary problem.
\end{theorem}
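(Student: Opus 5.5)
The proof splits on the value of $B$, following the two branches of Algorithm~\ref{alg:combined-one-B}. Because the algorithm knows $B$ before any item arrives, the branch it takes is fixed in advance. The competitive guarantee for each $B$ is therefore exactly the guarantee of the subroutine that runs for that $B$.

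\emph{Case $B\ge 3$.} Here Algorithm~\ref{alg:combined-one-B} runs Algorithm~\ref{alg:two-phase-refined} with $c_1=0.3657$ and $c_2=0.6258$. Theorem~\ref{thm:two-phase-B-ge-3} then gives a $(1/5.10-o(1))$ guarantee directly.

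\emph{Case $B=2$.} Here Algorithm~\ref{alg:combined-one-B} runs the boosted algorithm of Abels, Ladewig, Schewior, and Stinzend\"orfer~\cite{abels2022knapsack} with some $\alpha\in[1.400382,e/(e-1)]$. I will take their published result as given: for $\alpha$ in this range, the algorithm is $(1/e-o(1))$-competitive for $1$-$2$ knapsack secretary. The remaining step is the inequality $1/e>1/5.10$, which holds because $e<5.10$. Hence $1/e-o(1)\ge 1/5.10-o(1)$.

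Taking the two cases together covers every fixed $B\ge 2$, and the theorem follows. I expect no step to be a real obstacle. The only point that needs care is interpreting the $o(1)$ term: it should vanish as the number of items grows with $B$ held fixed, consistent with the asymptotic regime of both Lemma~\ref{lem:single-ref-r-k} and the cited $1$-$2$ result. The admissible $\alpha$ interval is also nonempty, since $e/(e-1)\approx1.582>1.400382$.
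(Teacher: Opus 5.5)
Your proposal is correct and follows the paper's proof exactly. You split on $B=2$, citing the $(1/e-o(1))$ guarantee of Abels et al.~\cite{abels2022knapsack} together with $1/e>1/5.10$, and on $B\ge 3$, invoking Theorem~\ref{thm:two-phase-B-ge-3}; your remarks on the $o(1)$ regime and the nonempty $\alpha$-interval are harmless additions.
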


\begin{proof}
If $B=2$, the guarantee follows from Abels et al.~\cite{abels2022knapsack}, whose algorithm is $(1/e-o(1))$-competitive. Since $1/e>1/5.10$, this implies the claimed bound. If $B \ge 3$, the guarantee follows from Theorem~\ref{thm:two-phase-B-ge-3}.
\end{proof}

\paragraph{Comparison with the general knapsack bound.}

The general random-order online knapsack algorithm of Albers, Khan, and Ladewig~\cite{albers2021improved} is $(1/6.65-o(1))$-competitive but works for more general instances of knapsack secretary. In comparison, Theorem~\ref{thm:combined-one-B} gives a better guarantee specifically for the $1$-$B$ knapsack secretary problem when $B \ge 2$.

\paragraph{Remark.} For large $B$, Abels et al.~\cite{abels2022knapsack} give a stronger ordinal guarantee of $(1/(e+1)-o(1))$. Our result is complementary: it gives a simple uniform guarantee for every fixed $B \ge 2$, with the case $B=2$ handled by their boosted algorithm and all cases $B \ge 3$ handled by the two-phase algorithm.
\section{Conclusion and Future Directions}

Our hard instance construction shows that the classical \(1/e\) benchmark is not the right barrier for knapsack secretary, even for cardinal algorithms. The construction is deliberately simple: it already suffices to consider \(1\)-\(B\) instances in which each item either consumes one unit of capacity or the entire knapsack. Within this restricted family, the hard distribution is supported only on the number of large items, and it exposes the central tension of the problem: the algorithm must wait long enough to compete for the best large item, while not waiting so long that it loses too much value in the case where the optimum consists of small items.

A first direction is to strengthen the quantitative impossibility result obtained from this framework. In our proof, the beta-binomial prior is chosen for analytic tractability rather than because it is extremal. Its conjugacy leads to a low-dimensional dynamic program and, in the scaling limit, to an explicitly solvable ODE, which is enough to certify a competitive ratio strictly below \(1/e\). Numerical evaluation of the finite dynamic program suggests that this analytic certificate is not tight. Already for moderate finite parameter values, empirically optimizing the distribution over the number of large items appears to give a slightly stronger impossibility result. Turning this numerical evidence into a clean analytic impossibility result remains an interesting open problem. Such a proof would likely require either a more precise characterization of the extremal prior or a different relaxation that retains enough structure to be analyzed in closed form.

A second direction is to enrich the small-item side of the hard instances. In the present construction, all small items have the same value, and the knapsack can accommodate all of them whenever the no-large-item instance occurs. This makes the small-item branch behave essentially as a deterministic fallback: before the first large item appears, the algorithm is balancing secretary-style exploration against a single deadline, and the optimal policy does not commit to small items before the \(1-1/e\) time scale. More refined hard instances could assign heterogeneous values to the small items and allow the knapsack to contain only a limited number of them. This would introduce a genuine online selection problem on the small-item branch, rather than a uniform residual-value tradeoff, and may further decrease the achievable competitive ratio.

Finally, it would be interesting to extend the technique beyond knapsack secretary. The proof combines three ingredients: a structured hard family, a distributional relaxation that strengthens the algorithm, and an ordinalization step showing that cardinal information is asymptotically irrelevant on the chosen instances. Similar conflicts between committing to one type of feasible solution and preserving the option of another arise in other combinatorial secretary problems, including edge-arrival matching and matroid secretary. Understanding whether analogous constructions can break the inherited \(1/e\) impossibility barrier in these settings is a natural next step.

\printbibliography
% \appendix

\end{document}